\def\calA{\mathcal{A}}
\def\calF{\mathcal{F}}
\def\calL{\mathcal{L}}
\def\calR{\mathcal{R}}
\def\calT{\mathcal{T}}
\def\bbR{\mathbb{R}}
\newcommand{\eat}[1]{}
\begin{document}

\title{Range Queries on Uncertain Data\thanks{A preliminary version of
this paper appeared in the Proceedings of the 25th International Symposium
on Algorithms and Computation (ISAAC 2014).
}
}

\author{Jian Li \inst{1}
\and
Haitao Wang
\inst{2}
}
\institute{
Institute for Interdisciplinary Information Sciences\\
Tsinghua University, Beijing 100084, China.\\
\email{lijian83@mail.tsinghua.edu.cn}\\
\and
Department of Computer Science\\
Utah State University, Logan, UT 84322, USA\\
\email{haitao.wang@usu.edu}
}

\date{}

\maketitle

\thispagestyle{empty}

\pagestyle{plain}
\pagenumbering{arabic}

\vspace{-0.2in}
\begin{abstract}
Given a set $P$ of $n$ uncertain points on
the real line, each represented by its one-dimensional probability density function, we consider the problem of building data structures on $P$ to answer range
queries of the following three types for any query interval $I$:
(1) top-$1$ query: find the point in $P$ that lies in $I$ with the highest
probability, (2) top-$k$ query: given any integer $k\leq n$ as part of the query,
return the $k$ points in $P$ that lie in $I$ with the highest probabilities, and (3) threshold query: given any threshold $\tau$ as part of the query, return all points
of $P$ that lie in $I$ with probabilities at least $\tau$.
We present data structures for these range queries with linear or nearly linear space and efficient query time.
\end{abstract}


\section{Introduction}
\label{sec:intro}

With a rapid increase in the number of application domains,
such as data integration, information
extraction, sensor networks, scientific measurements etc., where uncertain data are generated
in an unprecedented speed, managing, analyzing and query processing over such data has
become a major challenge and have received significant attentions. We study one important problem in this domain, building data structures for uncertain
data for efficiently answering certain range queries. The problem has been
studied extensively with a wide range of applications \cite{ref:AgarwalIn09,ref:ChengEf04,ref:KnightEf11,ref:LjosaAP07,ref:QiTh10,ref:TaoIn05,ref:TaoRa07}.
We formally define the problems below.

Let $\bbR$ be any real line (e.g., the $x$-axis).
In the (traditional) deterministic version of this problem,
we are given a set $P$ of $n$ deterministic points on $\bbR$, and the
goal is to build a data structure (also called ``index'' in database) such that
given a range, specified by an interval $I\subseteq \bbR$, one point (or all points)
in $I$ can be retrieved efficiently.
It is well known that a simple solution for this problem is a binary search tree over all points
which is of linear size and can support logarithmic (plus output size) query time.
However, in many applications, the location of each point may be uncertain
and the uncertainty is represented in the form of probability distributions
\cite{ref:AgarwalNe12,ref:AgrawalTr06,ref:ChengEf04,ref:TaoIn05,ref:TaoRa07}.
In particular, an {\em uncertain point} $p$ is specified by its
probability density function (pdf) $f_p: \bbR\rightarrow\bbR^+\cup\{0\}$.
Let $P$ be the set of $n$ uncertain points in $\bbR$ (with pdfs specified as input).
Our goal is to build data structures to quickly
answer range queries on $P$. In this paper, we consider the following
three types of range queries, each of which involves a query interval
$I=[x_l,x_r]$. For any point $p\in P$, we use $\Pr[p\in I]$ to denote the probability that $p$ is contained in $I$.

\begin{description}
\item[Top-1 query:]
Return the point $p$ of $P$ such that
$\Pr[p\in I]$ is the largest.
\item[Top-$k$ query:]
Given any integer $k$, $1\leq k\leq n$, as part of the query,
return the $k$ points $p$ of $P$ such that
$\Pr[p\in I]$ are the largest.
\item[Threshold query:]
Given a threshold $\tau$, as part of the query, return all points $p$ of $P$ such that $\Pr[p\in
I]\geq \tau$.
\end{description}

We assume $f_p$ is a step function, i.e., a {\em histogram} consisting of at most $c$
pieces (or intervals) for some integer $c\geq 1$ (e.g., see Fig.~\ref{fig:histogram}).
More specifically, $f_p(x)=y_i$ for $x_{i-1}\leq x<x_i$, $i=1,\ldots,c$, with
$x_0=-\infty$, $x_c=\infty$, and $y_1=y_c=0$.
Throughout the paper, we assume $c$ is a constant.
The cumulative distribution function (cdf) $F_p(x)=\int_{-\infty}^xf_p(t)dt$ is a
monotone piecewise-linear function consisting of $c$ pieces (e.g., see
Fig.~\ref{fig:cdf}). Note that $F_p(+\infty)=1$, and for any interval $I=[x_l,x_r]$ the
probability $\Pr[p\in I]$ is $F_p(x_r)-F_p(x_l)$.
From a geometric point of view, each interval of $f_p$ defines a rectangle with the $x$-axis, and the sum of the areas of all these rectangles of $f_p$ is exactly one. Further, the cdf value $F_p(x)$ is the sum of the areas of the subsets of these rectangles to the left of the vertical line through $x$ (e.g., see Fig.~\ref{fig:cdfvalue}), and the probability $\Pr[p\in I]$ is the sum of the areas of the subsets of these rectangles between the two vertical lines through $x_l$ and $x_r$, respectively (e.g., see Fig.~\ref{fig:probvalue}).

As discussed in \cite{ref:AgarwalIn09}, the histogram model
can be used to approximate most pdfs with arbitrary
precision in practice. In addition, the {\em discrete} pdf where each
uncertain point can appear in a few
locations, each with a certain probability, can be viewed as a special
case of the histogram model because we can use infinitesimal pieces around these locations.

\begin{figure}[t]
\begin{minipage}[t]{0.45\linewidth}
\begin{center}
\includegraphics[totalheight=1.0in]{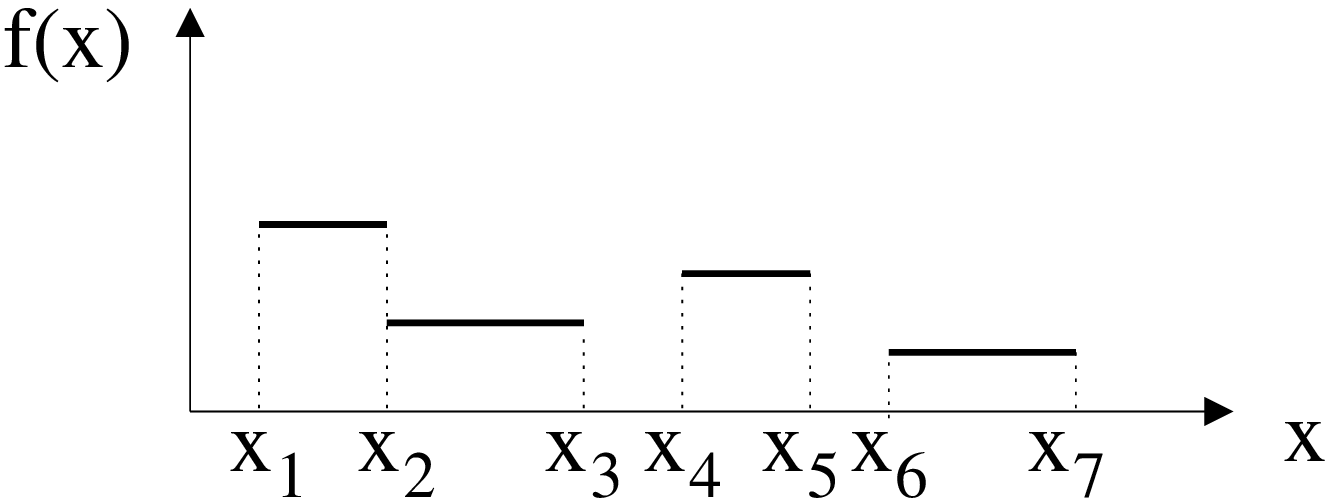}
\caption{\footnotesize The pdf of an uncertain point.}
\label{fig:histogram}
\end{center}
\end{minipage}
\hspace{0.02in}
\begin{minipage}[t]{0.54\linewidth}
\begin{center}
\includegraphics[totalheight=1.0in]{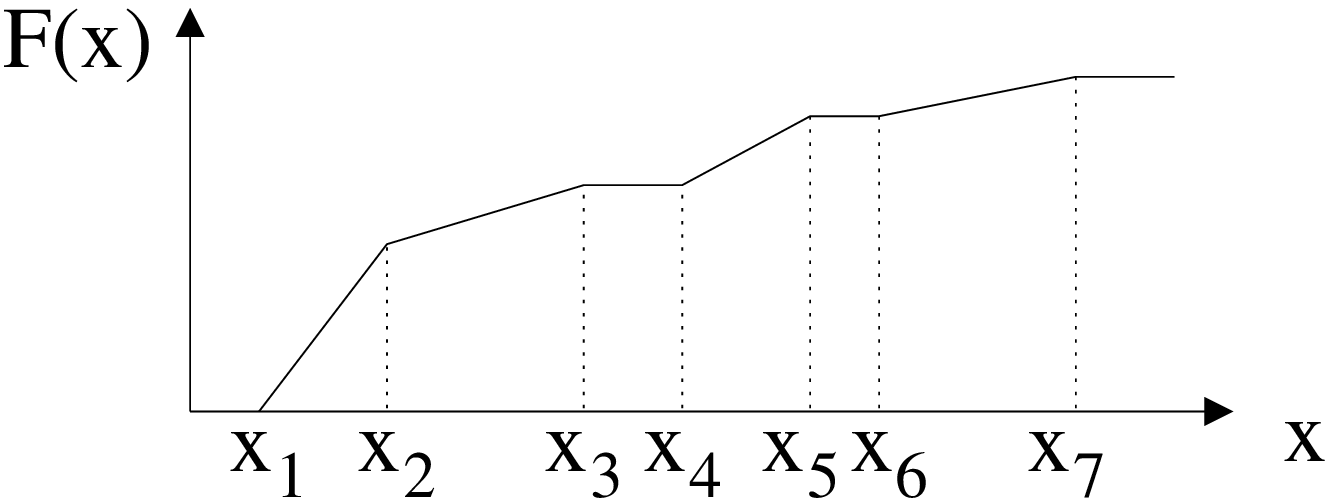}
\caption{\footnotesize The cdf of the uncertain point in
Fig.~\ref{fig:histogram}.}
\label{fig:cdf}
\end{center}
\end{minipage}
\vspace*{-0.20in}
\end{figure}

We also study an important special case where the pdf $f_p$ is a uniform distribution
function, i.e., $f$ is associated with an interval
$[x_l(p),x_r(p)]$ such that
$f_p(x)=1/(x_r(p)-x_l(p))$ if $x\in [x_l(p),x_r(p)]$ and $f_p(x)=0$
otherwise. Clearly, the cdf $F_p(x)=(x-x_{l}(p))/(x_{r}(p)-x_{l}(p))$ if
$x\in [x_{l}(p),x_{r}(p)]$, $F_p(x)=0$ if $x\in (-\infty,x_l(p))$, and
$F_p(x)=1$ if $x\in (x_r(p),+\infty)$.
Uniform distributions have been used as a major representation of uncertainty
in some previous work (e.g., \cite{icde08-probnn,ref:ChengEf04,li2010ranking}).
We refer to this special case the {\em uniform case} and
the more general case where $f_p$ is a histogram
distribution function as the {\em histogram} case.

Throughout the paper, we will always use $I=[x_l,x_r]$ to denote the query
interval.
The query interval $I$ is {\em unbounded} if
either $x_l=-\infty$ or $x_r=+\infty$ (otherwise, $I$ is {\em bounded}).
For the threshold query, we will always use $m$ to denote the output
size of the query, i.e., the number of
points $p$ of $P$ such that $\Pr[p\in I]\geq \tau$.

Range reporting on uncertain data has many applications \cite{ref:AgarwalIn09,ref:ChengEf04,ref:KnightEf11,ref:QiTh10,ref:TaoIn05,ref:TaoRa07},
As shown
in \cite{ref:AgarwalIn09}, our problems are also useful even in some
applications that involve only deterministic data. For example, consider the
movie rating system in IMDB where each reviewer gives a rating from 1
to 10. A top-$k$ query on $I=[7,+\infty)$
would find ``the $k$ movies such that the percentages of the
ratings they receive at least 7 are the largest''; a threshold query
on $I=[7,+\infty)$ and $\tau=0.85$ would
find ``all the movies such that at least 85\% of the ratings they
receive are larger than or equal to 7''.
Note that in the above examples the
interval $I$ is unbounded, and thus, it would also be interesting to have
data structures particularly for quickly answering
queries with unbounded query intervals.

\begin{figure}[t]
\begin{minipage}[t]{0.49\linewidth}
\begin{center}
\includegraphics[totalheight=1.2in]{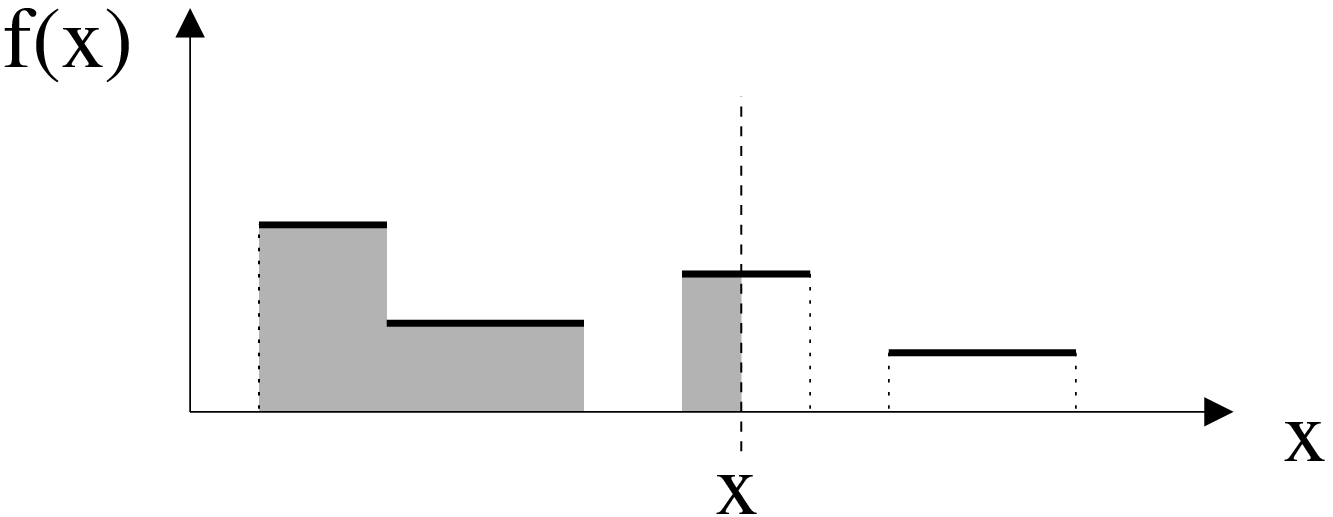}
\caption{\footnotesize Geometrically, $F_p(x)$ is equal to the sum of the areas of the shaded rectangles.}
\label{fig:cdfvalue}
\end{center}
\end{minipage}
\hspace{0.02in}
\begin{minipage}[t]{0.49\linewidth}
\begin{center}
\includegraphics[totalheight=1.2in]{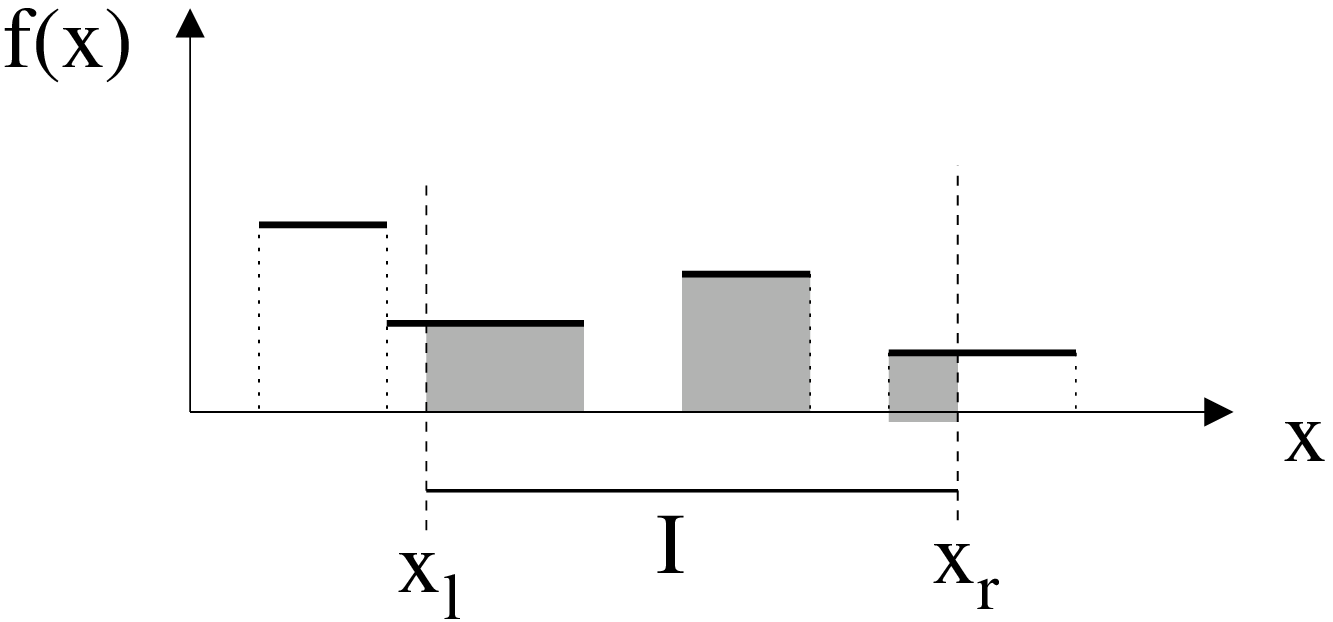}
\caption{\footnotesize Geometrically, the probability $\Pr[p\in I]$ is equal to the sum of the areas of the shaded rectangles.}
\label{fig:probvalue}
\end{center}
\end{minipage}
\vspace*{-0.20in}
\end{figure}

\subsection{Previous Work}
\label{sec:prework}

The threshold query was first introduced by Cheng {\em et al.}
\cite{ref:ChengEf04}.  Using R-trees, they \cite{ref:ChengEf04} gave
heuristic algorithms for the histogram case, without
 any theoretical performance guarantees. For the
uniform case, if $\tau$ is fixed for any query,
they proposed a data structure of $O(n\tau^{-1})$ size with
$O(\tau^{-1}\log n+m)$ query time  \cite{ref:ChengEf04}.
These bounds depend on $\tau^{-1}$, which can be arbitrarily large.

Agarwal {\em et al.} \cite{ref:AgarwalIn09} made a
significant theoretical step on solving the threshold queries for the
histogram case: If the threshold $\tau$ is fixed, their approach can build an $O(n)$ size
data structure in $O(n\log n)$ time, with $O(m+\log n)$
query time; if $\tau$ is not fixed, they built an
$O(n\log^2 n)$ size data structure in $O(n\log^3 n)$ expected time that can
answer each query in $O(m+\log^3 n)$ time.
Tao {\em et al.} \cite{ref:TaoIn05,ref:TaoRa07} considered the threshold queries
in two and higher dimensions. They provided heuristic results and
a query takes $O(n)$ time in the worst case.
Heuristic solutions were also given elsewhere, e.g.
\cite{ref:KnightEf11,ref:QiTh10,ref:SinghIn07}.
Recently, Abdullah {\em et al.} \cite{ref:AbdullahRa13} extended the notion of
{\em geometric coresets} to uncertain data
for range queries in order to obtain efficient approximate solutions.

Our work falls into the broad area of managing and analyzing uncertain data
which has attracted
significant attentions recently in database community.
This line of work has spanned
a range of issues from theoretical foundation of data models and data languages,
algorithmic problems for efficiently answering various queries, to
system implementation issues.
Probabilistic database systems have emerged as a major platform for this purpose and
several prototype systems have been built to address
different aspects/challenges in managing probabilistic data, e.g. MYSTIQ~\cite{dalvi:vldbj06}, Trio~\cite{ref:AgrawalTr06},
ORION~\cite{cheng:sigmod03}, MayBMS~\cite{koch2009maybms}, PrDB~\cite{sen:vldbj09}, MCDB~\cite{jampani2008mcdb}.
Besides of the range queries we mentioned above,
there has also been much work on efficiently answering different types of queries over probabilistic data, such as conjunctive queries (or the union of conjunctive queries)~\cite{dalvi:vldbj06,dalvi2012dichotomy},
aggregates~\cite{DBLP:conf/pods/JayramMMV07,conf/dbpl/re07}, top-$k$ and ranking~\cite{Cormode09,li2010ranking,li2011unified,re:icde07,soliman:icde07},
clustering~\cite{cormode2008approximation,guha2009exceeding},
nearest neighbors~\cite{ref:BeskalesEf08,icde08-probnn,potamias2010k}, and so on.
We refer interested readers to the recent book \cite{suciu2011probabilistic} for more information.

As discussed in \cite{ref:AgarwalIn09}, our uncertain model is an analogue of the
{\em attribute-level uncertainty model} in the probabilistic database literature.
Another popular model is the {\em tuple-level uncertainty
model} \cite{ref:AgrawalTr06,dalvi:vldbj06,ref:YiuEf09}, where a tuple has fixed attribute values but its existence
is uncertain. The range query under the latter model is much easier
since a $d$-dimensional range searching over uncertain data can be
transformed to a ($d+1$)-dimensional range searching problem over
certain data \cite{ref:AgarwalIn09,ref:YiuEf09}. In contrast, the
problem under the former model is more challenging, partly because it is
unclear how to transform it to an instance on certain data.

\subsection{Our Results}

Based on our above discussion, the problem has four variations: the {\em uniform unbounded} case where each pdf $f_p$ is a uniform distribution function and each query interval $I$ is unbounded, the {\em uniform bounded} case where each pdf $f_p$ is a uniform distribution function and each query interval $I$ is bounded, the {\em histogram unbounded} case where each pdf $f_p$ is a general histogram distribution function and each query interval $I$ is unbounded, and the {\em uniform bounded} case where each pdf $f_p$ is a general histogram distribution function and each query interval $I$ is bounded.
Refer to Table \ref{tab:results} for a summary of our results on the four cases.

Note that we also present solutions to the most general case (i.e., the histogram bounded case), which were originally left as open problems in the preliminary version of this paper \cite{ref:LiRa14}.

\begin{table}[t]
\begin{center}
{
\footnotesize
\begin{tabularx}{0.866\textwidth}{llllll}
\toprule
\multicolumn{3}{c}{Four Problem Variations} &Top-1 Queries\ \ \ \  & Top-$k$  Queries\ \ \ & Threshold  Queries \\
\addlinespace[1ex]
\midrule[0.01in]
\addlinespace[1ex]
\multirow{6}{*}{Unbounded\ } & \multirow{3}{*}{Uniform} &Preprocessing Time\ \ &$O(n\log n)$ & $O(n\log n)$& $O(n\log n)$\\
& & Space & $O(n)$ & $O(n)$  & $O(n)$ \\
& & Query Time & $O(\log n)$ \  & $O(\log n+k)$ \ \  & $O(\log n+m)$\\
\addlinespace[1ex]
\cline{2-6}
\addlinespace[1ex]
& \multirow{3}{*}{Histogram\ } & Preprocessing Time & $O(n\log n)$ & $O(n\log n)$ &$O(n\log n)$\\
& & Space & $O(n)$ &$O(n)$ &$O(n)$\\
& & Query Time & $O(\log n)$ & $T$ & $O(\log n+m)$ \\
\midrule[0.008in]
\multirow{6}{*}{Bounded} & \multirow{3}{*}{Uniform} & Preprocessing Time & $O(n\log n)$ & $O(n\log^2 n)$ & $O(n\log^2 n)$\\
& & Space & $O(n)$ & $O(n\log n)$ & $O(n\log n)$ \\
& & Query Time & $O(\log n)$ & $T$ & $O(\log n+m)$\\
\addlinespace[1ex]
\cline{2-6}
\addlinespace[1ex]
& \multirow{3}{*}{Histogram} & Preprocessing Time & $O(n\log^3 n)$ &  $O(n\log^3 n)^*$ & $O(n\log^3 n)^*$ \cite{ref:AgarwalIn09}\\
& & Space & $O(n\log^2 n)$ &  $O(n\log^2 n)$ &  $O(n\log^2 n)$ \cite{ref:AgarwalIn09}\\
& & Query Time & $O(\log^3 n)$ & $O(\log^3 n +k)$ \ \ \ \ \ \ &$O(\log^3 n+m)$ \cite{ref:AgarwalIn09}\\
\bottomrule
\end{tabularx}
\vspace*{0.1in}
\caption{\footnotesize Summary of our results (the result for threshold queries of the histogram bounded case is from \cite{ref:AgarwalIn09}): $T$ is $O(k)$ if $k=\Omega(\log n\log\log n)$ and $O(\log n+k\log k)$ otherwise. For threshold queries, $m$ is the output size of each query. All time complexities are deterministic except the preprocessing times for top-$k$ and threshold queries of the histogram bounded case (marked with *).
}
\label{tab:results}
}
\end{center}
\vspace*{-0.4in}
\end{table}

We say the {\em complexity} of a data structure is $O(A,B)$ if can be built in $O(A)$ time and its size is $O(B)$.

\begin{itemize}
\item
For the uniform unbounded case, the complexities of our data structures for the three types of queries are all $O(n\log n, n)$. The top-$1$ query time is
$O(\log n)$; the top-$k$ query time is $O(\log n+ k)$; the threshold query time is $O(\log n+m)$.
\item
For the histogram unbounded case, our results are the same as the above uniform unbounded case except that the time for each top-$k$ query is $O(k)$ if $k=\Omega(\log n\log\log n)$ and
$O(\log n+k\log k)$ otherwise (i.e., for large $k$, the algorithm has a better performance).
\item
For the uniform bounded case, the complexity of our top-1 data structure is $O(n\log n,
n)$, with query time $O(\log n)$. For the other two types of queries, the complexities of our data structures are both $O(n\log^2 n, n\log n)$; the top-$k$ query time is $O(k)$ if
$k=\Omega(\log n\log\log n)$ and $O(\log n+k\log k)$ otherwise,
and the threshold query time is $O(\log n+m)$.
\item
For the histogram bounded case, for threshold queries,
Agarwal {\em et al.} \cite{ref:AgarwalIn09} built a data structure of size
$O(n\log^2 n)$ in $O(n\log^3 n)$ expected time, with $O(\log^3 n + m)$ query time.
Note that our results on the threshold queries for the two uniform cases and the histogram unbounded case are clearly better than the above solution in \cite{ref:AgarwalIn09}.
For top-1 queries, we build a data structure of $O(n\log^2 n)$ size in $O(n\log^3 n)$ (deterministic) time, with $O(\log^3 n)$ query time. For top-$k$ queries, we build a data structure of $O(n\log^2 n)$ size in
$O(n\log^3 n)$ expected time, with $O(\log^3 n+k)$ query time.
\end{itemize}

Note that all above results are based on the assumption that $c$ is a constant;
otherwise these results still hold with replacing $n$ by $c\cdot n$
except that for the histogram bounded case the results hold with replacing $n$ by $c^2 n$.

%

The rest of the paper is organized as follows.
We first introduce the notations and some observations in Section \ref{sec:pre}. We present our results for the uniform case in
Section \ref{sec:uniform}. The histogram case is discussed in Section
\ref{sec:nonuniform}. We conclude the paper in Section \ref{sec:conclusions}.

\section{Preliminaries}
\label{sec:pre}

Recall that an {\em uncertain point} $p$ is specified by its
pdf $f_p: \bbR\rightarrow\bbR^+\cup\{0\}$ and
the corresponding cdf is $F_p(x)=\int_{-\infty}^xf_p(t)dt$ is a
monotone piecewise-linear function (with at most $c$ pieces).
For each uncertain point $p$, we call $\Pr[p\in I]$ the {\em
$I$-probability} of $p$.
Let $\calF$ be the set of the cdfs of all points of $P$. Since each
cdf is an increasing piecewise linear function,
depending on
the context, $\calF$ may also refer to the set of the $O(n)$ line segments
of all cdfs.
Recall that $I=[x_l,x_r]$ is the query
interval.
We start with an easy observation.

\begin{lemma}\label{lem:10}
If $x_l=-\infty$, then for any uncertain point $p$, $\Pr[p\in I]=F_p(x_r)$.
\end{lemma}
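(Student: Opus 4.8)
The plan is to derive the identity directly from the definition of $\Pr[p\in I]$ recorded in the preliminaries, namely $\Pr[p\in I]=F_p(x_r)-F_p(x_l)$, combined with the boundary behavior of the cdf at $-\infty$.

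First I would note that, by definition, $F_p(x)=\int_{-\infty}^x f_p(t)\,dt$, so $F_p(-\infty)=\int_{-\infty}^{-\infty}f_p(t)\,dt=0$; equivalently, this is forced by the model, since the leftmost piece of the histogram $f_p$ has value $y_1=0$ (and in the uniform case $f_p\equiv 0$ on $(-\infty,x_l(p))$), so $F_p$ is identically $0$ on an initial ray of $\bbR$ and in particular $F_p(-\infty)=0$. This is just the statement that $F_p$ is the cdf of a genuine probability distribution.

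Then I would substitute $x_l=-\infty$ into $\Pr[p\in I]=F_p(x_r)-F_p(x_l)$ to get $\Pr[p\in I]=F_p(x_r)-F_p(-\infty)=F_p(x_r)-0=F_p(x_r)$, which is the claim. Alternatively one can argue in a single line: $\Pr[p\in I]=\int_{x_l}^{x_r}f_p(t)\,dt=\int_{-\infty}^{x_r}f_p(t)\,dt=F_p(x_r)$.

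There is no real obstacle here; the statement is an immediate consequence of the definitions, and the only point worth making explicit is why $F_p(-\infty)=0$. The usefulness of the lemma is organizational: it lets the later treatment of left-unbounded query intervals work with the single quantity $F_p(x_r)$ instead of a difference of two cdf values.
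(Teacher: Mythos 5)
Your proof is correct and, especially in its ``single line'' form at the end, is essentially identical to the paper's own argument, which simply writes $\Pr[p\in I]=\int_{-\infty}^{x_r}f_p(t)\,dt=F_p(x_r)$. The extra remark about $F_p(-\infty)=0$ is a fine elaboration but not a different approach.
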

\begin{proof}
Due to $x_l=-\infty$, $\Pr[p\in I]=\int_{-\infty}^{x_r}f_p(t)dt$, which
is exactly $F_p(x_r)$.\qed
\end{proof}

Let $L$ be the vertical line with $x$-coordinate $x_r$. Since each cdf
$F_p$ is a monotonically increasing function, there is only one
intersection between $F_p$  and $L$. It is easy to
know that for each cdf $F_p$ of $\calF$, the $y$-coordinate of the
intersection of $F_p$ and $L$ is $F_p(x_r)$, which is the
$I$-probability of $p$ by Lemma \ref{lem:10}. For each point in any cdf of $\calF$, we call its $y$-coordinate the {\em height} of the point.

In the uniform case, each cdf $F_p$ has three segments: the leftmost one
is a horizontal segment with two endpoints $(-\infty,0)$ and $(x_l(p),0)$,
the middle one, whose slope is $1/(x_r(p)-x_l(p))$,
has two endpoints $(x_l(p),0)$ and $(x_r(p),1)$, and the rightmost one is a
horizontal segment with two endpoints $(x_r(p),1)$ and $(+\infty,1)$.
We transform each $F_p$ to the line $l_p$ containing the middle segment of
$F_p$. Consider an unbounded interval $I$ with $x_l=-\infty$.
We can use $l_p$ to compute $\Pr[p\in I]$
in the following way. Suppose the height of the intersection of $L$
and $l_p$ is $y$. Then, $\Pr[p\in I]=0$ if $y<0$,  $\Pr[p\in I]=y$ if
$0\leq y\leq 1$, $\Pr[p\in I]=1$ if $y>1$. Therefore, once we know
$l_p\cap L$, we can obtain $\Pr[p\in I]$
in constant time. Hence, we can use $l_p$ instead of $F_p$ to
determine the $I$-probability of $p$. The advantage of using $l_p$ is
that lines are usually easier to deal with than line segments.
Below, with a little abuse of
notation, for the uniform case
we simply use $F_p$ to denote the line $l_p$ for any $p\in
P$ and now $\calF$ is a set of lines.

Fix the query interval $I=[x_l,x_r]$.
For each $i$, $1\leq i\leq n$, denote by $p_i$ the point of $P$ whose
$I$-probability is the $i$-th largest.
Based on the above discussion, we obtain Lemma \ref{lem:20}, which holds for both
the histogram and uniform cases.

\begin{lemma}\label{lem:20}
If $x_l=-\infty$, then for each $1\leq i\leq n$, $p_i$ is
the point of $P$ such that
$L\cap F_{p_i}$ is the
$i$-th highest among the intersections of $L$ and all cdfs of $\calF$.\qed
\end{lemma}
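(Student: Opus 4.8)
The plan is to observe that, under the hypothesis $x_l=-\infty$, the $I$-probability of a point is, up to a monotone transformation, exactly the height at which its cdf crosses $L$; hence ranking the points of $P$ by $I$-probability is the same as ranking them by the heights of their intersections with $L$, which is precisely the assertion of the lemma.

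First I would handle the histogram case. By Lemma~\ref{lem:10}, $\Pr[p\in I]=F_p(x_r)$ for every $p\in P$, and as noted immediately after that lemma the intersection $L\cap F_p$ is the unique point of $F_p$ on the vertical line $x=x_r$, whose height is $F_p(x_r)$. Thus for each $p$ the height of $L\cap F_p$ equals $\Pr[p\in I]$. Consequently the two orderings of $P$ --- ``by decreasing $I$-probability'' and ``by decreasing height of intersection with $L$'' --- are the same (they are induced by the same values), so $p_i$, the point with the $i$-th largest $I$-probability, is exactly the point whose intersection with $L$ is the $i$-th highest.

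Next I would treat the uniform case, where by our convention $F_p$ now denotes the line $l_p$ carrying the middle segment of the true cdf. If $y$ is the height of $L\cap F_p$, then $\Pr[p\in I]$ equals $0$ if $y<0$, equals $y$ if $0\le y\le 1$, and equals $1$ if $y>1$; this value is a non-decreasing function of $y$. Hence ordering the points by the height of $L\cap F_p$ is consistent with ordering them by $I$-probability: any two points whose crossing heights lie strictly between $0$ and $1$ are compared identically under both orders, and points whose crossing heights all lie in $(-\infty,0)$ (resp.\ all lie in $(1,+\infty)$) have a common $I$-probability $0$ (resp.\ $1$), so any choice of the $i$-th highest among them is also a valid choice of the $i$-th largest $I$-probability. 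Fixing the crossing height as the tie-break rule, the two orderings coincide, and the claim follows as in the histogram case.

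There is no real obstacle here; the statement is essentially a repackaging of Lemma~\ref{lem:10} together with the ``height $=$ cdf value'' observation. The only point requiring a word of care is the uniform case, where replacing $F_p$ by the line $l_p$ introduces a clamping of the crossing height to $[0,1]$, and one must note that this clamping is monotone and hence order-preserving on the intersection heights.
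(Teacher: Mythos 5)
Your proposal is correct and follows the same route the paper takes: the paper gives no separate proof of Lemma~\ref{lem:20}, merely noting (via Lemma~\ref{lem:10} and the preceding discussion of intersection heights and of replacing $F_p$ by the line $l_p$ in the uniform case) that the height of $L\cap F_p$ determines $\Pr[p\in I]$ monotonically, so the two rankings agree. You add a small but legitimate refinement in the uniform case by pointing out that clamping to $[0,1]$ can create ties in $I$-probability, and that the lemma is then read as selecting a valid $i$-th largest point with the intersection height as tie-breaker.
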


Suppose $x_l=-\infty$. Based on Lemma \ref{lem:20}, to answer the top-1 query on $I$, it is
sufficient to find the cdf of $\calF$ whose intersection with $L$ is
the highest; to answer the top-$k$ query, it is
sufficient to find the $k$ cdfs of $\calF$ whose intersections with
$L$ are the highest; to answer the threshold query on $I$  and $\tau$,
it is sufficient to find the cdfs of $\calF$ whose intersections with
$L$ have $y$-coordinates $\geq \tau$.


\vspace{0.2cm}
\noindent
{\bf
Half-plane range reporting:}
As the half-plane range reporting data structure~\cite{ref:ChazelleTh85}
is important for our later developments,  we briefly discuss it in the dual setting.
Let $S$ be a set of $n$ lines. Given any point $q$, the goal is to report
all lines of $S$ that are above $q$. An $O(n)$-size data structure can be built in
$O(n\log n)$ time that can answer each
query in $O(\log n+m')$ time, where $m'$ is the number of lines above
the query point $q$ \cite{ref:ChazelleTh85}. The data structure can be
built as follows.

Let $U_S$ be the upper envelope of $S$
(e.g., see Fig.~\ref{fig:layers}).
We represent $U_S$ as an array of lines $l_1,l_2,\ldots,l_h$ ordered
as they appear on $U_S$ from left to right.
For each line $l_i$, $l_{i-1}$
is its {\em left neighbor} and $l_{i+1}$
is its {\em right neighbor}.
We partition $S$ into a sequence $L_1(S),L_2(S),\ldots$,
of subsets, called {\em layers} (e.g., see Fig.~\ref{fig:layers}).
The first layer $L_1(S)\subseteq S$
consists of the lines that appear on $U_S$. For $i>1$, $L_i(S)$
consists of the lines that appear on the upper envelope of the lines
in $S\setminus\bigcup_{j=1}^{i-1}L_j(S)$. Each layer $L_i(S)$ is
represented in the same way as $U_S$.
To answer a half-plane range reporting
query on a point $q$, let $l(q)$ be the vertical line
through $q$. We first determine the line $l_i$ of $L_1(S)$ whose
intersection with $l(q)$ is on the upper envelope of $L_1(S)$, by
doing binary search on the array of lines of $L_1(S)$.
Then, starting from $l_i$, we walk on the upper envelope of $L_1(S)$
in both directions to report the lines of $L_1(S)$ above the point
$q$, in linear time with respect to the output size. Next, we find the
line of $L_2(S)$ whose intersection with $l(q)$ is on the upper
envelope of $L_2(S)$. We use the same procedure as for $L_1(S)$
to report the lines of $L_2(S)$ above $q$.
Similarly, we continue on the layers $L_3(S),L_4(S),\ldots$, until no
line is reported in a certain layer. By using fractional cascading
\cite{ref:ChazelleFr86}, after determining the line $l_i$ of $L_1(S)$ in
$O(\log n)$ time by binary search, the data structure \cite{ref:ChazelleTh85} can report
all lines above $q$ in constant time each.

For any vertical line $l$,
for each layer $L_i(S)$, denote by $l_i(l)$ the line of
$L_i(S)$ whose intersection with $l$ is on the upper envelope of $L_i(S)$.
By fractional cascading \cite{ref:ChazelleFr86},
we have the following lemma for
the data structure \cite{ref:ChazelleTh85}.

\begin{lemma}\label{lem:30}{\em \cite{ref:ChazelleFr86,ref:ChazelleTh85}}
For any vertical line $l$, after the line $l_1(l)$ is known,
we can obtain the lines $l_2(l),l_3(l),\ldots$ in this order in $O(1)$ time each.\qed
\end{lemma}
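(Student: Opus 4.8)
The plan is to reduce the claim to a one-dimensional iterated search and then invoke fractional cascading \cite{ref:ChazelleFr86}. Fix a vertical line $l$ with $x$-coordinate $x$. For each layer $L_i(S)$, the upper envelope $U_{L_i(S)}$ is a piecewise-linear convex chain, and by definition $l_i(l)$ is exactly the line supporting the piece of $U_{L_i(S)}$ whose $x$-range contains $x$. Hence, if for each layer $L_i(S)$ we store the sorted sequence of $x$-coordinates of the breakpoints of $U_{L_i(S)}$ — call it the catalog $C_i$ — together with, for each interval between consecutive breakpoints, a pointer to the line of $L_i(S)$ realizing $U_{L_i(S)}$ there, then knowing the position of $x$ in $C_i$ yields $l_i(l)$ in $O(1)$ time. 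So it suffices to show that, once the position of $x$ in $C_1$ is known (which is implied by knowing $l_1(l)$), the position of $x$ in each subsequent $C_i$ can be found in $O(1)$ time each.

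First I would set up the catalogs $C_1,C_2,\ldots$ as above. Since the layers partition $S$, their total length is $O(\sum_i|L_i(S)|)=O(n)$, so all catalogs are already available within the $O(n\log n)$ preprocessing time used to compute the layers of \cite{ref:ChazelleTh85}. Next I would link the catalogs into a single path $C_1\to C_2\to\cdots\to C_h$ and apply the fractional-cascading construction of Chazelle and Guibas \cite{ref:ChazelleFr86} to this path: this augments the catalogs with bridge elements and gap pointers, in $O(n)$ additional time and $O(n)$ additional space, so that given the location of a search key in $C_i$, the location of the \emph{same} key in $C_{i+1}$ is obtained by following a constant number of pointers. Because here the search key is one common value $x$ threaded through the whole path, this is precisely the iterated-search setting for which fractional cascading gives $O(1)$ time per catalog after the first.

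The lemma then follows by induction on $i$: the hypothesis gives $l_1(l)$, hence the location of $x$ in $C_1$; and from the location of $x$ in $C_i$ we read off $l_i(l)$ in $O(1)$ time and, via the augmented structure, the location of $x$ in $C_{i+1}$ in $O(1)$ time, hence $l_{i+1}(l)$; we continue until the first layer that reports no line above $q$. The only step requiring care is checking that the per-layer query genuinely is a plain successor search for a single common key: this uses convexity of each $U_{L_i(S)}$ (so that "which piece contains $x$" is an ordinary predecessor query on the breakpoint list, not a more elaborate decision), plus the bookkeeping that catalog entries carry pointers to the actual lines $l_i(l)$. I expect this verification to be the entire substance of the argument; once it is in place, the $O(1)$-per-layer bound is immediate from \cite{ref:ChazelleFr86}.
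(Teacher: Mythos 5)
Your argument is correct and is essentially the standard reconstruction of the cited result: the paper itself gives no proof of Lemma~\ref{lem:30}, simply invoking \cite{ref:ChazelleFr86,ref:ChazelleTh85}, and your reduction of the per-layer lookup to an iterated predecessor search on breakpoint catalogs linked into a path, with fractional cascading supplying the $O(1)$-per-catalog transition, is exactly how those references obtain the bound. (Minor nit: the lemma itself has no stopping condition involving a point $q$; that is part of the downstream query algorithm, not the lemma.)
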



\begin{figure}[t]
\begin{minipage}[t]{\linewidth}
\begin{center}
\includegraphics[totalheight=1.2in]{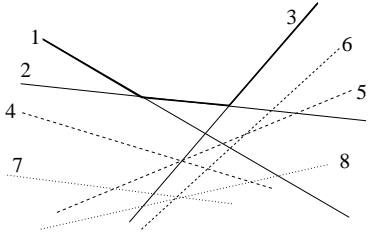}
\caption{\footnotesize Partitioning $S$ into three layers:
$L_1(S)=\{1,2,3\}$, $L_2(S)=\{4,5,6\}$, $L_3(S)=\{7,8\}$. The thick
polygonal chain is the upper envelope of $S$.}
\label{fig:layers}
\end{center}
\end{minipage}
\vspace*{-0.15in}
\end{figure}


\section{The Uniform Distribution}
\label{sec:uniform}



In this section, we present our results for the uniform case.
We first discuss our data structures for the unbounded case in Section \ref{sec:unbounded}, which will also be needed in our data structures for the bounded case in
Section \ref{sec:uniformbounded}. Further,
the results in Section \ref{sec:unbounded} will also be useful
in our data structures for the histogram case in
Section \ref{sec:nonuniform}.

Recall that in the uniform case $\calF$ is a set of lines.

\subsection{Queries with Unbounded Intervals}
\label{sec:unbounded}

We first discuss the unbounded case where $I=[x_l,x_r]$ is unbounded and some techniques introduced here will also be used later for the bounded case.
Without loss of generality, we assume $x_l=-\infty$, and the other case where $x_r=+\infty$ can be solved similarly. Recall that $L$ is the vertical line with $x$-coordinate $x_r$.

%
%

For top-1 queries, by Lemma \ref{lem:20}, we only need to maintain the upper envelope of $\calF$,  which can be computed in $O(n\log n)$ time and $O(n)$ space. For each query, it is sufficient to determine the
intersection of $L$ with the upper envelope of $\calF$, which can be done in $O(\log n)$ time.

Next, we consider top-$k$ queries.

Given $I$ and $k$,
by Lemma \ref{lem:20}, it
suffices to find the $k$ lines of $\calF$ whose
intersections with $L$ are the highest, and we let $\calF_k$ denote
the set of the above $k$ lines.
As preprocessing, we build the half-plane range reporting data structure (see Section
\ref{sec:pre}) on $\calF$, in $O(n\log n)$ time and $O(n)$ space.
Suppose the layers of $\calF$ are $L_1(\calF),L_2(\calF),\ldots$.
In the sequel, we compute the set $\calF_k$.
Let the lines in $\calF_k$ be $l^1,l^2,\ldots,l^k$ ordered from top
to bottom by their intersections with $L$.

Let $l_i(L)$ be the line of
$L_i(\calF)$ which intersects $L$ on the upper
envelope of the layer $L_i(\calF)$, for $i=1,2,\ldots$.
We first compute $l_1(L)$ in $O(\log n)$ time by
binary search on the upper envelope of $L_1(\calF)$. Clearly, $l^1$ is $l_1(L)$.
Next, we determine $l^2$. Let the set $H$ consist of the
following three lines: $l_2(L)$, the left neighbor (if any)
of $l_1(L)$ in
$L_1(\calF)$, and the right neighbor (if any) of $l_1(L)$ in $L_1(\calF)$.

\begin{lemma}\label{lem:40}
$l^2$ is the line in $H$ whose intersection with $L$ is the highest.
\end{lemma}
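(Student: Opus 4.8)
The plan is to argue that $l^2$ — the line of $\calF$ whose intersection with $L$ is the second highest — must belong to the three-element candidate set $H=\{l_2(L),\,\text{left neighbor of }l_1(L)\text{ in }L_1(\calF),\,\text{right neighbor of }l_1(L)\text{ in }L_1(\calF)\}$, after which the lemma follows: among all lines of $\calF$ other than $l^1=l_1(L)$, the one intersecting $L$ highest is $l^2$, so if $l^2\in H$ then $l^2$ is exactly the highest-intersecting line in $H$ (note every line in $H$ is distinct from $l_1(L)$). So the real content is the claim $l^2\in H$.

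To prove $l^2 \in H$, I would do a case analysis on which layer $l^2$ lies in. If $l^2 \in L_i(\calF)$ for some $i \ge 2$: I claim $l^2 = l_2(L)$. Indeed $l^2$'s intersection with $L$ lies below $l^1 = l_1(L)$, and $l_1(L)$ is the unique line of $L_1(\calF)$ on the upper envelope at $x$-coordinate $x_r$; hence after removing $L_1(\calF)$, among the remaining layers the highest intersection with $L$ is achieved on the upper envelope of $L_2(\calF)$, i.e. by $l_2(L)$. Since $l^2$ has the highest intersection with $L$ among all lines except $l^1$, and $l^2$ avoids layer $1$, we get $l^2 = l_2(L) \in H$. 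The other case is $l^2 \in L_1(\calF)$: here I would use the structure of the upper envelope $U_{\calF}$ of $L_1(\calF)$ as a sequence of lines ordered left to right. The line $l_1(L)$ occupies the interval of the envelope containing $x_r$; its neighbors on that envelope are precisely its left and right neighbors in the array. The key geometric fact is that among the lines of $L_1(\calF)$, the one whose value at $x_r$ is second-highest is one of the two lines adjacent to $l_1(L)$ on $U_{\calF}$ — because $U_{\calF}$ is the pointwise max of the lines, the "second-place" line at any vertical slab is the line that would take over the envelope if the current winner were deleted, and by convexity/the ordering of lines along the envelope that is one of the two immediate neighbors. This hands us $l^2 \in H$ in this case too.

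The main obstacle is making that last geometric fact fully rigorous: showing that for a set of lines, at any fixed abscissa the second-largest value is attained by a neighbor (on the upper envelope) of the line attaining the largest value. I would prove it by a short exchange/convexity argument: sort the lines of $L_1(\calF)$ by slope; the upper envelope visits them in slope order; if the second-highest line at $x_r$ were some $l_j$ not adjacent to $l_1(L)$ on $U_{\calF}$, then a line $l_t$ strictly between them in the envelope order would satisfy (by the envelope structure, since $l_t$ appears on $U_{\calF}$ strictly between the pieces of $l_j$ and of $l_1(L)$) that at $x_r$ its value lies between that of $l_1(L)$ and $l_j$ — contradicting that $l_j$ is second. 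Once this is established the two cases combine to give $l^2 \in H$, and the lemma statement — that $l^2$ is the highest-intersecting line of $H$ — is immediate since $l^2$ beats every other line of $\calF\setminus\{l^1\} \supseteq H$.
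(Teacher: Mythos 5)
Your proposal is correct and follows the same two-case decomposition as the paper's own proof (either $l^2\in L_1(\calF)$, forcing it to be a neighbor of $l_1(L)$ on the upper envelope, or $l^2\notin L_1(\calF)$, forcing $l^2=l_2(L)$). The paper states the first case more tersely by appealing to the increasing-slope ordering along the envelope; your exchange argument fills in the same geometric fact in a bit more detail.
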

\begin{proof}
Note that $l^2$ is the line of $\calF\setminus\{l^1\}$
whose intersection with $L$ is the highest.
We distinguish two cases:

\begin{enumerate}
\item
If $l^2$ is in $L_1(\calF)$,
since the slopes of the
lines of $L_1(\calF)$ from left to right are increasing,
$l^2$ must be a neighbor of $l^1$.
Hence, $l^2$ must be either the left neighbor or the right neighbor of
$l^1$ in $L_1(\calF)$.

\item
If $l^2$ is not in $L_1(\calF)$, then
$l_2(L)$ must be the line of $\calF\setminus L_1(\calF)$ whose
intersection with $L$ is the highest. According to the definition of
the layers of $\calF$ the upper envelope of $L_2(\calF)$ is also the
upper envelope of $\calF\setminus L_1(\calF)$. Therefore,
$l_2(L)$ is the line of $\calF\setminus L_1(\calF)$ whose
intersection with $L$ is the highest. Hence, $l^2$ must
be $l_2(\calF)$.
\end{enumerate}

The lemma thus follows.\qed
\end{proof}

We refer to $H$ as the {\em candidate set}. By Lemma \ref{lem:40}, we find $l^2$ in $H$ in $O(1)$ time. We remove $l^2$ from $H$, and below we insert at most three lines into $H$ such that $l^3$ must be in $H$. Specifically, if $l^2$
is $l_2(L)$, we insert the following three lines into $H$:
$l_3(L)$, the left neighbor of $l_2(L)$, and the right neighbor of
$l_2(L)$. If $l^2$ is the left (resp., right) neighbor $l$ of $l_1(L)$, we insert
the left (resp., right) neighbor of $l$ in $L_1(\calF)$ into $H$.
By generalizing Lemma \ref{lem:40}, we can show $l^3$
must be in $H$ (the details are omitted). We repeat the same
algorithm until we find $l^k$. To facilitate the implementation,
we use a heap to store the lines of $H$ whose ``keys'' in the
heap are the heights of the intersections of $L$ and the lines of
$H$.

\begin{lemma}\label{lem:new50}
The set $\calF_k$ can be found in $O(\log n+k\log k)$ time.
\end{lemma}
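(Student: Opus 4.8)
The plan is to prove correctness of the incremental ``expand the candidate set'' procedure by induction on the number of lines already produced, and then to bound the running time by charging each step to $O(1)$ heap operations plus $O(1)$ array/envelope lookups.

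I would first record two structural facts. Fix a layer $L_i(\calF)$ and order its lines $\lambda_{i,1},\ldots,\lambda_{i,h_i}$ as they appear on the upper envelope of $L_i(\calF)$ from left to right, which is also the order of increasing slope; let $\lambda_{i,i^*}=l_i(L)$ be the one on that envelope at $x=x_r$. Fact (a): the heights $\lambda_{i,j}(x_r)$ form a unimodal sequence whose peak is at index $i^*$. This holds because consecutive envelope lines $\lambda_{i,j},\lambda_{i,j+1}$ cross at the envelope vertices, whose $x$-coordinates increase with $j$, and to the right of such a crossing the larger-slope line $\lambda_{i,j+1}$ is on top; since $x_r$ lies between the crossing of $(\lambda_{i,i^*-1},\lambda_{i,i^*})$ and that of $(\lambda_{i,i^*},\lambda_{i,i^*+1})$, the heights increase up to $i^*$ and decrease afterwards. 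Fact (b): by the definition of the layers, the upper envelope of $L_{i+1}(\calF)$ equals that of $\calF\setminus\bigcup_{j\le i}L_j(\calF)$, so $l_{i+1}(L)$ is the line of $\calF\setminus\bigcup_{j\le i}L_j(\calF)$ whose intersection with $L$ is highest; this is exactly the argument already used in the proof of Lemma~\ref{lem:40}.

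Next I would state and maintain the following invariant after $l^1,\ldots,l^j$ have been output: (i) in every layer the already-output lines form a contiguous block containing $l_i(L)$ (possibly empty); (ii) in every layer whose block is nonempty, the at most two lines immediately flanking the block lie in $H$, and so does $l_{i+1}(L)$ unless it has itself been output; and (iii) a layer's block can become nonempty only after the previous layer's block is nonempty. The update rules are precisely what preserves (i)--(iii). Given the invariant, $l^{j+1}$, the highest not-yet-output line at $L$, must lie in $H$: if $l^{j+1}$ lies in a layer $L_t$ whose block is still empty, then by Fact (a) the highest line of $L_t$ at $L$ is $l_t(L)$, so $l^{j+1}=l_t(L)$, and by Fact (b) together with (iii) this line is in $H$; otherwise, by Fact (a) the highest not-yet-output line of its own layer is one of the two flanking lines of that layer's block, which are in $H$ by (ii). Since $H$ is a heap keyed by intersection height with $L$, repeatedly extracting its maximum yields $l^2,l^3,\ldots$ in order, so the procedure outputs exactly $\calF_k=\{l^1,\ldots,l^k\}$. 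For the time bound: computing $l_1(L)$ by binary search on the upper envelope of $L_1(\calF)$ takes $O(\log n)$; each of the remaining $k-1$ steps does one extract-max and at most three insertions, so $|H|=O(k)$ throughout and all heap operations together cost $O(k\log k)$; the non-heap work per step is $O(1)$, since a flanking or neighboring line within a layer is obtained by shifting an array index, and a new layer representative $l_{i+1}(L)$ is requested only when $l_i(L)$ is extracted, so the representatives are requested in the order $l_2(L),l_3(L),\ldots$ and, once $l_1(L)$ is known, Lemma~\ref{lem:30} supplies each one in $O(1)$ time. Summing gives $O(\log n+k\log k)$.

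The delicate point is the correctness half, specifically the claim that the highest not-yet-output line is always either the apex of the next untouched layer or a current flanking line; this is where Facts (a) and (b) must be combined, and it is the natural generalization of Lemma~\ref{lem:40} that the text refers to but omits. The only subtlety on the timing side is confirming that layer representatives are requested with strictly increasing indices, so that Lemma~\ref{lem:30} applies with $O(1)$ cost each; this follows immediately because $l_{i+1}(L)$ enters the computation precisely at the moment $l_i(L)$ is extracted from $H$.
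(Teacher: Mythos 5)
Your proof is correct and takes essentially the same route as the paper's: the $O(\log n)$ binary search for $l_1(L)$, Lemma~\ref{lem:30}/fractional cascading for producing each subsequent $l_i(L)$ in $O(1)$ time, and the $O(k\log k)$ bound from $O(k)$ heap operations on a heap of size $O(k)$. You go further than the paper's own (very terse) proof in that you make the correctness invariant and the unimodality of per-layer heights at $x_r$ explicit — details the paper compresses into ``by generalizing Lemma~\ref{lem:40} (the details are omitted)'' — but the underlying algorithm and time analysis are the same.
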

\begin{proof}
According to our algorithm, there are
$O(k)$ insertions and ``Extract-Max'' operations (i.e., finding
the element of $H$ with the largest key and remove the element from
$H$) on the heap $H$. The size of $H$ is always bounded by
$O(k)$ during the algorithm. Hence all operations on $H$ take $O(k\log k)$ time.
Further, after finding
$l_1(L)$ in $O(\log n)$ time, due to Lemma \ref{lem:30}, the
lines that are inserted into $H$ can be found in constant time
each. Hence, the total time for finding $\calF_k$ is
$O(k\log k+\log n)$. \qed
\end{proof}

We can improve the algorithm to $O(\log n+k)$ time by using the selection algorithm in \cite{ref:FredericksonTh82} for sorted arrays. The key idea is that we can implicitly obtain $2k$ sorted arrays of $O(k)$ size each and $\calF_k$ can be computed by finding the largest $k$ elements in these arrays.
The details are given in Lemma \ref{lem:new60}.

\begin{lemma}\label{lem:new60}
The set $\calF_k$ can be found in $O(\log n+k)$ time.
\end{lemma}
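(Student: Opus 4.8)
The plan is to recover $\calF_k$ as the set of the $k$ largest elements among a collection of $2k$ sorted arrays, each of length $O(k)$, and then invoke the selection/reporting algorithm of Frederickson and Johnson~\cite{ref:FredericksonTh82}, which finds the $k$-th largest element among $p$ sorted arrays of total relevant length $O(k)$ in $O(p + k)$ time (and then the $k$ largest can be reported in the same bound). Here ``element'' always means the height of the intersection of a line of $\calF$ with $L$, so comparisons between elements take $O(1)$ time once the relevant lines are available.

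First I would describe the $2k$ arrays. Compute $l_1(L)$ in $O(\log n)$ time by binary search on $U_{L_1(\calF)}$. For each layer index $j=1,2,\ldots,k$, the line $l_j(L)$ can be obtained in $O(1)$ time each after $l_1(L)$ is known, by Lemma~\ref{lem:30}; moreover, within a fixed layer $L_j(\calF)$ the slopes of the lines along the upper envelope are increasing, so as we walk left from $l_j(L)$ the heights at $L$ strictly decrease, and likewise walking right. Hence each layer $L_j(\calF)$ contributes two sorted arrays: the ``left array'' $A_j^-$ consisting of $l_j(L)$, its left neighbor, that line's left neighbor, and so on (heights in decreasing order), and the ``right array'' $A_j^+$ defined symmetrically. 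Truncating each array to its first $k$ entries is harmless, since $\calF_k$ has only $k$ lines. Restricting to $j\le k$ is also harmless: any line in $\calF_k$ lies in some layer $L_j(\calF)$ with $j\le k$, because layers $1,\dots,j-1$ each contain at least one line whose height at $L$ exceeds that of every line in $L_j(\calF)$ (the upper-envelope line of that layer at $L$), so a line in layer $j$ cannot be among the top $k$ unless $j\le k$. Thus every line of $\calF_k$ appears in one of the $2k$ truncated arrays $\{A_j^-,A_j^+ : 1\le j\le k\}$, and conversely every array entry is a genuine line of $\calF$ (with each $l_j(L)$ counted in both $A_j^-$ and $A_j^+$, which only inflates counts by a constant factor and does not affect correctness of a top-$k$ selection once duplicates are accounted for).

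The arrays are represented \emph{implicitly}: $A_j^-[t]$ is retrieved by starting at $l_j(L)$ and following $t-1$ left-neighbor pointers in $L_j(\calF)$, but since the selection algorithm of~\cite{ref:FredericksonTh82} accesses only $O(k)$ array entries in total and in an amortizable pattern, I would instead materialize the needed prefixes explicitly: after getting $l_1(L),\dots,l_k(L)$ in $O(k+\log n)$ time, expand each $A_j^\pm$ to length $\min(k,\cdot)$ by walking the neighbor pointers, spending $O(k)$ time per layer in the worst case — but this is $O(k^2)$, so that is too crude. The right fix, and the step I expect to be the main obstacle, is to keep the arrays implicit and bound the number of entry-accesses: Frederickson--Johnson's algorithm on $p$ sorted arrays performs $O(p + k)$ comparisons and accesses $O(p+k)$ entries, and each access to $A_j^\pm[t]$ can be served in $O(1)$ amortized time if we maintain, for each layer, a pointer to the farthest neighbor reached so far and extend it on demand (each extension is one pointer hop, charged to one array-access). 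Since total accesses are $O(2k + k) = O(k)$, the total neighbor-walking is $O(k)$. Combining: $O(\log n)$ to find $l_1(L)$, $O(k)$ to list $l_2(L),\dots,l_k(L)$ via Lemma~\ref{lem:30}, and $O(2k + k)=O(k)$ for the selection-and-report step, gives $O(\log n + k)$ overall. Finally, from the selected top-$k$ heights one reads off the corresponding lines, which are exactly $\calF_k$; this completes the argument. \qed
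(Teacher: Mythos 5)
Your overall strategy matches the paper's: view $\calF_k$ as the $k$ largest among $2k$ sorted sequences (the left and right halves of each of the first $k$ layers, truncated at length $k$), retrieve $l_1(L),\dots,l_k(L)$ via one $O(\log n)$ binary search plus Lemma~\ref{lem:30}, and then apply the Frederickson--Johnson selection algorithm for matrices with sorted columns to extract the top $k$ in $O(k)$ additional time. Where you diverge is in how a single array entry $A_j^\pm[t]$ is accessed, and here you introduce an unnecessary and not-quite-rigorous detour. You worry that entries are reachable only through neighbor pointers and propose an amortization in which each pointer hop is ``charged to one array-access''; but Frederickson--Johnson does not access positions in index order, so a single access to, say, $A_j^+[t]$ from a pointer previously at position $1$ costs $t-1$ hops while being only one array access, and the charging scheme breaks down. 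In fact no amortization is needed: in the half-plane range reporting structure each layer $L_j(\calF)$ is stored as an \emph{array} of lines ordered along its upper envelope, so once the index of $l_j(L)$ within that array is known (which Lemma~\ref{lem:30} provides), $A_j^+[t]$ and $A_j^-[t]$ are obtained by plain $O(1)$ worst-case array indexing. That is exactly how the paper justifies the $O(1)$-per-access cost, and with that observation your argument closes cleanly and gives $O(\log n + k)$ as claimed.
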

\begin{proof}
Consider any layer $L_i(\calF)$.
Suppose the array of lines of $L_i(\calF)$ is $l_1,l_2,\ldots,l_h$ and let $l_j$ be the line $l_i(L)$. The intersections of the lines $l_j,l_{j+1},\ldots,\l_h$ with $L$ are sorted in decreasing order of their heights, and the intersections of the lines $l_{j-1},l_{j-2},\ldots,\l_1$ with $L$ are also sorted in decreasing order of their heights. Once $l_j$ is known, we can implicitly obtain the following two arrays $A^r_{i}$ and $A^l_{i}$: the $t$-th element of $A^r_{i}$ (resp., $A^l_{i}$) is the height of the intersection of $l_{t-j+1}$ (resp., $l_{j-t}$) and $L$. Since these lines are explicitly maintained in the layer $L_i(\calF)$, given any index $t$, we can obtain the $t$-th element of $A^r_{i}$ (resp., $A^l_{i}$) in $O(1)$ time.

To compute the set $\calF_k$, we first find the lines $l_i(\calF)$ for $i=1,2,\ldots,k$, which can be done in $O(\log n)$ time due to Lemma \ref{lem:30}.  Consequently, we obtain the $2k$ arrays $A_i^r$ and $A_i^l$ for $1\leq i\leq k$, implicitly. In fact we only need to consider the first $k$ elements of each such array, and below we let $A_i^r$ and $A_i^l$ denote the arrays only consisting of the first $k$ elements. An easy observation is that the heights of the intersections of $L$ and the lines of $\calF_k$ are exactly the largest $k$ elements of $\calA=\bigcup_{i=1}^k\{A_i^r\cup A_i^l\}$.

In light of the above discussion, to compute $\calF_k$, we do the following: (1) find the $k$-th largest element $\tau$ of $\calA$; (2) find the lines of $\calA$ whose intersections with $L$ have heights at least $\tau$, which can be done in $O(k)$ time by checking the above $2k$ sorted arrays with $\tau$ in their index orders. Below, we show that we can compute $\tau$ in $O(k)$ time.

Recall that $\calA$ contains $2k$ sorted arrays and each array has $k$ elements. Further, for any array, given any index $t$, we can obtain its $t$-th element in constant time.
Hence, we can find the $k$-th largest element of $\calA$ in $O(k)$ time by using the selection algorithm given in \cite{ref:FredericksonTh82} for matrices with sorted columns (each sorted array in our problem can be viewed as a sorted column of a $k\times 2k$ matrix).

The lemma thus follows.\qed
\end{proof}

Hence, we obtain the following result.

\begin{theorem}\label{theo:uniformunboundedtopk}
For the uniform case,
we can build in $O(n\log n)$ time an $O(n)$ size
data structure on $P$ that can answer each top-$k$ query with an unbounded query
interval in $O(k+\log n)$ time.
\end{theorem}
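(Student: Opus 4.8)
The plan is to combine the three ingredients already in place: the reduction of a top-$k$ query with $x_l=-\infty$ to finding the $k$ lines of $\calF$ whose intersections with the vertical line $L$ (at $x=x_r$) are highest (Lemma~\ref{lem:20}); the half-plane range reporting structure of~\cite{ref:ChazelleTh85}, which on a set of $n$ lines is built in $O(n\log n)$ time and $O(n)$ space and whose layers admit fractional-cascading pointers (Lemma~\ref{lem:30}); and the procedure of Lemma~\ref{lem:new60}, which extracts the set $\calF_k$ in $O(\log n + k)$ time once this structure is available.

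Concretely, the preprocessing step is simply to build the half-plane range reporting data structure on the set $\calF$ of $n$ lines (recall that in the uniform case each cdf has already been replaced by the line through its middle segment), storing with each line a pointer to the point of $P$ from which it came; this costs $O(n\log n)$ time and $O(n)$ space. To answer a query with $I=[x_l,x_r]$ and $x_l=-\infty$: let $L$ be the vertical line at $x=x_r$; apply Lemma~\ref{lem:new60} to obtain $\calF_k$ in $O(\log n+k)$ time; and output the $k$ points of $P$ associated with the lines of $\calF_k$, which by Lemma~\ref{lem:20} are exactly the $k$ points with largest $I$-probability. The symmetric case $x_r=+\infty$ is handled by a second, mirror-image copy of the structure (reflect the real line about the origin), so the total preprocessing and space are still $O(n\log n)$ and $O(n)$, respectively.

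Since every step is a direct appeal to an already-established lemma, I do not anticipate any real obstacle; the one place to be slightly careful is confirming that the $\log n$ cost in the query is additive and not multiplied by the number of layers examined during the extraction of $\calF_k$, but this is exactly what the fractional cascading of Lemma~\ref{lem:30} provides, and it is already absorbed into the statement of Lemma~\ref{lem:new60}.
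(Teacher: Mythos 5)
Your proposal matches the paper's own argument: it builds the half-plane range reporting structure on $\calF$, reduces the query to finding the $k$ highest intersections with $L$ via Lemma~\ref{lem:20}, and invokes Lemma~\ref{lem:new60} for the $O(\log n + k)$ query bound. The handling of the symmetric case $x_r=+\infty$ by a mirrored copy is a detail the paper leaves implicit, but otherwise this is the same proof.
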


For the threshold query, we are given $I$ and a threshold $\tau$.
We again build the half-plane range reporting data structure on $\calF$. To answer the query, as discussed in Section \ref{sec:pre}, we only need to
find all lines of $\calF$ whose intersections with $L$ have
$y$-coordinates larger than or equal to $\tau$.
We first determine the line $l_1(L)$ by doing binary search on the
upper envelope of $L_1(\calF)$. Then, by Lemma \ref{lem:30}, we
find all lines $l_2(L),l_3(L),\ldots,l_j(L)$ whose intersections have
$y$-coordinates larger than or equal to $\tau$.
For each $i$ with $1\leq i\leq j$, we walk on the upper envelope
of $L_i(\calF)$, starting from $l_i(L)$, on both directions in
time linear to the output size to find the lines whose intersections
have $y$-coordinates larger than or equal to $\tau$.
Hence, the running time for
answering the query is $O(\log n+m)$.


\subsection{Queries with Bounded Intervals}
\label{sec:uniformbounded}

Now we assume $I=[x_l,x_r]$ is bounded.
Consider any point $p\in P$. Recall that $p$ is associated with an interval
$[x_{l}(p),x_{r}(p)]$ in the uniform case. Depending on the positions of $I=[x_l,x_r]$ and $[x_{l}(p),x_{r}(p)]$, we classify $[x_l(p),x_r(p)]$ and the point
$p$ into the following three types with respect to $I$.


\begin{description}
\item[L-type:] $[x_l(p),x_r(p)]$ and $p$ are L-type if $x_l\leq
x_l(p)$.
\item[R-type:] $[x_l(p),x_r(p)]$ and $p$ are R-type if $x_r\geq
x_r(p)$.
\item[M-type:] $[x_l(p),x_r(p)]$ and $p$ are M-type if $I\subset
(x_l(p),x_r(p))$.
\end{description}


Denote by $P_L$, $P_R$, and $P_M$ the sets of all $L$-type, $R$-type,
and $M$-type of points of $P$, respectively.
In the following, for each kind of query, we will build an data structure such
that the different types of points will be searched separately (note
that we will not explicitly compute the three subsets $P_L$, $P_R$,
and $P_M$).
For each point $p\in P$, we refer to $x_l(p)$ as the {\em left
endpoint} of the interval $[x_l(p),x_r(p)]$ and refer to $x_r(p)$
as the {\em right endpoint}. For simplicity of discussion, we assume that no two interval endpoints of the points of $P$ have the same value.

\subsubsection{Top-1 Queries}

For any point $p\in P$, denote by $\calF_r(p)$ the set of the cdfs
of the points of $P$ whose intervals have
left endpoints larger than or equal to $x_l(p)$. Again, as
discussed in Section \ref{sec:pre} we
transform each cdf of $\calF_r(p)$ to a line. We aim to maintain
the upper envelope of $\calF_r(p)$ for each $p\in P$.
If we computed the $n$ upper envelopes
explicitly, we would have an data structure of size $\Omega(n^2)$. To
reduce the space, we choose to use the persistent data
structure \cite{ref:DriscollMa89} to maintain
them implicitly such that data structure size is $O(n)$. The details
are given below.

We sort the points of $P$ by the left endpoints of
their intervals from left to right, and let the sorted list be
$p'_1,p'_2,\ldots,p'_n$. For each $i$ with $2\leq i\leq n$, observe that
the set $\calF_r(p'_{i-1})$ has exactly one more line
than $\calF_r(p'_i)$. If we maintain the upper
envelope of $\calF_r(p'_i)$ by a balanced binary search tree
(e.g., a red-black tree), then by updating it we
can obtain the upper envelope of $\calF_r(p'_{i-1})$ by an insertion
and a number of deletions on the tree, and each tree operation takes
$O(\log n)$ time. An easy observation is that there are $O(n)$
tree operations in total to compute the upper envelopes of all sets
$\calF_r(p'_1),\calF_r(p'_2),\ldots,\calF_r(p'_n)$.
Further, by making the red-black tree persistent \cite{ref:DriscollMa89}, we
can maintain all upper envelopes in $O(n\log n)$ time
and $O(n)$ space.  We use $\calL$ to denote the above data structure.

We can use $\calL$ to find the point of $P_L$ with the largest
$I$-probability in $O(\log n)$ time, as follows.
First, we find the
point $p'_i$ such that $x_l(p'_{i-1})<x_l\leq x_l(p'_i)$.
It is easy to see that $\calF_r(p'_i)=P_L$.
Consider the unbounded interval $I'=(-\infty,x_r]$.
Consider any point $p$ whose cdf is in $\calF_r(p'_i)$. Due to
$x_l(p)\geq x_l$, we can obtain that $\Pr[p\in I]=\Pr[p\in I']$. Hence,
the point $p$ of $\calF_r(p'_i)$ with the largest value
$\Pr[p\in I]$ also has the largest value $\Pr[p\in I']$. This implies
that we can instead use the unbounded interval $I'$ as the query
interval on the upper envelope of $\calF_r(p'_i)$, in the same way as in Section \ref{sec:unbounded}.
The persistent data structure $\calL$
maintains the upper envelope of $\calF_r(p'_i)$ such that we can find in
$O(\log n)$ time the
point $p$ of $\calF_r(p'_i)$ with the largest value $\Pr[p\in I']$.

Similarly, we can build a data structure $\calR$ of $O(n)$
space in $O(n\log n)$ time that can find
the point of $P_R$ with the largest $I$-probability in $O(\log n)$
time.

To find the point of $P_M$ with
the largest $I$-probability, the approach for $P_L$ and
$P_R$ does not work because we cannot reduce the query to
another query with
an unbounded interval. Instead, we reduce the problem to a ``segment
dragging query'' by dragging a line segment out of a corner in the
plane, as follows.

For each point $p$ of $P$, we define a point $q=(x_l(p),x_r(p))$ in the
plane, and we say that $p$ {\em corresponds to} $q$. Similar transformation was also used in \cite{ref:ChengEf04}. Let $Q$ be the set of the $n$ points defined by the points of $P$. For the query interval $I=[x_l,x_r]$, we also define a point
$q_I=(x_l,x_r)$ (this is different from \cite{ref:ChengEf04}, where
$I$ defines a point $(x_r,x_l)$). If we partition the plane into
four quadrants with respect to $q_I$, then we have the following lemma.

\begin{lemma}\label{obser:10}
The points of $P_M$ correspond to the points of $Q$ that
strictly lie in the second quadrant (i.e., the northwest quadrant) of $q_I$.
\end{lemma}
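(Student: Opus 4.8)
The plan is to unwind the definitions of the three point types and match them against the geometric meaning of the point $q=(x_l(p),x_r(p))$ and the query point $q_I=(x_l,x_r)$. Recall that $p$ is M-type precisely when $I\subset(x_l(p),x_r(p))$, i.e. when $x_l(p)<x_l$ and $x_r<x_r(p)$ (the strict inclusion rules out equality at either end). So I would first observe that these two inequalities say exactly that the point $q=(x_l(p),x_r(p))$ has its first coordinate strictly less than the first coordinate $x_l$ of $q_I$, and its second coordinate strictly greater than the second coordinate $x_r$ of $q_I$. By the standard labeling of quadrants (first/northeast, second/northwest, third/southwest, fourth/southeast) about the center $q_I$, the set of points with smaller $x$-coordinate and larger $y$-coordinate is exactly the open second (northwest) quadrant. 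Hence $p\in P_M$ iff $q$ lies strictly in the northwest quadrant of $q_I$, which is the claim.

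To make this airtight I would carry out the argument in two directions. For the forward direction: take $p\in P_M$; by definition $I\subset(x_l(p),x_r(p))$, so $x_l(p)<x_l\le x_r<x_r(p)$, giving $x_l(p)<x_l$ and $x_r(p)>x_r$; thus $q=(x_l(p),x_r(p))$ satisfies (first coord of $q$) $<$ (first coord of $q_I$) and (second coord of $q$) $>$ (second coord of $q_I$), which places $q$ strictly in the northwest quadrant of $q_I$. For the converse: if $q$ lies strictly in the northwest quadrant of $q_I$ then $x_l(p)<x_l$ and $x_r(p)>x_r$, hence $[x_l,x_r]\subset(x_l(p),x_r(p))$, i.e. $p$ is M-type. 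One small point worth noting explicitly is that the three types are not mutually exclusive as stated (an L-type point with $x_l\le x_l(p)$ could in principle also be R-type), but M-type is characterized by the simultaneous strict inequalities at both ends, and that is all the lemma asserts; I would add one sentence clarifying that "$p$ is M-type" is being read as "$I\subset(x_l(p),x_r(p))$" verbatim.

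This is essentially a definitional unpacking, so I do not expect a genuine obstacle; the only thing to be careful about is the orientation convention — the paper deliberately sets $q_I=(x_l,x_r)$ rather than $(x_r,x_l)$ (unlike the transformation in \cite{ref:ChengEf04}), and one must check that with this convention the relevant region really is the second quadrant and not, say, the fourth. I would therefore state the quadrant convention once at the start (second quadrant $=$ points strictly above and strictly to the left of $q_I$) so that the matching of inequalities to the region is unambiguous, and then the proof is a two-line biconditional. The strictness in the lemma statement corresponds exactly to the strict inclusion $I\subset(x_l(p),x_r(p))$ in the definition of M-type, so no boundary cases are lost.
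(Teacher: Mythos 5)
Your proof is correct and takes essentially the same approach as the paper: unpack the definition of M-type into the two strict inequalities $x_l(p)<x_l$ and $x_r(p)>x_r$, and observe that these are exactly the conditions for $q=(x_l(p),x_r(p))$ to lie strictly in the northwest quadrant of $q_I=(x_l,x_r)$. The extra remarks on orientation conventions and the two directions of the biconditional are fine but not a different argument.
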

\begin{proof}
Consider any point $p\in P$. Let $q=(x_l(p),x_r(p))$ be the point
defined by $p$. On the one hand, $p$ is in $P_M$ if and only if $I\subset(x_l(p),x_r(p))$,
i.e., $x_l>x_l(p)$ and $x_r<x_r(p)$. On the other hand,
$x_l>x_l(p)$ and $x_r<x_r(p)$
if and only if $q$ is in the second quarter
of $q_I=(x_l,x_r)$. The lemma thus follows. \qed
\end{proof}

Let $\rho_u$ be the upwards ray
originating from $q_I$ and let $\rho_l$ be the leftwards ray originating
from $q_I$. Imagine that starting
from the point $q_I$ and towards northwest,
we drag a segment of slope $1$ with two
endpoints on $\rho_u$ and $\rho_l$ respectively, and let $q^*$ be the
point of $Q$ hit first by the segment (e.g., see Fig.~\ref{fig:segdrag}).

\begin{figure}[t]
\begin{minipage}[t]{\linewidth}
\begin{center}
\includegraphics[totalheight=1.2in]{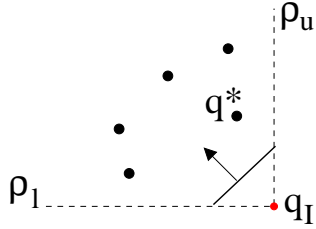}
\caption{\footnotesize Dragging a segment of slope $1$
out of the corner at $q_I$: $q^*$ is the first point
that will be hit by the segment.}
\label{fig:segdrag}
\end{center}
\end{minipage}
\vspace*{-0.15in}
\end{figure}

\begin{lemma}\label{lem:50}
The point of $P$ that defines $q^*$ is in $P_M$ and
has the largest $I$-probability
among all points in $P_M$.
\end{lemma}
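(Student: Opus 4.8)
The plan is to reduce everything to a single clean fact about $M$-type points: for $p\in P_M$, the $I$-probability equals $|I|$ divided by the length of $p$'s interval, so that maximizing the $I$-probability over $P_M$ is the same as finding the $M$-type point whose interval is shortest. The segment-dragging construction is then just a geometric device that hits the points of $Q$ in exactly the order of interval length.

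First I would establish the formula. If $p\in P_M$, then $x_l(p)<x_l\le x_r<x_r(p)$, so both $x_l$ and $x_r$ lie on the middle (linearly increasing) piece of the cdf $F_p$, on which $F_p(x)=(x-x_l(p))/(x_r(p)-x_l(p))$. Hence
\[
\Pr[p\in I]=F_p(x_r)-F_p(x_l)=\frac{x_r-x_l}{x_r(p)-x_l(p)}.
\]
Since the numerator $x_r-x_l=|I|$ is the same for every point, among the points of $P_M$ the value $\Pr[p\in I]$ is largest precisely at the point whose interval $[x_l(p),x_r(p)]$ has the smallest length.

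Next I would parametrize the dragging. For $a\ge 0$, the dragged segment is the slope-$1$ segment with endpoints $(x_l-a,x_r)\in\rho_l$ and $(x_l,x_r+a)\in\rho_u$; as $a$ increases from $0$ it moves from the corner $q_I$ towards the northwest and its relative interior sweeps the open second quadrant of $q_I$. For a point $p$ with corresponding point $q=(x_l(p),x_r(p))$, a short computation shows $q$ lies on this segment exactly when $a=(x_r(p)-x_l(p))-|I|$, and when $p\in P_M$ (so $q$ is strictly in the second quadrant by Lemma~\ref{obser:10}) this value is positive and $q$ is met in the relative interior of the segment. Assuming general position so that no interval endpoint of a point of $P$ equals $x_l$ or $x_r$ (hence no point of $Q$ lies on $\rho_l$ or $\rho_u$), the only points of $Q$ ever touched by the dragged segment are those strictly in the second quadrant, i.e.\ exactly the points corresponding to $P_M$, and each such $q$ is touched at the unique ``time'' $a(p)=(x_r(p)-x_l(p))-|I|$.

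Combining the two steps finishes the proof: the first point $q^*$ hit is the one minimizing $a(p)$, hence minimizing the interval length $x_r(p)-x_l(p)$, hence maximizing $\Pr[p\in I]$ over $P_M$; and since only points corresponding to $P_M$ are ever hit, the point of $P$ defining $q^*$ is itself in $P_M$. The one place to be careful is the bookkeeping in the parametrization step — checking that each $M$-type point is hit at exactly one time and that no non-$M$-type point is hit earlier — so that ``first hit'' is provably synonymous with ``shortest interval''; once that is pinned down the conclusion is immediate.
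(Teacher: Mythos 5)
Your proof is correct and follows essentially the same route as the paper's: for $p\in P_M$ you derive $\Pr[p\in I]=(x_r-x_l)/(x_r(p)-x_l(p))$, note the numerator is fixed so the largest $I$-probability corresponds to the shortest interval, and observe that the dragged segment reaches points of $Q$ in order of increasing $x_r(p)-x_l(p)$. The only difference is that you make the parametrization of the sweep explicit (hit time $a(p)=(x_r(p)-x_l(p))-|I|$) and flag the general-position caveat that no point of $Q$ sits on $\rho_u$ or $\rho_l$, which the paper leaves implicit when it invokes Lemma~\ref{obser:10} to conclude $q^*$ comes from $P_M$.
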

\begin{proof}
First of all, by Lemma \ref{obser:10}, $q^*$ must be in $P_M$.

Consider any point $q$ in the second quadrant of $q_I$, and let
$p$ be the point of $P$ that defines $q$. Since the interval of $p$
contains the interval $I$, we have $\Pr[p\in
I]=\frac{x_r-x_l}{x_r(p)-x_l(p)}$.

Based on the definition of $q^*$, $q^*$ is the point $q$ of $Q$ in the
second quadrant of $q_I$ that has the smallest value
$x_r(p)-x_l(p)$. Therefore,  $q^*$ is the point $q$ of $Q$ in the
second quadrant of $q_I$ that has the largest value
$\frac{x_r-x_l}{x_r(p)-x_l(p)}$. The
lemma thus follows. \qed
\end{proof}

Based on Lemma \ref{lem:50}, to determine the point of $P_M$ with the
largest $I$-probability, we only need to solve the above query on
$Q$ by dragging a segment out of a corner. More specifically, we need
to build a data structure on $Q$ to answer the following {\em
out-of-corner segment-dragging queries}: Given a point $q$, find the
first point of $Q$ hit by dragging a segment of slope $1$ from $q$ and
towards the northwest direction with the two endpoints on the two rays
$\rho_u(q)$ and $\rho_l(q)$, respectively, where $\rho_u(q)$ is the
upwards ray originating from $q$ and $\rho_l(q)$ is the leftwards ray
originating from $q$.  By using Mitchell's result in
\cite{ref:MitchellL192} (reducing the problem to a point location
problem), we can build an $O(n)$
size data structure on $Q$ in $O(n\log n)$ time that can answer each
such query in $O(\log n)$  time.

\begin{theorem}
For the uniform case, we can build in $O(n\log n)$ time an $O(n)$ size
data structure on $P$ that can answer each top-$1$ query in $O(\log n)$ time. \qed
\end{theorem}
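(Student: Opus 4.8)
The plan is to assemble the theorem from the three data structures developed just above, after first disposing of the unbounded case. If the query interval $I$ is unbounded, then by the discussion in Section~\ref{sec:unbounded} it suffices to maintain the upper envelope of the line set $\calF$ and to intersect it with the vertical line $L$ through $x_r$ (or, symmetrically, handle the case $x_r=+\infty$); this is an $O(n)$-size structure, built in $O(n\log n)$ time, with $O(\log n)$ query time. So the remaining work is the bounded case.

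For a bounded query $I=[x_l,x_r]$, the first thing to check is that the type classification is exhaustive: every point $p\in P$ is $L$-type, $R$-type, or $M$-type, since if $p$ is neither $L$-type nor $R$-type then $x_l>x_l(p)$ and $x_r<x_r(p)$, i.e.\ $I\subset(x_l(p),x_r(p))$, which is exactly the $M$-type condition. Hence $P=P_L\cup P_R\cup P_M$ (the three classes may overlap, which is harmless), and the point of $P$ with the largest $I$-probability is the best among: the best of $P_L$, the best of $P_R$, and the best of $P_M$. I would then invoke the three structures already built. The persistent structure $\calL$ returns in $O(\log n)$ time the point of $P_L$ with the largest $I$-probability, using that for an $L$-type point $\Pr[p\in I]=\Pr[p\in(-\infty,x_r]]$ so the query reduces to an unbounded one on the upper envelope of $\calF_r(p'_i)$; the symmetric structure $\calR$ does the same for $P_R$; and the out-of-corner segment-dragging structure on $Q$ (built via Mitchell's point-location reduction \cite{ref:MitchellL192}) returns in $O(\log n)$ time the point of $P_M$ with the largest $I$-probability, by Lemma~\ref{lem:50}. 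Querying all three and taking the maximum of the three returned $I$-probabilities gives the answer in $O(\log n)$ time.

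For the resource bounds I would simply sum over the three components: each of $\calL$, $\calR$, and the segment-dragging structure occupies $O(n)$ space and is constructed in $O(n\log n)$ time, so the combined structure has size $O(n)$ and is built in $O(n\log n)$ time. Together with the unbounded-case structure, which has the same $O(n\log n,n)$ complexity, this yields the claimed bounds.

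I do not expect a genuine obstacle here, since the technical content was already established in the preceding lemmas. The only points that deserve care are the exhaustiveness of the $L/R/M$ classification and the two reductions — $L$-type and $R$-type queries to unbounded queries, and $M$-type queries to the segment-dragging query — but both reductions are immediate from the form of the cdf in the uniform case and from Lemma~\ref{lem:50}.
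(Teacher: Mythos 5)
Your proposal is correct and follows essentially the same approach as the paper: classify points into $L$-type, $R$-type, and $M$-type, query the persistent structure $\calL$, its symmetric counterpart $\calR$, and the out-of-corner segment-dragging structure on $Q$ separately, and return the maximum of the three candidates, with each component of size $O(n)$, built in $O(n\log n)$ time, and queried in $O(\log n)$ time. Your explicit check that the $L/R/M$ classification is exhaustive and your remark that the unbounded case reduces to the upper-envelope structure (or, equivalently, can be absorbed into the $L$- and $R$-type machinery) are small additions the paper leaves implicit, but there is no substantive difference.
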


\subsubsection{Top-k Queries}
\label{sec:uniformtopk}
To answer a top-k query, we will do the following. First, we find the top-$k$ points in $P_L$ (i.e., the $k$ points of $P_L$ whose
$I$-probabilities are the largest), the top-$k$ points in $P_R$, and the
top-$k$ points in $P_M$. Then, we find the top-$k$ points of $P$
from the above $3k$ points.
Below we build three data structures for computing the
top-$k$ points in $P_L$, $P_R$, and $P_M$, respectively.

We first build the data structure for
$P_L$. Again, let $p'_1,p'_2,\ldots,p'_n$ be the list of the points
of $P$ sorted by the left endpoints of their intervals from left to
right. We construct a complete binary search tree $T_L$ whose
leaves from left to right store the $n$ intervals of the points
$p'_1,p'_2,\ldots,p'_n$. For each internal node $v$, let $P_v$ denote the
set of points whose intervals are stored in the leaves of the subtree
rooted at $v$. We build the half-plane range reporting
data structure discussed in Section \ref{sec:pre} on $P_v$, denoted by $D_v$.
Since the size of $D_v$ is $|P_v|$, the total
size of the data structure $T_L$ is $O(n\log n)$, and $T_L$ can be
built in $O(n\log^2 n)$ time.

We use $T_L$ to compute the top-$k$ points in $P_L$ as follows. By the standard approach and using $x_l$, we find in $O(\log n)$ time a set $V$ of $O(\log
n)$ nodes of $T_L$ such that $P_L=\bigcup_{v\in V}P_v$ and no node of $V$ is an ancestor of another node.
Then, we can determine the top-$k$ points of $P_L$ in similarly
as in Section \ref{sec:unbounded}.
However, since we now have $O(\log n)$  data structures $D_v$, we need
to maintain the candidate sets for all such $D_v$'s. Specifically,
after we find the top-1 point
in $D_v$ for each $v\in V$, we use a heap $H$ to maintain them where
the ``keys'' are the $I$-probabilities of the points.
Let $p$ be the point of $H$ with the largest key.
 Clearly, $p$ is the top-1 point of $P_L$;
assume $p$ is from $D_v$ for some $v\in V$.
We remove $p$ from $H$ and insert at most three new points
from $D_v$ into $H$, in a similar way as in
Section \ref{sec:unbounded}. We repeat the same procedure until we
find all top-$k$ points of $P_L$.

To analyze the running time, for each node $v\in V$, we can
determine in $O(\log n)$ time
the line in the first layer of $D_v$ whose intersection with $L$ is
on the upper envelope of the first layer,
and subsequent operations on $D_v$ each takes $O(1)$ time due to
fractional cascading. Hence, the total time for this step in the entire algorithm is
$O(\log^2 n)$. However, we can do better by building a fractional cascading
structure \cite{ref:ChazelleFr86} on the first layers of $D_v$ for all
nodes $v$ of the tree $T_L$. In this way, the above step only takes
$O(\log n)$ time in the entire algorithm,
i.e., do binary search only at the root of $T_L$.
In addition, building the heap $H$ initially takes $O(\log n)$ time. Note that the additional fractional cascading structure on $T_L$ does not change the size and construction time of $T_L$ asymptotically \cite{ref:ChazelleFr86}.
The entire query algorithm has $O(k)$ operations
on $H$ in total and the size of $H$ is $O(\log n+k)$. Hence, the total
time for finding the top-$k$ points of $P_L$ is
$O(\log n+k\log(k+\log n))$, which is
$O(\log n+k\log k)$ by Lemma \ref{lem:time}.

\begin{lemma}\label{lem:time}
$\log n+k\log(k+\log n)=O(\log n+k\log k)$.
\end{lemma}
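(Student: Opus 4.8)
The plan is to prove the elementary inequality $\log n + k\log(k+\log n) = O(\log n + k\log k)$ by a straightforward case analysis on the relative sizes of $k$ and $\log n$. The only term that needs attention is $k\log(k+\log n)$, since the standalone $\log n$ already appears on the right-hand side; so it suffices to show $k\log(k+\log n) = O(\log n + k\log k)$.

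First I would split into two cases. If $k \geq \log n$, then $k + \log n \leq 2k$, so $\log(k+\log n) \leq \log(2k) = 1 + \log k$, and hence $k\log(k+\log n) \leq k + k\log k = O(k\log k)$, which is dominated by the right-hand side. If instead $k < \log n$, then $k + \log n < 2\log n$, so $\log(k+\log n) \leq 1 + \log\log n$, giving $k\log(k+\log n) \leq k(1+\log\log n) = O(k\log\log n)$. Now I would further observe that $k\log\log n \leq \log n \cdot \log\log n$ is not quite what we want directly; instead it is cleaner to note that $k\log\log n = O(\log n)$ whenever $k = O(\log n / \log\log n)$, but in general for $k < \log n$ we only get $k\log\log n \le \log n \log \log n$. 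So a slightly more careful bound is needed here: when $k < \log n$ we should compare $k\log\log n$ against $\log n$ directly. Since $\log\log n = o(\log n)$ is too weak, the right move is simply to absorb: $k\log\log n \le (\log\log n)\cdot k \le \log n$ fails, so instead just bound $k \log(k+\log n) \le k\log(2\log n) \le 2k\log\log n \le 2\log n \cdot \log\log n$, which is not $O(\log n)$.

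Let me reconsider: the correct and clean argument is that $\log(k+\log n) \le \log(2\max(k,\log n)) \le 1 + \log\max(k,\log n) \le 1 + \log k + \log\log n$. Hence $k\log(k+\log n) \le k + k\log k + k\log\log n$. The first two terms are $O(k\log k)$. For the last term: if $k \le \log n$ then $k\log\log n \le \log n \cdot \frac{\log\log n}{1}$... this still does not collapse. The actual resolution used in such arguments is that $k\log\log n = O(\log n + k\log k)$: indeed if $\log\log n \le \log k$ this is $O(k\log k)$, and if $\log\log n > \log k$, i.e. $k < \log n$, then $k \log\log n < \log n \log\log n$, still problematic — so in fact I believe the intended reading is that $k\log\log n \le \log n$ when $k \le \log n/\log\log n$ and otherwise $\log\log n \le \log k \cdot (\text{const})$ fails too. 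The honest statement is that the lemma as I would prove it reduces to: $k\log\log n \le \log n$ OR $\log\log n \le 2\log k$; in the first subcase done, in the second $k\log\log n \le 2k\log k$, done. And indeed one of these always holds: if $\log\log n > 2\log k$ then $\sqrt{\log n} > k^2 \ge k$, wait we need $k\log\log n\le \log n$, i.e. $k \le \log n/\log\log n$; from $\log\log n > 2\log k$ we get $\log n > (\log k)^2$... this gives $\log n/\log\log n > (\log k)^2/\log\log n$, not obviously $\ge k$.

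The cleanest correct proof, and the one I would actually write, avoids this entirely: use $\log(a+b)\le \log(2\max(a,b))$ and then bound $k\cdot\log\max(k,\log n)$. If $k\ge\log n$, this is $k\log k$. If $k<\log n$, then $k\log\log n$: but note $k\log\log n \le k\log n \cdot \frac{\log\log n}{\log n}$, and here is the key — when $k < \log n$ we can also write, crudely, that we only need the bound on the left side of the lemma which also contains a free $+\log n$, and $k\log\log n \le (\log n)(\log\log n)$ would be too big, so instead: when $1 \le k < \log\log n$ we have $k\log\log n \le (\log\log n)^2 = O(\log n)$; when $\log\log n \le k < \log n$ we have $\log\log n \le k$ so $k\log\log n \le k^2$... still not $O(k\log k)$. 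Given this genuine subtlety, I expect the main obstacle to be pinning down the exact case split that makes every term land in $O(\log n + k\log k)$, and I would handle it by the three-way split $k < \log\log n$ (then $k\log(k+\log n) \le \log\log n\cdot\log(2\log n) = O(\log n)$), $\log\log n \le k \le \log n$ (then $\log(k+\log n)\le\log(2\log n)=O(\log k)$ since $\log k \ge \log\log\log n$... no — since $k \ge \log\log n$ gives $\log k \ge \log\log\log n$, too weak; rather use $k \le \log n$ so $\log(2\log n) \le \log(2\log n)$ and $k\log(2\log n)\le\log n\cdot\log(2\log n)$, too big again) — so the only fully safe split is $k \ge \log n$ versus $k < \log n$, and in the latter case bound $k\log(k+\log n)\le k\log(2\log n) \le \log n\cdot\log(2\log n)$; since this is $\omega(\log n)$ the lemma as literally stated would be false, which tells me the intended proof must use that in the algorithm $k \le n$ and more plausibly that the paper means $O(\log n + k\log k)$ with the convention $k\log k \ge 1$ and is implicitly using $\log(k+\log n) = O(\log(kn)) = O(\log k + \log n)$, so $k\log(k+\log n) = O(k\log k + k\log n)$ — and then $k\log n$ with $k\le n$... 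I would therefore present the proof via $\log(k+\log n)\le \log(k) + \log(\log n) + O(1) \le \log k + \log\log n + O(1)$ and then the case analysis $\log\log n \le \log k$ (giving everything $O(k\log k)$) versus $\log\log n > \log k$ i.e. $k < \log n$, in which subcase $k\log\log n$: here finally use that $t\mapsto t\log\log n$ is increasing and $k<\log n$, but crucially combine with the free additive $\log n$ term by noting $k\log\log n \le \log n$ precisely when $k\le \log n/\log\log n$ and otherwise $k > \log n/\log\log n \ge \sqrt{\log n}$ for large $n$, whence $\log k \ge \frac12\log\log n$ contradicting $\log\log n > \log k$ — so in the subcase $\log\log n > \log k$ we do have $k \le \log n/\log\log n$, hence $k\log\log n \le \log n$. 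That closes it, and this last contradiction argument is the one genuinely delicate step I would make sure to get right.

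\begin{proof}
It suffices to bound $k\log(k+\log n)$, since the term $\log n$ on the left already appears on the right. Using $k + \log n \le k \cdot \log n$ for $k,\log n \ge 2$ (the small cases $k=1$ or $\log n \le 2$ being trivial), we have $\log(k+\log n) \le \log k + \log\log n$, so
\[
k\log(k+\log n) \le k\log k + k\log\log n .
\]
It remains to show $k\log\log n = O(\log n + k\log k)$. If $\log\log n \le \log k$, then $k\log\log n \le k\log k$ and we are done. Otherwise $\log\log n > \log k$, i.e. $k < \log n$. Suppose, for contradiction, that $k > \log n / \log\log n$. Then $k\log\log n > \log n$, and since $k < \log n$ we get $(\log n/\log\log n) < k < \log n$, so $\log k > \log\log n - \log\log\log n \ge \tfrac12\log\log n$ for $n$ large enough, contradicting $\log\log n > \log k$. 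Hence $k \le \log n/\log\log n$, which gives $k\log\log n \le \log n$. In all cases $k\log\log n = O(\log n + k\log k)$, and therefore $\log n + k\log(k+\log n) = O(\log n + k\log k)$. \qed
\end{proof}
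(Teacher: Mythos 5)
Your final proof has a genuine gap in the contradiction step. In the branch $\log\log n > \log k$ (i.e. $k < \log n$), you suppose $k > \log n/\log\log n$ and derive $\log k > \tfrac12\log\log n$ for large $n$, then claim this contradicts $\log\log n > \log k$. It does not: the two inequalities $\tfrac12\log\log n < \log k < \log\log n$ are perfectly compatible. Concretely, take $k = \log n / 2$: then $\log k = \log\log n - 1 < \log\log n$, so you are in this branch, yet $k > \log n/\log\log n$ for large $n$, and both $\log k > \tfrac12\log\log n$ and $\log\log n > \log k$ hold. In this regime $k\log\log n$ is genuinely $\omega(\log n)$, so your conclusion $k\log\log n \le \log n$ is false there; the correct bound must come from $k\log k$ (since $\log k = \Theta(\log\log n)$), not from the additive $\log n$.

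The dichotomy you chose ($\log\log n \le \log k$ versus $\log\log n > \log k$, equivalently $k \ge \log n$ versus $k < \log n$) is the wrong place to split, and no contradiction will rescue it, because the boundary where the $\log n$ bound and the $k\log k$ bound trade off is at $k \approx \log n/\log\log n$, not at $k \approx \log n$. The paper splits there directly: if $k \ge \log n/\log\log n$ then $\log k \ge \log\log n - \log\log\log n = \Omega(\log\log n)$, so $\log\log n = O(\log k)$ and $k\log(k+\log n) = O(k\log k)$; if $k < \log n/\log\log n$ then $k\log(k+\log n) < \tfrac{\log n}{\log\log n}\log(2\log n) = O(\log n)$. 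Your long preamble actually discovers this is the right threshold, but the clean proof you wrote at the end reverts to the $k < \log n$ split and the fallacious contradiction. Replace your case split with the one at $k \lessgtr \log n/\log\log n$ and the argument goes through.
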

\begin{proof}
To simplify the notation, let $n'=\log n$, and our goal is to prove
$k\log(k+n')+n'=O(k\log k+n')$. Depending on whether $k\geq
\frac{n'}{\log n'}$, there are two cases.

\begin{enumerate}
\item
If $k\geq \frac{n'}{\log n'}$, then $\log k\geq \log n'-\log\log n'$,
implying that $\log\log n'=O(\log k)$. Thus,
\begin{equation*}
\begin{split}
& k\log (k+n')\leq k\log(k+k\log n') = O(k\log (k\log
n'))\\
&=O(k(\log k+\log\log n'))=O(k\log k).
\end{split}
\end{equation*}

Hence, we obtain that $k\log(k+n')+n'=O(k\log k+n')$.

\item
If $k<\frac{n'}{\log n'}$, then $k\log (k+n')\leq \frac{n'}{\log
n'}\log (\frac{n'}{\log n'}+n')=O(\frac{n'}{\log
n'}\log n')=O(n')$.

Hence, we obtain that $k\log(k+n')+n'=O(k\log k+n')$.
\end{enumerate}

The lemma thus follows. \qed
\end{proof}

If $k=\Omega(\log n\log\log n)$, we have a better result in Lemma \ref{lem:100}. Note that comparing with Lemma \ref{lem:new60}, we need to use other techniques to obtain Lemma \ref{lem:100} since the problem here involves $O(\log n)$ half-plane range reporting data structures $D_v$ while Lemma \ref{lem:new60} only needs to deal with one such data structure.

\begin{lemma}\label{lem:100}
If $k=\Omega(\log n\log\log n)$, we can compute the top-$k$ points in $P_L$ in $O(k)$ time.
\end{lemma}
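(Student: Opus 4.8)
The plan is to reduce the problem to computing the single value $\tau$ equal to the $k$-th largest $I$-probability among the points of $P_L$, and then to compute $\tau$ by a geometric-search (doubling) scheme over the layer decompositions of the $O(\log n)$ half-plane structures $D_v$, combined with sorted-matrix selection. First I would recall the setup inherited from the discussion above: using $x_l$ and $T_L$ we obtain the canonical set $V$ of $O(\log n)$ nodes with $P_L=\bigcup_{v\in V}P_v$, and the fractional cascading structure on the first layers locates, in $O(\log n)$ time \emph{in total}, the line $l_1(L)$ on the upper envelope of $L_1(\calF_v)$ for every $v\in V$; by Lemma~\ref{lem:30} each subsequent layer of a fixed $D_v$ is then produced in $O(1)$ time. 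Two further facts about a single $D_v$ will be used: (i) the heights at $L$ of the layer tops $l_1(L),l_2(L),\dots$ are non-increasing, since $L_{i+1}(\calF_v)$ is the upper envelope of $\calF_v$ minus its first $i$ layers; and (ii) exactly as in Lemma~\ref{lem:new60}, once $l_i(L)$ is known layer $i$ of $D_v$ yields two decreasing sorted arrays $A^r_{v,i},A^l_{v,i}$ of heights, each truncatable to length $k$ and each supporting $O(1)$-time indexed access.

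Given $\tau$, reporting the top-$k$ points of $P_L$ is easy and fast: for every $v\in V$ we walk the layers of $D_v$, starting from $l_1(L)$ and proceeding to the next layer as long as the layer top has height $\ge\tau$, and within each such layer we walk left and right from $l_i(L)$, reporting the points of height $\ge\tau$; this is exactly the threshold-query procedure of Section~\ref{sec:unbounded}, run in parallel over the $O(\log n)$ nodes. If $j_v$ denotes the number of layers of $D_v$ with top of height $\ge\tau$ and $m_v$ the number of points of $P_v$ above $\tau$, then $j_v\le m_v$ (each such layer contributes at least one reported point), so the number of visited layers is $\sum_v(j_v+1)\le\sum_v m_v+|V|=k+O(\log n)$, and reporting runs in $O(k+\log n)=O(k)$ time. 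Hence it suffices to compute $\tau$ in $O(k)$ time.

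To compute $\tau$ I would use doubling on the extracted depth. In round $r=0,1,2,\dots$, for every node not yet ``settled'' I extend its extracted prefix of layers to depth $2^r$ (at $O(1)$ per new layer, by Lemma~\ref{lem:30}); this implicitly yields a collection of at most $m_r=O(\min(2^{r}|V|,k))$ sorted arrays of length $\le k$ with $O(1)$ indexed access, and I run the Frederickson--Johnson sorted-matrix selection algorithm~\cite{ref:FredericksonTh82} on it to get the current candidate $\hat\tau$ = the $k$-th largest extracted height; a node is marked settled once the top of its deepest extracted layer has height $<\hat\tau$ (or it is fully extracted). Since $\hat\tau\ge\tau$ always, a node is drilled only while its deepest layer top is $\ge\tau$, i.e.\ while its depth is $\le j_v$, so its final depth is at most the next power of two after $j_v$; therefore the total number of extracted layers is $O\!\big(\sum_v(j_v+1)\big)=O(k)$, there are only $O(\log k)$ rounds, and when the process stops the extracted set contains \emph{every} point of $P_L$ above $\tau$ (no layer of any node holds more than $k$ points above $\tau$, so the length-$k$ truncation loses nothing), whence the $k$-th largest extracted height equals $\tau$.

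The main obstacle is bounding the aggregate cost $\sum_r O\!\big(m_r\log(k/m_r)\big)$ of the repeated Frederickson--Johnson selections. The $m_r$ are non-decreasing, increase by at most a factor of two per round, and are capped at $O(k)$; for the rounds in which $m_r=O(2^{r}|V|)$ a telescoping estimate of the form $\sum_r 2^r(\log k-r)=O(2^{\log k})=O(k)$ shows their total contribution is $O(k)$. After the array count saturates at $\Theta(k)$ there remain only $O(\log|V|)=O(\log\log n)$ further rounds, and by performing the selection \emph{incrementally} in those rounds---only re-incorporating the few new deep layers of the still-active nodes into the already-computed candidate, rather than re-running the selection on all $\Theta(k)$ arrays---their contribution is held to $O(\log n\log\log n)$. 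Thus the computation of $\tau$, and hence the whole query, takes $O(k+\log n\log\log n)$ time, which is $O(k)$ precisely under the hypothesis $k=\Omega(\log n\log\log n)$. Making this last accounting rigorous---in particular the incremental re-selection in the post-saturation rounds and the argument that at most $O(\log\log n)$ such rounds occur---is the delicate part of the proof.
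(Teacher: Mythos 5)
The central step in your termination argument is the claim that $\hat\tau\ge\tau$ always holds. This inequality goes the wrong way. Your $\hat\tau$ is the $k$-th largest height over the arrays extracted so far, and the extracted arrays form a \emph{subset} of the lines of $P_L$; the $k$-th largest element of a subset can only be $\le$ the $k$-th largest element of the whole set, so in fact $\hat\tau\le\tau$. This breaks the bound "a node is drilled only while its deepest layer top is $\ge\tau$, so its final depth is at most the next power of two after $j_v$": with $\hat\tau\le\tau$, a node keeps being drilled while its deepest layer top is $\ge\hat\tau$ but possibly $<\tau$, and you have no bound on this over-drilling, hence no $O(\sum_v(j_v+1))=O(k)$ bound on total extracted layers. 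Worse, the two halves of your argument need \emph{opposite} directions of the inequality: correctness of the stopping rule (no layer with top $\ge\tau$ is missed) needs $\hat\tau\le\tau$, while termination needs $\hat\tau\ge\tau$; since $\hat\tau$ cannot equal $\tau$ until the process has already finished, both cannot hold during the run. You also leave the post-saturation accounting (the ``incremental re-selection'' across $O(\log\log n)$ rounds) explicitly unproved, and that is exactly where the $\log\log n$ versus $k/\log n$ tradeoff has to be nailed down.

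The paper avoids an iterative threshold-guessing loop entirely. It defines, for each $v\in V$, the length-$k$ array $B(v)$ whose $i$-th entry is the height of the $i$-th layer top $l_i(v)$, and in a \emph{first} step it finds the top-$k$ elements of $B(V)=\cup_v B(v)$ directly, using a max-heap that stores only every $h$-th entry of each $B(v)$ for the fixed chunk size $h=\log\log n$. Each extract-max advances one array by $h$ positions, so the heap performs $O(k/h+\log n)$ operations on a heap of size $O(\log n)$, which is $O(k)$ total under $k=\Omega(\log n\log\log n)$; a final linear-time selection over the $O(k)$ touched entries yields the top-$k$ layer tops, and hence the per-node depths $j_v$. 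Only \emph{then}, in a second step, does it apply Frederickson--Johnson once, to the resulting $\sum_v 2j_v=2k$ sorted arrays $A^r_i(v),A^l_i(v)$, in $O(k)$ time. The insight you are missing is that the depths $j_v$ can be determined by a selection among the layer \emph{tops} alone, before ever looking at interior array entries, which removes the need for a moving candidate $\hat\tau$ and the associated circularity.
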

\begin{proof}
We assume $k=\Omega(\log n\log\log n)$.
Recall that $L$ is the vertical line with $x$-coordinate $x_r$.
Let $V$ be the set of $O(\log n)$ nodes of $T_L$ as defined before. Consider any node $v\in V$, which is associated with a half-plane range reporting data structure $D_v$ on the cdfs of the point set $P_v$. Let $\calF(v)$ be the cdfs of the points of $P_v$. Let $\calF(V)=\cup_{v\in V}\calF(v)$.  Our goal is to find the $k$ lines of $\calF(V)$ whose intersections with $L$ are the highest, and denote by $\calF_k$ the above $k$ lines that we seek.

We let $L_1(v),L_2(v),\ldots$ be the layers of $\calF(v)$, and for each layer $L_i(v)$, denote by $l_i(v)$ the line of $L_i(v)$ whose intersection with $L$ is on the upper envelope of the layer $L_i(v)$. For each layer $L_i(v)$, we define two arrays $A^r_i(v)$ and $A^l_i(v)$ in the same way as in the proof of Lemma \ref{lem:new60} (we omit the details). For each node $v$, we define another array $B(v)$ of size $k$ as follows: for each $1\leq i\leq k$, the $i$-th element of $B(v)$ is the height of the intersection of $l_i(v)$ and $L$. Hence, the elements of $B(v)$ are sorted in decreasing order.

Our algorithm for computing $\calF_k$ has two main steps. In the first main step, we will find a set $B'$ of the  largest $k$ elements in $B(V)=\cup_{v\in V}B(v)$. For each $v\in V$, let $j_v$ be the number of  elements of $B(v)$ that are contained in $B'$, i.e., the first $j_v$ elements of $B(v)$ are in $B'$. An easy observation is that the heights of the intersections of $L$ and the sought lines of $\calF_k$ are the largest $k$ elements of $\calA=\bigcup_{v\in V}\cup_{i=1}^{j_v}\{A_i^r(v)\cup A_i^l(v)\}$, which contains $2k$ sorted arrays. In the second main step, we will find the largest $k$ elements of $\calA$ and thus obtain the set $\calF_k$. Below, we show that the above two main steps can be done in $O(k)$ time.

We consider the first main step. For simplicity of discussion, we assume no two elements of $B(V)$ are equal.
First of all, as discussed above, in $O(\log n)$ time we can determine the lines $l_1(v)$ for all $v\in V$, and thus the first elements of all arrays $B(v)$ for $v\in V$ are determined. Further, for each array $B(v)$, due to the fractional cascading, we can obtain the next element in constant time each, and in other words, we can obtain the first $i$ elements of $B(v)$ in $O(i)$ time. To compute the set $B'$, i.e., the largest $k$  elements of $B(V)$,
since $B(V)$ contains $|V|$ sorted arrays, one may want to use the selection algorithm in \cite{ref:FredericksonTh82} again, as in Lemma \ref{lem:new60}. However, here we cannot use that algorithm because for each array in $B(V)$, given any data structure, we cannot obtain the corresponding element  in constant time. Instead, we propose the following approach.

Recall that $k=\Omega(\log n\log \log n)$. For simplicity of discussion, we assume $|V|=\log n$ and $k\geq \log n\log \log n$. Let $h=\log\log n$.
Let $H$ be a max-heap. Initially $H=\emptyset$.  During the algorithm, we will maintain a set $S$ of elements. Initially $S=\emptyset$, and after the algorithm stops, $S$ contains at most $k+h$ elements.


First of all, for each array $B(v)$, we compute its first $h$ elements and then insert only the $h$-th element of $B(v)$ into $H$. Now $H$ contains $|V|=\log n$ elements. We do an ``extract-max'' operation on $H$, i.e., remove the largest element from $H$. Suppose the element removed above is from $B(v)$ for a node $v$. Then, we add the first $h$ elements of $B(v)$ into $S$. If $|S|\geq k$, the algorithm stops; otherwise, we compute the next $h$ elements of $B(v)$ and insert only the $(2h)$-th element of $B(v)$ into $H$.

In general, suppose we do an ``extract-max'' operation on $H$ and let the removed element by the operation be the $(i\cdot h)$-th element of the array $B(v)$ for a node $v$. Then, we add the elements of $B(v)$ with indices from $i\cdot h-1$ to $i\cdot h$ to $S$. If $|S|\geq k$, the algorithm stops; otherwise, we compute the next $h$ elements of $B(v)$ and insert the $[i\cdot (h+1)]$-th element of $B(v)$ into $H$.

After the algorithm stops, we do the following. Consider any element in the current heap $H$, and suppose it is the $(j\cdot h)$-th element of the array $B(v)$ for a node $v$. Then, according to our algorithm, the elements of $B(v)$ with indices from $j\cdot h-1$ to $j\cdot h$ are not in $S$, and we call these $h$ elements the {\em red elements}. Since the current $H$ has at most $|V|-1$ elements, there are at most $h\cdot (|V|-1)$ red elements, and we let $S'$ be the union of $S$ and all red elements.

We claim that the largest $k$ elements of $B(V)$ must be in $S'$, i.e., $B'\subseteq S'$. We prove the claim as follows. Let $a$ be the element that is removed from $H$ in the last extract-max operation on $H$ in the above algorithm, i.e., after $a$ is removed, the algorithm stops. Let $H$ be the heap after the algorithm stops. According to our algorithm, since each array $B(v)$ is sorted decreasingly, $a$ is the smallest element in $S$. Since $|S|\geq k$ and $|B'|=k$, any element of $B'$ must be larger than $a$. If $B'\subseteq S$, then the claim is proved. Otherwise, suppose an element $b$ is in $B'\setminus S$. Since $b>a$ and all elements of $H$ are smaller than $a$, $b$ must be a red element, and thus $b$ is in $S'$. The claim is proved.

In light of the above claim, $B'$ can be easily obtained after we find the $k$-th largest element of $S'$.

In the sequel, we analyze the running time of the above algorithm for the first main step. Recall that $h=\log\log n$. First of all, the size of the heap $H$ is at most $|V|=\log n$ at any time during the algorithm. Clearly, the algorithm will stop after $\lceil\frac{k}{h}\rceil$ extract-max operations. The number of insertions is at most $\lceil\frac{k}{h}\rceil+|V|$. Therefore, the running time of all operations on $H$ in the entire algorithm is $O((\lceil\frac{k}{h}\rceil+\log n)\log\log n)$, which is $O(k)$ due to $k=\Omega(\log n\log \log n)$. On the other hand, the number of red elements is at most $h\cdot (\log n-1)$, and thus $|S'|\leq h\cdot (\log n-1)+|S|\leq h\cdot (\log n-1)+k+h$, which is $O(k)$ due to $k=\Omega(\log n\log \log n)$. Notice that the elements of $B(V)$ that have been computed during the entire algorithm are exactly those in $S'$, and thus the time for computing these elements is $O(|S'|)=O(k)$. Finally, since $|S'|=O(k)$, we can find the $k$-th largest element in $S'$ in $O(k)$ time by using the well-known linear time selection algorithm.

As a summary, the first main step can find the set $B'$ in $O(k)$ time.

The second main step is to compute the largest $k$ elements in $\calA$, which contains $2k$ arrays. As in the proof of Lemma \ref{lem:new60}, we can obtain any arbitrary element of these arrays in constant time, without computing these arrays explicitly. Hence, by using the same approach as in Lemma \ref{lem:new60}, we can compute the largest $k$ elements of $\calA$ in $O(k)$ time. Consequently, the set $\calF_k$ can be obtained.

The lemma thus follows.\qed
\end{proof}

To compute the top-$k$ points of $P_R$, we build a similar data
structure $T_R$, in a symmetric way as $T_L$, and we omit the details.

Finally, to compute the top-$k$ points
in $P_M$, we do the following transformation. For each point
$p\in P$, we define a point $q=(x_l(p),x_r(p),1/(x_r(p)-x_l(p))$ in
the 3-D space with $x$-, $y$-, and $z$-axes. Let $Q$ be the set of all
points in the 3-D space thus defined.
Let the query interval $I$ define an unbounded query
box (or 3D rectangle)
$B_I=(-\infty,x_l)\times(x_r,+\infty)\times(-\infty,+\infty)$.
Similar to Lemma \ref{obser:10} in Section
\ref{sec:unbounded}, the points of $P_M$ correspond exactly to
the points of $Q\cap B_I$.
Further, the top-$k$ points of $P_M$ correspond to the $k$ points of
$Q\cap B_I$ whose $z$-coordinates are the largest. Denote by $Q_I$ the
$k$ points of $Q\cap B_I$ whose $z$-coordinates are the largest.
Below we build a data structure on $Q$ for computing
the set $Q_I$ for any query interval $I$ and thus finding
the top-$k$ points of $P_M$.

We build a complete binary
search tree $T_M$ whose leaves from left to right store all points of
$Q$ ordered by the increasing $x$-coordinate.
For each internal node $v$ of $T_M$, we build an
auxiliary data structure $D_v$ as follows. Let $Q_v$ be the set of the
points of $Q$ stored in the leaves of the subtree of $T_M$
rooted at $v$. Suppose all points of $Q_v$ have $x$-coordinates less than $x_l$. Let $Q_v'$ be the points of $Q_v$ whose $y$-coordinates
are larger than $x_r$.
The purpose of the auxiliary data structure
$D_v$ is to report the points of $Q'_v$ in the decreasing $z$-coordinate order
 in constant time each after the point of $q_v$ is found,
where $q_v$ is the point of $Q_v'$ with the largest $z$-coordinate.
To achieve this goal, we use the data structure given by Chazelle and Guibas
\cite{ref:ChazelleFr862} (the one for Subproblem P1 in Section 5),
and the data structure is a {\em hive graph}
\cite{ref:ChazelleFi86}, which can be viewed as the preliminary
version of the fractional cascading techniques
\cite{ref:ChazelleFr86}. By using the result in
\cite{ref:ChazelleFr862}, we can build such a data structure $D_v$
of size $O(|Q_v|)$ in $O(|Q_v|\log |Q_v|)$ time that can first compute
$q_v$ in $O(\log|Q_v|)$ time and then report other points of $Q'_v$
in the decreasing $z$-coordinate order in constant time each.
Since the size of $D_v$ is $|Q_v|$, the
size of the tree $T_M$ is $O(n\log n)$, and $T_M$ can be
built in $O(n\log^2 n)$ time.


Using $T_M$, we find the set $Q_I$ as follows.
We first determine the set $V$ of $O(\log n)$ nodes of $T_M$ such that
$\bigcup_{v\in V}Q_v$ consists of all points of $Q$ whose
$x$-coordinates less than $x_l$ and no point of $V$ is an ancestor of another point of $V$.
Then, for each node $v\in V$, by using $D_v$, we find $q_v$, i.e., the point of
$Q_v$ with the largest $z$-coordinate, and
insert $q_v$ into a heap $H$, where the key of each point is its
$z$-coordinate.
We find the point in $H$ with the largest key and remove
it from $H$; denote the above point by $q_1'$.
Clearly, $q'_1$ is the point of $Q_I$ with the largest $z$-coordinate.
Suppose $q'_1$ is in a node $v\in V$. We proceed on $D_v$ to find the
point of $Q_v$ with the second largest $z$-coordinate and insert it into $H$.
Now the point of $H$ with the largest key is the point of $Q_I$
with the second largest $z$-coordinate. We repeat the above procedure until we
find all $k$ points of $Q_I$.

To analyze the query time, finding the set $V$ takes $O(\log n)$
time. For each node $v\in V$, the search for $q_v$ on
$D_v$ takes $O(\log n)$ time plus the time linear to the number of
points of $D_v$ in $Q_I$. Hence, the total time for searching $q_v$
for all vertices $v\in V$ is $O(\log^2 n)$ time. Similarly as before,
we can remove a logarithmic factor by building a fractional
cascading structure on the nodes of $T_M$ for searching such points
$q_v$'s, in exactly the same way as in \cite{ref:ChazelleFi86}.
With the help of the fractional cascading structure, all these
$q_v$'s for $v\in V$ can be found in $O(\log n)$ time. Note that building the fractional cascading structure does not change the
construction time and the size of $T_M$
asymptotically \cite{ref:ChazelleFi86}.
In addition, building the heap $H$ initially takes $O(\log n)$ time.
In the entire algorithm there are $O(k)$ operations on $H$ in total
and the size of $H$ is always bounded by $O(k+\log n)$. Therefore, the
running time  of the query algorithm is
$O(\log n+k\log(k+\log n))$, which is $O(\log n+k\log k)$ by Lemma
\ref{lem:time}.

Using similar techniques as in Lemma \ref{lem:100}, we obtain the following result.

\begin{lemma}\label{lem:110}
If $k=\Omega(\log n\log \log n)$, we can compute the top-$k$ points in $P_M$ in $O(k)$ time.
\end{lemma}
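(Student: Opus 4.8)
The plan is to mimic almost verbatim the argument used for $P_L$ in Lemma~\ref{lem:100}, specialized to the simpler situation here in which each auxiliary structure $D_v$ already enumerates points in exactly the order we want. Recall that for a query interval $I$ we compute a set $V$ of $O(\log n)$ nodes of $T_M$ with $\bigcup_{v\in V}Q_v$ equal to the set of points of $Q$ whose $x$-coordinate is less than $x_l$ (and no node of $V$ an ancestor of another), and that the top-$k$ points of $P_M$ correspond to the $k$ points of largest $z$-coordinate in $\bigcup_{v\in V}Q'_v$, where $Q'_v\subseteq Q_v$ consists of the points whose $y$-coordinate exceeds $x_r$. For each $v\in V$ let $C(v)$ denote the sequence of $z$-coordinates of the points of $Q'_v$ sorted in decreasing order; by the property of the hive-graph structure $D_v$ of Chazelle and Guibas, once $q_v$ (the first element of $C(v)$) is known we can read off the subsequent elements of $C(v)$ one at a time in $O(1)$ time each, and using the fractional-cascading structure on $T_M$ already described, all the $q_v$, $v\in V$, are obtained in $O(\log n)$ total time. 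Thus the task reduces to the following: given $|V|=O(\log n)$ decreasingly-sorted sequences that support only sequential access (each next element in $O(1)$ time), report their $k$ largest elements in $O(k)$ time under the hypothesis $k=\Omega(\log n\log\log n)$.

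First I would run exactly the batched-heap selection procedure from the first main step of the proof of Lemma~\ref{lem:100}, with batch size $h=\lceil\log\log n\rceil$: maintain a max-heap containing the $h$-th, then $(2h)$-th, etc.\ element of each sequence, repeatedly extract the maximum, append the corresponding block of $h$ elements to a set $S$, and stop once $|S|\ge k$; then, letting $S'$ be the union of $S$ with the ``red elements'' lying, in each sequence, between $S$ and that sequence's current heap entry, the $k$ largest elements of $\bigcup_{v\in V}C(v)$ are guaranteed to lie among the $O(k)$ elements of $S'$, by the same correctness argument as in Lemma~\ref{lem:100}. A linear-time selection on $S'$ then yields the $k$-th largest value $\tau$. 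Finally I would scan each $C(v)$ from its front, collecting every element $\ge\tau$ until one smaller than $\tau$ is encountered; this reports exactly the top-$k$ values, and by tracking the originating points, the top-$k$ points of $P_M$, in $O(k+|V|)=O(k)$ time. Note that, unlike in Lemma~\ref{lem:100}, no ``second main step'' is needed here: each $D_v$ enumerates the relevant points of $Q'_v$ directly in decreasing $z$-order, so there is no further layer of sorted arrays ($A^r_i,A^l_i$) left to resolve.

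The running-time bookkeeping is routine and identical in spirit to Lemma~\ref{lem:100}: the heap has size at most $|V|=O(\log n)$, there are $O(k/h)$ extract-max operations and $O(k/h+|V|)$ insertions, so all heap operations cost $O((k/h+\log n)\log\log n)=O(k+\log n\log\log n)=O(k)$; the number of red elements is $O(h\,|V|)=O(\log n\log\log n)=O(k)$, hence $|S'|=O(k)$, the selection costs $O(k)$, and only $O(k)$ elements of the sequences $C(v)$ are ever generated during the whole algorithm. The one point requiring care — and the only genuine obstacle — is the same as in Lemma~\ref{lem:100}: the sequences $C(v)$ do not admit $O(1)$-time access by arbitrary index, which is precisely why Frederickson's sorted-array selection algorithm used in Lemma~\ref{lem:new60} does not apply and the batched-heap detour is necessary; confirming that $S'$ indeed captures all of the true top-$k$ elements is the crux, but this is exactly the claim already established inside the proof of Lemma~\ref{lem:100}, so it carries over unchanged.
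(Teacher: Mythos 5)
Your proposal is correct and takes essentially the same approach as the paper: define one decreasingly-sorted sequence of $z$-coordinates per node $v\in V$ (accessible sequentially via $D_v$ after the $O(\log n)$ fractional-cascading lookup), and then apply the batched-heap selection procedure that constitutes the ``first main step'' of Lemma~\ref{lem:100}, observing that no second step is needed since the $D_v$ already enumerate in the required order. The paper states this in two sentences and ``omits the details''; you have simply filled them in, correctly and in the intended way.
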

\begin{proof}
Consider any point $v$ in $V$, which is associated with a set $Q_v$ and a data structure $D_v$. Define an array $B(v)$ of size $k$ as follows: for each $1\leq i\leq k$, the $i$-th element of $B(v)$ is the $i$-th largest $z$-coordinate of the points of $Q_v$. As discussed above, in $O(\log n)$ time we can obtain the first elements of $B(v)$ for all $v\in V$, and after that, we can obtain the next element of each array $B(v)$ in constant time each by using the data structure $D_v$. Our goal is to find the point set $Q_I$.

Let $B(V)=\cup_{v\in V}B(v)$.
An easy observation is that the $z$-coordinates of the points of $Q_I$ are exactly the  largest $k$ elements in $B(V)$. Since $k=\Omega(\log n\log \log n)$, computing the  largest $k$ elements of $B(V)$ can be done in $O(k)$ time in the same way as the first
main step of the algorithm in the proof of Lemma \ref{lem:100}, and we omit the details.

The lemma is thus proved.
\qed
\end{proof}

We summarize our results for the top-$k$ queries below.

\begin{theorem}\label{theo:uniformtopk}
For the uniform case, we can build in $O(n\log^2 n)$ time
an $O(n\log n)$ size data structure on $P$ that can answer each top-$k$ query
in $O(k)$ time if $k=\Omega(\log n\log\log n)$ and $O(k\log k+\log n)$ time otherwise.\qed
\end{theorem}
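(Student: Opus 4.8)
The plan is to assemble the theorem from the three component structures developed above—$T_L$, its mirror image $T_R$, and $T_M$—together with a final merging step. First I would record the preprocessing: building $T_L$ takes $O(n\log^2 n)$ time and $O(n\log n)$ space (a complete binary search tree over the left endpoints of the intervals of $P$, each internal node $v$ carrying a half-plane range reporting structure $D_v$ on $P_v$, plus the fractional cascading structure on the first layers of all the $D_v$'s); $T_R$ is identical after exchanging the roles of left and right endpoints; and $T_M$ is the complete binary search tree over the $x$-coordinates of the points of $Q$, each node $v$ carrying the hive-graph-based structure $D_v$, again of total size $O(n\log n)$ and construction time $O(n\log^2 n)$. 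Summing the three gives the claimed $O(n\log^2 n, n\log n)$ complexity.

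Next I would argue correctness of the query. The key observation is that $P = P_L\cup P_R\cup P_M$: if $x_l>x_l(p)$ and $x_r<x_r(p)$ then $p\in P_M$, and in every other case either $x_l\le x_l(p)$ (so $p\in P_L$) or $x_r\ge x_r(p)$ (so $p\in P_R$). Hence the $k$ points of $P$ with the largest $I$-probabilities lie among the union of the top-$k$ points of $P_L$, of $P_R$, and of $P_M$; I would compute these three sets using $T_L$, $T_R$, and $T_M$ as described above, form the pooled set of at most $3k$ points (discarding duplicates, since the three sets may overlap), and extract the overall top-$k$ with the classical linear-time selection algorithm in $O(k)$ additional time. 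Each sub-query ranks by the correct key, as established above: L-type points by $\Pr[p\in(-\infty,x_r]]$, R-type points by $\Pr[p\in[x_l,+\infty)]$, and M-type points by $z$-coordinate in $Q$, which orders them by the $I$-probability $(x_r-x_l)/(x_r(p)-x_l(p))$.

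Finally I would bound the query time by a case analysis on $k$. For the regime $k=\Omega(\log n\log\log n)$, Lemmas \ref{lem:100} and \ref{lem:110} give $O(k)$ time for the top-$k$ points of $P_L$/$P_R$ and of $P_M$ respectively, and the final selection adds $O(k)$, for a total of $O(k)$. For smaller $k$, the discussion preceding the theorem bounds the cost of each of the three sub-queries by $O(\log n + k\log(k+\log n))$, which Lemma \ref{lem:time} simplifies to $O(\log n + k\log k)$; the pooled selection is dominated by this, so the overall bound is $O(k\log k+\log n)$.

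I do not expect a genuine obstacle here, since the substantive work lies in the already-proven lemmas. The only points needing care are verifying the covering $P=P_L\cup P_R\cup P_M$ (so that no point is missed) while allowing the three sets to overlap (so that the merge must tolerate duplicates), and checking that the three sub-query time bounds compose without an extra logarithmic factor—which they do, because each sub-structure already absorbs its binary search into a single $O(\log n)$ term via fractional cascading on the $T_L$/$T_R$/$T_M$ tree rather than incurring $O(\log^2 n)$.
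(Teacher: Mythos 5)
Your proposal is correct and follows the paper's approach exactly: it assembles the theorem from the $T_L$, $T_R$, $T_M$ structures built in Section 3.2.2, cites Lemmas \ref{lem:time}, \ref{lem:100}, and \ref{lem:110} for the two query-time regimes, and finishes with a linear-time selection over the $O(k)$ pooled candidates. The observations you flag for care (the covering $P=P_L\cup P_R\cup P_M$ with possible $P_L$/$P_R$ overlap, and the absorption of binary searches via fractional cascading) are precisely the points the paper relies on implicitly.
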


\subsubsection{Threshold Queries}
\label{sec:uniformthreshold}

To answer the threshold queries, we build the same data structure as in
Theorem \ref{theo:uniformtopk}, i.e., the three trees $T_L$, $T_M$, and $T_R$.
The tree $T_L$ is used for finding the points $p$ of $P_L$ with $\Pr[p\in
I]\geq \tau$; $T_R$ is for finding the points $p$ of $P_R$ with $\Pr[p\in
I]\geq \tau$; $T_M$ is for finding the points $p$ of $P_M$ with $\Pr[p\in
I]\geq \tau$. The three trees $T_L$, $T_R$, and $T_M$ are
exactly the same as those for Theorem \ref{theo:uniformtopk}. We can compute them in $O(n\log^2 n)$ time and $O(n\log n)$ space.

Below, we discuss the query algorithms on the three trees.
Let $m_L$, $m_R$, and $m_M$ be the number of points in $P_L$, $P_R$,
and $P_M$ whose $I$-probabilities are at least $\tau$, respectively.
Hence, $m=m_L+m_R+m_M$.

To find the points $p$ of $P_L$ with $\Pr[p\in
I]\geq \tau$, we first determine the set $V$ of $O(\log n)$
nodes of $T_L$ such that $\bigcup_{v\in V}P_v=P_L$ and no node of $V$ is an ancestor of another node of $V$. Recall that each node $v$ of $T_L$ is associated with a half-plane range reporting data structure
$D_v$.  For each node $v\in V$, by using $D_v$, we can find the points $p$
of $P_v$ with $\Pr[p\in I]\geq \tau$ in
$O(\log n+m_v)$ time, where $m_v$ is the output size.
Note that $P_v$ and $P_u$ are disjoint for any two nodes $v$ and $u$
of $V$. Hence, $\sum_{v\in V}m_v=m_L$.
As there are $O(\log n)$ nodes in $V$, it takes $O(\log^2 n+m_L)$ time
to find all points $p$ of $P_L$ with $\Pr[p\in I]\geq \tau$, and again the
$O(\log^2 n)$ time factor can be reduced to $O(\log n)$ by using
fractional cascading \cite{ref:ChazelleFr86}. Hence, the total query time is $O(\log
n+m_L)$.

We can use the similar approach to find all points $p$ of $P_R$
with $\Pr[p\in I]\geq \tau$ in $O(\log n +m_R)$ time,
by using $T_R$.  We omit the details.

Finally, we find the points $p$ of $P_M$ with $\Pr[p\in I]\geq \tau$,
by using $T_M$. As in Section \ref{sec:uniformtopk}, for each point
$p\in P$, we define in the 3-D space a point $(x_l(p),x_r(p),1/(x_r(p)-x_l(p))$.
Let $Q$ be the set of all $n$ points defined
above.  Let the interval $I=[x_l,x_r]$ and $\tau$ together define
an unbounded 3-D box query $B_I=(-\infty,x_l)\times(x_r,+\infty)\times[\tau,+\infty)$.
Let $Q_I=Q\cap B_I$. Hence, the points $p$ of $P_M$ with $\Pr[p\in I]\geq \tau$ correspond to the
points of $Q_I$, and thus $m_M=|Q_I|$.

By using the tree $T_M$, we can find $Q_I$ in $O(\log n+m_M)$ time,
as follows. We first determine the set $V$ of $O(\log
n)$ nodes of $T_M$ such that $\bigcup_{v\in V}Q_v$ consists of all points
of $Q$ whose $x$-coordinates are less than $x_l$ and no node of $V$ is an ancestor of another node of $V$. Consider any node $v\in V$. Let $Q_v'$ be the points of $Q_v$ whose
$y$-coordinates are larger than $x_r$, and $q_v$ be the point $Q_v'$ with the largest $z$-coordinate.  Recall that after $q_v$ is found $D_v$ can report other points of $Q'_v$ in the decreasing $z$-coordinate order in constant time each. Hence after $q_v$ is known we can report the points of $Q_v$ in the query box $B_I$ in time linear to the output size. Again, with the help of fractional cascading, the nodes $q_v$ for all $v\in V$ can be found in $O(\log n)$ time.
Therefore, we can find all points of $Q_I$ in $O(\log n+m_M)$ time.
In other words, with $T_M$, we can find the points of
of $P_M$ whose $I$-probabilities at least $\tau$ in $O(\log n+m_M)$ time.

Hence, we obtain Theorem \ref{theo:uniformthreshold}.

\begin{theorem}\label{theo:uniformthreshold}
For the uniform case, we can build in $O(n\log^2 n)$ time
an $O(n\log n)$ size
data structure that can answer each threshold query
in $O(m+\log n)$ time, where $m$ is the output size of the query.
\end{theorem}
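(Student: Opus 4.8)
The plan is to reuse the three trees $T_L$, $T_M$, and $T_R$ already constructed for the top-$k$ data structure of Theorem~\ref{theo:uniformtopk}; since those trees have the claimed complexity $O(n\log^2 n,\, n\log n)$, no new preprocessing is required, and it only remains to describe how to answer a threshold query on each of them. Write $m_L$, $m_R$, $m_M$ for the number of points of $P_L$, $P_R$, $P_M$ whose $I$-probability is at least $\tau$, so that $m=m_L+m_R+m_M$; I will treat the three sets separately, and it suffices to report each in $O(\log n+m_L)$, $O(\log n+m_R)$, and $O(\log n+m_M)$ time, respectively.

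For $P_L$, first locate in $O(\log n)$ time the canonical set $V$ of $O(\log n)$ nodes of $T_L$ with $\bigcup_{v\in V}P_v=P_L$ and no node of $V$ an ancestor of another. Each $v\in V$ carries the half-plane range reporting structure $D_v$ built on the lines obtained from the cdfs of $P_v$, so, exactly as in the threshold procedure at the end of Section~\ref{sec:unbounded}, one can report in $O(\log n+m_v)$ time all lines whose intersection with the vertical line $L$ through $x_r$ has height at least $\tau$; since every point of $P_L$ has interval left endpoint $\geq x_l$, this height is precisely its $I$-probability, so these lines correspond exactly to the points of $P_v$ with $I$-probability $\geq\tau$. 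Summing over $V$ gives $O(\log^2 n+m_L)$, and the $O(\log^2 n)$ term drops to $O(\log n)$ by the fractional cascading structure~\cite{ref:ChazelleFr86} threaded through the first layers of the $D_v$'s, which is already part of $T_L$ and does not change its asymptotic cost. The set $P_R$ is handled symmetrically using $T_R$.

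The set $P_M$ requires the most care, because a bounded query on $P_M$ does not reduce to an unbounded-interval query. Following Section~\ref{sec:uniformtopk}, map each $p\in P$ to the $3$-D point $(x_l(p),x_r(p),1/(x_r(p)-x_l(p)))$, let $Q$ be the resulting point set, and observe (the analogue of Lemma~\ref{obser:10}) that the points of $P_M$ with $I$-probability $\geq\tau$ correspond exactly to $Q\cap B_I$ for the unbounded box $B_I=(-\infty,x_l)\times(x_r,+\infty)\times[\tau,+\infty)$. Using $T_M$, take the canonical set $V$ of $O(\log n)$ nodes covering all points of $Q$ with $x$-coordinate below $x_l$; for each $v\in V$ the auxiliary structure $D_v$ --- the hive graph of Chazelle and Guibas~\cite{ref:ChazelleFr862} --- produces in $O(\log n)$ time the point $q_v$ of $Q_v$ with $y$-coordinate $>x_r$ and maximum $z$-coordinate, and thereafter reports the remaining such points in decreasing $z$-order in $O(1)$ time each, so once $q_v$ is known the points of $Q_v\cap B_I$ can be enumerated in time linear in their number. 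Threading a fractional cascading structure over the nodes of $T_M$ (exactly as in~\cite{ref:ChazelleFi86}) lets all the $q_v$'s for $v\in V$ be found in $O(\log n)$ total, which yields $O(\log n+m_M)$ for $P_M$.

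Combining the three parts gives query time $O(\log n+m_L)+O(\log n+m_R)+O(\log n+m_M)=O(\log n+m)$, while the preprocessing and space bounds are inherited verbatim from Theorem~\ref{theo:uniformtopk}, which proves the theorem. The only genuinely delicate point is the $P_M$ reduction together with the hive-graph machinery needed to report the points inside the three-dimensional box at constant cost per output; the treatment of $P_L$ and $P_R$ is a routine combination of the canonical-node decomposition and fractional cascading already developed for the unbounded case.
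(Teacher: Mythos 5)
Your proposal is correct and follows essentially the same route as the paper's proof: reuse the trees $T_L$, $T_R$, $T_M$ from the top-$k$ data structure, decompose the query by point type, answer the $P_L$ and $P_R$ parts via the half-plane range-reporting structures at the $O(\log n)$ canonical nodes with fractional cascading, and answer the $P_M$ part via the $3$-D lifting and hive-graph structures on $T_M$. The only detail you spell out a bit more explicitly than the paper is the observation that for points in $P_L$ the intersection height with $L$ coincides with the $I$-probability, which the paper established earlier in its top-$1$ discussion.
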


\section{The Histogram Distribution}
\label{sec:nonuniform}

In this section, we present our data structures for the histogram case.
In the histogram case, the cdf of each point $p\in P$ has $c$
pieces; recall that we assumed $c$ is a constant, and thus
$\calF$ is still a set of $O(n)$ line segments.

We first discuss our data structures for the unbounded case in Section \ref{sec:unboundedhist} and then present our results for the bounded case in Section \ref{sec:boundedhist}.

\subsection{Queries with Unbounded Intervals}
\label{sec:unboundedhist}

Again, we assume w.l.o.g. that $x_l=-\infty$. Recall that
$L$ is the vertical line with $x$-coordinate $x_r$.
Note that Lemmas \ref{lem:10} and \ref{lem:20} are still applicable.

\subsubsection{Top-1 Queries}
For the top-1 queries, as in Section \ref{sec:unbounded} it is
sufficient to maintain the upper envelope of $\calF$. Although
$\calF$ now is a set of line segments, its upper envelope is still of
size $O(n)$ and can be computed in $O(n\log n)$ time
\cite{ref:AgarwalRe90}. Given the query interval $I$,
we can compute in $O(\log n)$ time
the cdf of $\calF$ whose intersection with $L$ is on the
upper envelope of $\calF$.

\begin{theorem}\label{theo:nonuniformtop1}
In the histogram case, we can build in $O(n\log n)$ time an $O(n)$ size
data structure on $P$ that can answer each top-1 query with an unbounded query
interval in $O(\log n)$ time.
\end{theorem}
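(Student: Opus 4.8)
The plan is to reduce the top-$1$ query to a single point-location query on the upper envelope of $\calF$, exactly as in the uniform unbounded case of Section \ref{sec:unbounded}, the only difference being that $\calF$ is now a set of $O(n)$ line segments rather than lines. Assume without loss of generality that $x_l=-\infty$ (the case $x_r=+\infty$ is symmetric). By Lemma \ref{lem:20}, the point $p_1$ of $P$ with the largest $I$-probability is the one whose cdf $F_{p_1}$ has the highest intersection with the vertical line $L$ through $x_r$; equivalently, $F_{p_1}$ is the cdf attaining the upper envelope of $\calF$ at $x=x_r$.

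For the preprocessing, I would compute and store the upper envelope $U_\calF$ of the $O(n)$ segments of $\calF$. Since $c$ is a constant, $U_\calF$ has complexity $O(n)$ and can be computed in $O(n\log n)$ time by the algorithm of \cite{ref:AgarwalRe90}; we store it as an array of its $O(n)$ breakpoints sorted by $x$-coordinate, where each edge of $U_\calF$ is tagged with the point of $P$ whose cdf contributes that edge. This takes $O(n)$ space and $O(n\log n)$ time overall.

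To answer a query, given $x_r$ I would binary-search the sorted array of breakpoints of $U_\calF$ to locate the edge of $U_\calF$ that spans $x_r$ and return the point of $P$ tagged to that edge; by the observation above this point is $p_1$. (If $x_r$ lies left of the first breakpoint or right of the last breakpoint, the answer is read off from the corresponding unbounded edge of $U_\calF$.) The query time is thus $O(\log n)$.

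There is no real obstacle beyond invoking the right envelope bound: the only point requiring care is that $\calF$ now consists of segments rather than lines, so one must appeal to the fact (for constant $c$) that the upper-envelope complexity and construction time remain $O(n)$ and $O(n\log n)$, respectively, which is precisely what \cite{ref:AgarwalRe90} provides; the reduction via Lemma \ref{lem:20} and the final binary search are identical to those already used in Section \ref{sec:unbounded}.
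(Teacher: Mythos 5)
Your proposal matches the paper's proof essentially verbatim: both reduce via Lemma~\ref{lem:20} to maintaining the upper envelope of $\calF$, invoke \cite{ref:AgarwalRe90} for the $O(n)$ envelope size and $O(n\log n)$ construction time, and answer a query by a binary search along the envelope at $x=x_r$. No gaps; this is the same argument.
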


\subsubsection{Threshold Queries}
For the threshold query, as discussed in Section \ref{sec:pre} we only
need to find the cdfs of $\calF$ whose intersections with $L$
have $y$-coordinates at least $\tau$. Let $q_I$ be the point
$(x_r,\tau)$ on $L$. A line segment is {\em vertically above} $q_I$ if
the segment intersects $L$ and the intersection is at least as high
as $q_I$.
Hence, to answer the threshold query on $I$,
it is sufficient to find the segments of $\calF$ that are
vertically above $q_I$.
Agarwal {\em et al.} \cite{ref:AgarwalIn09} gave the following
result on the {\em segment-below-point queries}.
For a set $S$ of $O(n)$ line segments in the plane, a data structure of
$O(n)$ size can be computed in $O(n\log n)$ time that can report the
segments of $S$ vertically {\em below} a query point $q$ in $O(m'+\log
n)$  time, where $m'$ is the output size.
In our problem, we need a data structure on $\calF$ to solve the {\em
segments-above-point queries}, which can be solved by
using the same approach as \cite{ref:AgarwalIn09}.
Therefore,  we can build in $O(n\log n)$ time an $O(n)$ data structure
on $P$ that can answer each threshold query with an
unbounded query
interval in $O(m+\log n)$ time.

\begin{theorem}\label{theo:nonuniformthreshold}
In the histogram case, we can build in $O(n\log n)$ time an $O(n)$ size
data structure on $P$ that can answer each threshold query with an
unbounded query
interval in $O(m+\log n)$ time, where $m$ is the output size of the query.
\end{theorem}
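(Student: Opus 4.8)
The plan is to reduce the threshold query with an unbounded interval to a \emph{segments-above-point} query on the set $\calF$ of cdf chains, and then borrow, after a trivial reflection, the segment-below-point structure of Agarwal et al.~\cite{ref:AgarwalIn09}. Fixing the query interval $I=[x_l,x_r]$ and taking $x_l=-\infty$ w.l.o.g., Lemma~\ref{lem:10} gives $\Pr[p\in I]=F_p(x_r)$ for every $p\in P$, and the discussion following Lemma~\ref{lem:20} shows that answering the threshold query amounts to reporting every cdf of $\calF$ whose intersection with the vertical line $L$ at $x$-coordinate $x_r$ has height at least $\tau$. Introducing the query point $q_I=(x_r,\tau)$, a cdf $F_p$ must be reported precisely when the chain of (at most $c$) segments of $F_p$ meets $L$ at or above $q_I$, i.e.\ when $F_p$ is vertically above $q_I$ in the sense defined above. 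Since $c$ is a constant, $\calF$ is a set of $O(n)$ segments (two of which, per point, are the semi-infinite horizontal rays of the cdf at heights $0$ and $1$), so the task is exactly: preprocess $O(n)$ segments for segments-above-point reporting.

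Next I would invoke the result of Agarwal et al.~\cite{ref:AgarwalIn09} for the symmetric \emph{segments-below-point} problem, which gives $O(n)$ space, $O(n\log n)$ preprocessing, and $O(\log n+m')$ query time, and observe that reflecting the plane about a horizontal line (negating every $y$-coordinate of the segments and negating $\tau$ in the query point) converts "vertically above" into "vertically below" while preserving all combinatorial structure. Building their data structure on the reflected copies of the segments of $\calF$ and querying with the reflected point $q_I$ therefore reports, in $O(\log n+m)$ time and with the same $O(n\log n)$ build time and $O(n)$ space, exactly the segments of $\calF$ that are vertically above $q_I$. By the reduction, the points $p$ whose chains are reported are precisely those with $\Pr[p\in I]\ge\tau$; since $L$ meets each chain at a single point, at most $O(1)$ segments per such $p$ are reported, so the segment output size is $\Theta(m)$ where $m$ is the number of reported points.

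Finally I would assemble the bounds: preprocessing $O(n\log n)$, space $O(n)$, query $O(\log n+m)$, which is the stated claim, and note that the case $x_r=+\infty$ is handled by a mirror-image structure built with the roles of left and right reversed and queried analogously. I expect the only genuinely delicate point to be checking that the adaptation of~\cite{ref:AgarwalIn09} is legitimate and that the semi-infinite horizontal pieces of each cdf (the flat part at height $0$ for $x$ small and the flat part at height $1$ for $x$ large) cause no difficulty; the height-$1$ ray is in fact benign, since a cdf that has already reached $1$ by $x_r$ should be reported whenever $\tau\le 1$, which is exactly what "vertically above $q_I$" records, and the height-$0$ piece never contributes a false positive for $\tau>0$. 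Everything else is routine bookkeeping.
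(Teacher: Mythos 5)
Your proposal is correct and follows essentially the same route as the paper: reduce the unbounded threshold query, via Lemmas~\ref{lem:10} and~\ref{lem:20}, to reporting the segments of $\calF$ vertically above $q_I=(x_r,\tau)$, then invoke the segment-below-point structure of Agarwal et al.~\cite{ref:AgarwalIn09}. The only cosmetic difference is that where the paper says the above-point variant ``can be solved by using the same approach as~\cite{ref:AgarwalIn09},'' you instead make the adaptation concrete by reflecting the plane about a horizontal line, which is a clean and equivalent way to justify the same bounds.
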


\subsubsection{Top-k Queries}
For the top-$k$ queries, 
we only need to find the $k$ segments of $\calF$ whose
intersections with $L$ are the highest. To this end, we can slightly
modify the data structure for the segment-below-point queries given in
\cite{ref:AgarwalIn09}.

The data structure in \cite{ref:AgarwalIn09} is a binary tree structure that
maintains a number of sets of lines (each such line contains a
segment of $\calF$). For each such set of lines,
a half-plane range reporting data structure similar to that in Section
\ref{sec:pre} is built, where the
lower envelopes (instead of the upper envelopes as we discussed in Section
\ref{sec:pre}) of the layers of the lines are maintained. For our
purpose, we replace it by our half-plane range reporting data
structure in Section \ref{sec:pre} (i.e., maintain the upper
envelopes). With this modification,
we can answer the segments-above-point
queries in the following way.

Consider a query point $q=(x_r,-\infty)$
(i.e., the lower infinite endpoint of $L$), and suppose we want
to find the segments of $\calF$ vertically above $q$, which are
also the segments intersecting $L$. By using the data structure
\cite{ref:AgarwalIn09} modified as above, the query algorithm works as
follows. First, with the help of fractional cascading,
in $O(\log n)$ time,
the query algorithm will find $O(\log n)$ half-plane range reporting
data structures such that for each such data structure $D$ the segment
intersecting $L$ on the upper envelope of $D$ is
known. Second, for each such half-plane range reporting data structure $D$,
from the above known segment intersecting $L$, by using the fractional
cascading and walking on the upper envelopes of the layers of $D$, we can
report all lines of $D$ higher than $q$ in constant time each. The first step
takes $O(\log n)$ time, and the second step takes $O(m')$ time, where
$m'$ is the total output size.

For our problem, we only need to report the highest $k$ segments of
$\calF$ that are vertically above $q$. To this end,
we will modify the query algorithm such that
the segments of $\calF$ vertically above $q$ will be reported in order from
top to bottom, and once $k$ segments are reported, we will terminate the
algorithm. We use a heap $H$ in a similar way as in Section
\ref{sec:unbounded} for the
uniform case. Specifically, in the first step, we find the $O(\log n)$
half-plane range reporting data structures, and for each such data
structure $H$, the highest segment of $D$
intersecting $L$ is known. In the second step, we
build a heap $H$ on these $O(\log n)$ segments where the keys are the
$y$-coordinates of their intersections with $L$. The segment in
$H$ with the largest key must be the highest segment of $\calF$
intersecting $L$. We remove the segment from $H$, and let $D$
be the half-plane range reporting
data structure that contains the segment. As in Section
\ref{sec:unbounded} for the uniform
case, we determine in constant time at most three segments from $D$ and
insert them to $H$. Now the segment of $H$
with the largest key is the second highest segment of $\calF$
intersecting $L$. We repeat the above procedure until we have
reported $k$ segments.

To analyze the running time, the first step takes $O(\log n)$ time.
In the second step, we have $O(k)$ operations on $H$ and the segments
that are inserted to $H$ can be found in constant time each by the
range-reporting data structures. The size of the heap
$H$ in the entire query algorithm is $O(k+\log n)$.
Hence, the running
time of the query algorithm is $O(k\log(k+\log n)+\log n)$, which is
$O(k\log k+\log n)$ by Lemma \ref{lem:time}.

Similarly, if $k=\Omega(\log n\log\log n)$, we can answer the top-$k$ query in $O(k)$ time, as follows. As discussed above, in $O(\log n)$ time we find the $O(\log n)$
half-plane range reporting data structures, and for each such data
structure $D$, the highest segment of $D$ intersecting $L$ is known. The answer to the top-$k$ query is the highest $k$ intersections of $L$ and the lines in these $O(\log n)$ half-plane range reporting data structures. This is exactly the same situation as in Lemma \ref{lem:100}, where we also have $O(\log n)$ half-plane range reporting data structures. Hence, the algorithm in  Lemma \ref{lem:100} is applicable here, which runs in $O(k)$ time.

In summary, we obtain the following results.

\begin{theorem}\label{theo:nonuniformtopk}
We can build in $O(n\log n)$ time an $O(n)$ size
data structure on $P$ that can answer each top-$k$ query with an unbounded query
interval in $O(k)$ time if $k=\Omega(\log n\log\log n)$ and $O(k\log k+\log n)$ time otherwise.
\end{theorem}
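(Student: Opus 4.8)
The plan is to reduce the query, exactly as in Section \ref{sec:unbounded}, to finding the $k$ segments of $\calF$ whose intersections with the vertical line $L$ through $x_r$ are highest; this reduction is Lemma \ref{lem:20}, which applies because $I$ is unbounded. For the data structure I would take the segment-below-point structure of Agarwal {\em et al.} \cite{ref:AgarwalIn09} — a binary tree over $\calF$ that attaches to certain nodes a half-plane range reporting structure of the kind in Section \ref{sec:pre} — and change only the envelope orientation, maintaining the {\em upper} envelopes of the layers instead of the lower ones. This turns it into a segments-{\em above}-point structure; since nothing but the orientation changes, the $O(n)$ space and $O(n\log n)$ construction time of \cite{ref:AgarwalIn09} carry over verbatim, as does the fractional cascading linking the first layers across the $O(\log n)$ relevant node structures.

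For a query I would run the search of \cite{ref:AgarwalIn09} on the point $q=(x_r,-\infty)$: in $O(\log n)$ time it produces $O(\log n)$ half-plane range reporting structures $D$ together with, for each, the segment meeting $L$ on the upper envelope of the first layer of $D$. To emit the highest $k$ segments of $\calF$ in sorted order I keep a max-heap $H$ keyed by the height of the intersection with $L$, initialized with these $O(\log n)$ topmost segments. I repeatedly extract the maximum — the next segment in top-to-bottom order — identify the structure $D$ it came from, and insert into $H$ at most three new candidates from $D$ (the two envelope neighbours on the current layer and the representative of the next layer), exactly as in the uniform unbounded case; correctness is the same generalization of Lemma \ref{lem:40}. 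After $k$ extractions I stop. There are $O(k)$ heap operations, each inserted candidate is produced in $O(1)$ time by fractional cascading (Lemma \ref{lem:30}), and $|H|=O(k+\log n)$ throughout, so the time is $O(\log n+k\log(k+\log n))$, which Lemma \ref{lem:time} simplifies to $O(\log n+k\log k)$.

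When $k=\Omega(\log n\log\log n)$ I would instead observe that, after the initial $O(\log n)$-time search, I am in precisely the situation of Lemma \ref{lem:100}: the answer is the $k$ highest intersections of $L$ with the lines stored in $O(\log n)$ layered half-plane range reporting structures, where the layer representatives — and hence the implicitly defined sorted arrays running along the envelopes — can be produced one element at a time in $O(1)$ time each by the fractional cascading. The two-phase selection procedure of Lemma \ref{lem:100} therefore applies unchanged and reports the top $k$ segments in $O(k)$ time. Combining the two regimes yields the stated bounds.

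The main obstacle is not the algorithmic idea but two compatibility checks. First, one must verify that replacing lower envelopes by upper envelopes in the structure of \cite{ref:AgarwalIn09} leaves the fractional cascading intact, so that the ``$O(1)$ per subsequent element'' guarantee needed both by the heap argument and by Lemma \ref{lem:100} still holds. Second, one must confirm that Lemma \ref{lem:100}, although stated for the tree $T_L$, relies only on having $O(\log n)$ layered half-plane structures feeding sorted streams, so it can be invoked here as a black box. Once those points are granted, the time and space bounds follow as above.
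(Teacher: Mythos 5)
Your proposal matches the paper's proof essentially step for step: the same modification of the Agarwal et al.\ structure to maintain upper envelopes, the same heap-based extraction scheme feeding off the $O(\log n)$ half-plane range reporting structures with Lemma~\ref{lem:time} giving the $O(\log n + k\log k)$ bound, and the same invocation of Lemma~\ref{lem:100}'s batched-selection algorithm for the $k=\Omega(\log n\log\log n)$ regime. The compatibility checks you flag are the right ones and are handled exactly as you anticipate.
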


\subsection{Queries with Bounded Intervals}
\label{sec:boundedhist}

In this case, the query interval $I=[x_l,x_r]$ is bounded.

For this case, Agarwal {\em et al.} \cite{ref:AgarwalIn09} built a data structure of size
$O(n\log^2 n)$ in $O(n\log^3 n)$ expected time, which can answer each threshold query in $O(\log^3 n + m)$ time. We first briefly discuss this data structure (refer to Section 4 of \cite{ref:AgarwalIn09} for more details) because our data structures for top-1 and top-$k$ queries also use some of their techniques.

Agarwal {\em et al.} \cite{ref:AgarwalIn09} built a data structure (a binary search tree), denoted by $\calT$, which maintains a family of {\em canonical sets} of planes in 3D (defined by the uncertain points of $P$). Consider any query interval $I=[x_l,x_r]$ with a threshold value $\tau$. Let $q(I)$ be the point with coordinates $(x_l,x_r,\tau)$ in 3D, and let $L(I)$ be the line through $q$ and parallel to the $z$-axis. Using $\calT$, one can determine a family $F(I)$ of $O(\log^2 n)$ canonical sets of $\calT$ with the following property: Each uncertain point $p$ defines one and only one plane in $F(I)$ such that the $z$-coordinate of the intersection of the plane with $L(I)$ is the probability $\Pr[p\in I]$.
Note that the canonical sets of $F(I)$ are pairwise disjoint.

To answer the threshold query on $I$ and $\tau$, it is sufficient to report the planes in each canonical set of $F(I)$ that lie above the point $q(I)$. To this end, for each canonical set $S$ of $\calT$, Agarwal {\em et al.} \cite{ref:AgarwalIn09} constructed a halfspace range-reporting data structure given by Afshani and Chan \cite{ref:AfshaniOp09} on the planes in $S$ in $O(|S|)$ space and $O(|S|\log |S|)$ expected time, such that given any point $q$, one can report the planes of $S$ above $q$ in $O(\log |S|+M)$ time, where $M$ is the output size. In this way, because there are $O(\log^2 n)$ canonical sets in $F(I)$, the threshold query can be answered in $O(\log^3 n+ m)$ time. The total space of $\calT$ including the halfspace range-reporting data structures is $O(n\log^2 n)$ and $\calT$ can be built in $O(n\log^3 n)$ expected time.

\subsubsection{Top-1 Queries}

Consider the top-1 query on the above query interval $I$. To answer the query, it suffices to find the plane in $F(I)$ whose intersection with $L(I)$ is the highest. To this end, it is sufficient to know the intersection of $L(I)$ with the upper envelope of each canonical set of $F(I)$.
Therefore, for each canonical set $S$ of $\calT$, instead of constructing a halfspace range-reporting data structure, we compute the upper envelope of the planes of $S$ \cite{ref:deBergCo08} and build a point location data structure \cite{ref:EdelsbrunnerOp86,ref:KirkpatrickOp83} on the upper envelope, which can be done in $O(|S|)$ space and $O(|S|\log |S|)$ time. In this way, for each canonical set $S$ of $F(I)$, in $O(\log n)$ time we can determine the plane intersecting $L(I)$ in the upper envelope of $S$. Hence, the top-1 query can be answered in $O(\log^3 n)$ time since $F(I)$ has $O(\log^2 n)$ canonical sets.

Comparing with the original data structure in \cite{ref:AgarwalIn09}, since we spend $O(|S|)$ space and $O(|S|\log |S|)$ time on each canonical set $S$ of $\calT$, the entire data structure can be constructed in $O(n\log^2n)$ space and $O(n\log^3 n)$ (deterministic) time. We summarize the result for the top-1 queries in the follow theorem.

\begin{theorem}\label{theo:nonuniformtop1bounded}
In the histogram case, we can build in $O(n\log^3 n)$ time an $O(n\log^2)$ size
data structure on $P$ that can answer each top-1 query with a bounded query
interval in $O(\log^3 n)$ time.
\end{theorem}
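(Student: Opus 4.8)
The plan is to reuse the tree $\calT$ of Agarwal \emph{et al.} \cite{ref:AgarwalIn09} recalled above: for a bounded interval $I=[x_l,x_r]$ it yields in $O(\log^2 n)$ time a family $F(I)$ of $O(\log^2 n)$ pairwise-disjoint canonical sets of planes in $3$D such that every point $p\in P$ contributes exactly one plane to $\bigcup F(I)$, and the $z$-coordinate of the intersection of that plane with the vertical line $L(I)$ through $(x_l,x_r)$ equals $\Pr[p\in I]$ (for the top-$1$ query there is no threshold, so $L(I)$ is just the $z$-parallel line through $(x_l,x_r)$). Since $\Pr[p\in I]$ is exactly the height of $p$'s plane on $L(I)$, the top-$1$ point is the one whose plane is hit highest by $L(I)$; because the canonical sets of $F(I)$ are disjoint and together account for all of $P$, it suffices to find, for each $S\in F(I)$, the plane of $S$ attaining the maximum $z$ on $L(I)$, and then take the overall maximum of these $O(\log^2 n)$ candidates.

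The only change to the preprocessing is what auxiliary structure is attached to each canonical set $S$ of $\calT$. Instead of the Afshani--Chan halfspace range-reporting structure \cite{ref:AfshaniOp09} used in \cite{ref:AgarwalIn09}, I would compute the upper envelope of the $|S|$ planes of $S$ --- an $xy$-monotone polyhedral surface of complexity $O(|S|)$, computable in $O(|S|\log|S|)$ time \cite{ref:deBergCo08} --- and preprocess the $xy$-projection of its cells for planar point location \cite{ref:EdelsbrunnerOp86,ref:KirkpatrickOp83} in $O(|S|)$ space and $O(|S|\log|S|)$ time. Because the defining property of $\calT$ gives $\sum_S |S|=O(n\log^2 n)$ over all canonical sets, the total space stays $O(n\log^2 n)$; and since every per-set cost is now deterministic, $\sum_S |S|\log|S| = O(n\log^3 n)$ gives a deterministic $O(n\log^3 n)$ construction time, eliminating the randomization of \cite{ref:AfshaniOp09}.

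A query then proceeds as follows. First navigate $\calT$ in $O(\log^2 n)$ time to obtain $F(I)$. For each $S\in F(I)$, locate the planar point $(x_l,x_r)$ in the point-location structure for the upper envelope of $S$ in $O(\log|S|)=O(\log n)$ time; the cell found identifies the unique plane of $S$ of maximum height on $L(I)$, hence the point contributing to $S$ with the largest $I$-probability. Taking the maximum over the $O(\log^2 n)$ sets $S\in F(I)$ yields the global top-$1$ point. The query time is $O(\log^2 n)+O(\log^2 n)\cdot O(\log n)=O(\log^3 n)$.

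I do not anticipate a real obstacle: the construction of $\calT$ and its canonical-set decomposition are taken verbatim from \cite{ref:AgarwalIn09}, and both upper-envelope computation and point location are standard. The only points needing care are (i) verifying that the substitution of auxiliary structures does not affect the bound $|F(I)|=O(\log^2 n)$ or the $O(\log^2 n)$ navigation time of $\calT$ --- it does not, since these depend only on the tree $\calT$ itself and not on the per-node structures --- and (ii) checking that the per-set bounds $O(|S|)$ space and $O(|S|\log|S|)$ time sum to $O(n\log^2 n)$ and $O(n\log^3 n)$ respectively via the space bound of $\calT$; both are routine once the properties of $\calT$ are invoked.
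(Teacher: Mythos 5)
Your proposal matches the paper's proof essentially verbatim: both replace the Afshani--Chan halfspace range-reporting structure on each canonical set $S$ of the Agarwal \emph{et al.}\ tree $\calT$ with the upper envelope of the planes in $S$ plus a planar point-location structure on its $xy$-projection, yielding $O(|S|)$ space, $O(|S|\log|S|)$ deterministic build time, $O(\log n)$ per-set query time, and hence the stated $O(n\log^2 n)$ space, $O(n\log^3 n)$ deterministic preprocessing, and $O(\log^3 n)$ query bounds. Your added sanity checks on $|F(I)|$ and the summation over canonical sets are correct and merely make explicit what the paper leaves implicit.
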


\subsubsection{Top-$k$ Queries}

Consider the top-$k$ query on the query interval $I$. To answer the query, it suffices to find the $k$ planes in $F(I)$ whose intersections with $L(I)$ are the highest. To this end, for each canonical set $S$ of $\calT$, we build a {\em $t$-highest plane} data structure given by Afshani and Chan \cite{ref:AfshaniOp09} on the planes of $S$ in $O(|S|)$ space and $O(|S|\log |S|)$ expected time, such that given any integer $t$ and any query line $L$ parallel to the $z$-axis, the $t$ highest planes of $S$ at $L$ can be found in $O(\log |S|+t)$ time. Comparing with the original data structure in \cite{ref:AgarwalIn09}, since we spend asymptotically the same space and time on each canonical set of $\calT$, our data structure can be constructed in $O(n\log^2n)$ space and $O(n\log^3 n)$ expected time.

To answer the top-$k$ query on $I$, one straightforward way works as follows. For each canonical set $S$ of $F(I)$, by using the $t$-highest plane data structure with $t=k$, we compute the highest $k$ planes of $S$ at $L(I)$. Since there are $O(\log^2 n)$ canonical sets in $F(I)$, the above computes $O(k\log^2 n)$ planes, and among them the highest $k$ planes at $L(I)$ are the answer to the top-$k$ query. The query time is $O(\log^2 n(\log n+k))$. In the following, we present an improved query algorithm with time $O(\log^3 n+k)$.

In the following discussion, for simplicity, whenever we refer to the relative order the planes (e.g., highest, lowest, higher, lower), we refer to their intersections with the line $L(I)$. For example, by ``a plane is higher than another plane'', we mean that the first plane has a higher intersection with $L(I)$ than  the second plane. For ease of exposition, we assume the intersection points of the planes of $F(I)$ with $L(I)$ are distinct. Note that $F(I)$ is a family of canonical sets; but by slightly abusing the notation, when we say ``a plane of $F(I)$'', we really mean that the plane is in a canonical set of $F(I)$.

We make use of some idea from Lemma \ref{lem:100} although the details are quite different. Our algorithm has two steps: a main algorithm and a post-processing algorithm. We discuss the main algorithm first.

Let $f=|F(I)|$, and thus $f=O(\log^2 n)$. Let $S_1,S_2,\ldots, S_f$ be the canonical sets of $F(I)$. For each canonical set $S_i$, let $S_i^j$ denote the set of the highest $2^{j-1}\cdot \log n$ planes of $S_i$ for $j=1,2,\ldots$, and we let $S_i^j=\emptyset$ for $j=0$. For each $S_i$,
our main algorithm maintains a subset $S_i'\subseteq S_i$ and an integer $j(i)$, such that $S_i'=S_i^{j(i)}$. We also maintain a max-heap $H$ that contains the lowest plane in the subset $S_i'$ for each $S_i\in F(I)$. Hence, the size of $H$ is $O(\log^2 n)$. The ``keys'' of the planes in $H$ are the $z$-coordinates of their intersections with $L(I)$. In addition, our algorithm maintains an integer $r$, which is the size of a set $R$ of planes.
Before the main algorithm stops, $R=\cup_{i=1}^f S_i^{j(i)-1}$ (after the main algorithm stops, the definition of $R$ is slightly different; see the details below).
Note that our algorithm does not maintain $R$ explicitly, and we use $R$ only to argue the correctness of the algorithm.
During the main algorithm, $r$ will get increased, and the main algorithm stops once $r\geq k$ (at which moment we have identified a set of $O(\log^3 n+k)$ planes, and among them the highest $k$ planes are the answer to our top-$k$ query, which will be found later by the post-processing algorithm).

Initially, for each canonical set $S_i$ of $F(I)$, by using the $t$-highest plane data structure with $t=\log n$, we compute $S_i'=S_i^1$, and further we find the lowest plane in $S_i'$ and insert it into $H$; the above can be done in $O(\log n + |S_i'|)$ time, which is $O(|S_i'|)$ time due to $|S_i'|=|S_i^1|=\log n$. Also, initially we set $r=0$ ($R$ is implicitly set to $\emptyset$),
and set $j(i)=1$ for each $i$ with $1\leq i\leq f$.

Next, we do an ``extract-max'' operation on $H$ to find the highest plane in $H$ and remove it from $H$. Suppose the above plane is from a canonical set $S_i$ for some $i$. Then, we let $R=R\cup S_i'$ and set $r=r+|S_i'|$. Further, by using the $t$-highest plane data structure with $t=2\log n $, we compute $S_i'=S_i^2$, and then we find the lowest plane in $S_i'$ and insert it into $H$; again, the above can be done in $O(|S_i'|)$ time. Finally, we update $j(i)=2$.

In general, we do an extract-max operation on the current $H$ and suppose the removed plane is from a canonical set $S_i$ for some $i$. 
We let $R=R\cup S_i^{j(i)}\setminus S_i^{j(i)-1}$ (note that $S_i^{j(i)-1}\subseteq S_i^{j(i)}$, and thus this just adds those planes of $S_i^{j(i)}$ that are not in $S_i^{j(i)-1}$ to $R$), and set $r=r+|S_i^j|-|S_i^{j-1}|$. Again, we do not explicitly maintain $R$ but explicitly maintain $r$. If $r\geq k$, then we stop the main algorithm. Otherwise, by using the $t$-highest plane data structure with $t=2^{j(i)}\cdot \log n $, we compute $S_i'=S_i^{j(i)+1}$ (and the previous $S_i'$ is discarded), and further we find the lowest plane in $S_i^{j(i)+1}$ and insert it into $H$; again, the above can be done in $O(|S_i'|)$ time. Note that for ease of exposition, we assume $|S_i|\geq 2^{j(i)}\cdot \log n$ (otherwise, we can solve the problem by similar techniques with more tedious discussions). Finally, we increase $j(i)$ by one.

The above finishes the main algorithm. After it stops, let $R'=\cup_{i=1}^f S_i^{j(i)}$. Let $B$ be the set of $k$ highest planes in $F(I)$, i.e., $B$ is the answer to our top-$k$ query on $I$.
We have the following lemma.

\begin{lemma}\label{lem:120}
$R\subseteq R'$, $B\subseteq R'$, and $|R'|= O(\log^3 n+k)$.
\end{lemma}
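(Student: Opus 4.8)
The plan is to establish the three assertions in order, using throughout that $S_i^{j}$ is nondecreasing in $j$ (so $S_i^{j(i)-1}\subseteq S_i^{j(i)}$) and that the canonical sets in $F(I)$ are pairwise disjoint. Let $S_{i^*}$ denote the canonical set that supplies the plane removed by the final extract-max, so that when the main algorithm stops $R=\big(\bigcup_{l\ne i^*}S_l^{j(l)-1}\big)\cup S_{i^*}^{j(i^*)}$ and $r=|R|\ge k$.

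The inclusion $R\subseteq R'$ is immediate: every term on the right-hand side above is contained in some $S_l^{j(l)}$, and $R'=\bigcup_l S_l^{j(l)}$. For the size bound, let $r_{\mathrm{prev}}$ be the value of $r$ at the beginning of the final iteration; since the algorithm did not stop earlier, $r_{\mathrm{prev}}<k$, and by the maintained invariant $r_{\mathrm{prev}}=\sum_l|S_l^{j(l)-1}|$ with the final values of $j(l)$ (only $j(i^*)$ could change during the final iteration, and it does not, because we stop before incrementing it). For each $l$ we have $|S_l^{j(l)}|\le 2|S_l^{j(l)-1}|+\log n$ (equal to $2|S_l^{j(l)-1}|$ when $j(l)\ge 2$, and to $\log n$ when $j(l)=1$), so summing over the $f=O(\log^2 n)$ canonical sets gives $|R'|=\sum_l|S_l^{j(l)}|\le 2r_{\mathrm{prev}}+f\log n< 2k+O(\log^3 n)=O(\log^3 n+k)$.

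The inclusion $B\subseteq R'$ is the crux, and I would argue it by a threshold argument. The key observation is that the maximum key in $H$ is nonincreasing over iterations: extracting the lowest plane $p$ of the current $S_i'=S_i^{j(i)}$ and reinserting the lowest plane of the strictly larger set $S_i^{j(i)+1}$ cannot raise any key. Hence the extracted planes form a nonincreasing sequence $p_1\ge p_2\ge\cdots\ge p_T$ (with $p_T$ the one triggering the stop, drawn from $S_{i^*}$), and every plane still in $H$ after the algorithm halts is strictly below $p_T$ (using the assumed distinctness of the intersection heights). From this I would deduce: (i) every plane of $R$ is at least as high as $p_T$ — a plane of $S_{i^*}^{j(i^*)}$ because $p_T$ is its lowest plane, and a plane of $S_l^{j(l)-1}$ with $l\ne i^*$ because it is at least as high as $p_{t_l}$, the lowest plane of $S_l^{j(l)-1}$, which was extracted at the last iteration $t_l<T$ drawing from $S_l$; and (ii) every plane $b\notin R'$ is strictly below $p_T$ — indeed if $b\in S_m$ then $b\notin S_m^{j(m)}$ forces $b$ below the lowest plane of $S_m^{j(m)}$, which by a short case analysis (on whether $S_m$ was ever extracted) is either still in $H$ at the end, hence below $p_T$, or equals $p_T$ itself when $m=i^*$. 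Since $R\subseteq R'$, a plane $b\notin R'$ is not in $R$, so by (i) and (ii) it lies strictly below all $|R|=r\ge k$ planes of $R\subseteq F(I)$; therefore $b$ is not among the $k$ highest planes of $F(I)$, i.e.\ $b\notin B$, which proves $B\subseteq R'$.

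The routine ingredients are the nesting inclusion and the geometric-series size bound; the delicate part is the bookkeeping underlying (i) and (ii) — correctly pinning down, for each canonical set, the iteration at which its contribution was frozen into $R$ (versus left behind as the current heap entry) and aligning this with the nonincreasing extraction order. The simplifying assumptions that $|S_i|\ge 2^{j(i)}\log n$ and that all intersection heights are distinct are precisely what keep this argument clean; dropping them forces the more tedious analysis the authors set aside.
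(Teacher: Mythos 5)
Your proof is correct and follows essentially the same approach as the paper's: the same decomposition of $R$ and $R'$ by canonical set, the same doubling bound $|S_l^{j(l)}|\le 2|S_l^{j(l)-1}|+\log n$, and for $B\subseteq R'$ the same key facts — the extracted planes are nonincreasing, $p_T=\sigma^*$ is the lowest plane of $R$, and every plane outside $R'$ lies below some plane still in $H$ (or below $p_T$ itself when the set is $S_{i^*}$). The only difference is cosmetic: you argue the inclusion $B\subseteq R'$ by contrapositive against the single threshold $p_T$, whereas the paper argues directly about a plane $\sigma\in B\setminus R$, separately ruling out $i=a$ before comparing against $\sigma_i$; both hinge on the identical observations.
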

\begin{proof}
We first show $R\subseteq R'$.

Suppose the last extract-max operation on $H$ in the main algorithm removes a plane from the canonical set $S_{i}$ for $i=a$. Hence, the algorithm stops after we have $R=R\cup S_a^{j(a)}\setminus S_a^{j(a)-1}$. Thus, $S_a^{j(a)}\subseteq R$. Consider any other canonical set $S_i$ with $i\neq a$. According to our algorithm, it always holds that $S_i^{j(i)-1}\subseteq R$.
Therefore, we have the following:
$$R=S_a^{j(a)}\cup\bigcup_{1\leq i\leq f, i\neq a}S_i^{j(i)-1}\text{\ and \ } R'=\bigcup_{1\leq i\leq f}S_i^{j(i)}.$$

Since $S_i^{j(i)-1}\subseteq S_i^{j(i)}$ for any $i$, we obtain that $R\subseteq R'$.

Next, we show that $|R'|=O(\log^3 n+k)$.

Indeed, since the algorithm stops right after $r=r+|S_a^{j(a)}|-|S_a^{j(a)-1}|\geq k$ and $R=R\cup S_a^{j(a)}\setminus S_a^{j(a)-1}$, the original value of $r$ before the above increasing is less than $k$. In other words, $\sum_{1\leq i\leq f}|S_i^{j(i)-1}|<k$.
For each $S_i$ of $F(I)$, if $j(i)=1$, then $|S_i^{j(i)}|=\log n$ and $|S_i^{j(i)-1}|=0$; otherwise, $|S_i^{j(i)}|=2|S_i^{j(i)-1}|$. Therefore, we obtain $|R'|=\cup_{1\leq i\leq f}S_i^{j(i)}\leq 2\cdot \sum_{1\leq i\leq f}|S_i^{j(i)-1}| + f\cdot \log n = 2k+O(\log^3 n)=O(\log^3 n +k)$ because $f=|F(I)|=O(\log^2 n)$.

Finally, we prove $B\subseteq R'$.

Let $\sigma^*$ denote the plane removed by the last extract-max operation on $H$ in the main algorithm.
We claim that $\sigma^*$ is the lowest plane in $R$. We prove the claim below.

According to our algorithm, the planes removed by the extract-max operations on $H$ follow the order from high to low. Consider any plane $\sigma\in R$. To prove the claim, it is sufficient to show that $\sigma^*$ is not higher than $\sigma$. According to our algorithm, the first time $\sigma$ is added in $R$ must be due to an operation on $R$: $R=R\cup S_i^{j(i)}\setminus S_i^{j(i)-1}$ after an extract-max operation removes a plane $\sigma'$ from $H$ and $\sigma'$ is from a canonical set $S_i$. This implies that $\sigma\in S_i^{j(i)}\setminus S_i^{j(i)-1}$. According to our algorithm, $\sigma'$ is the lowest plane in the above $S_i^{j(i)}$, and thus, $\sigma'$ is not higher than $\sigma$. On the other hand, since $\sigma^*$ is the last plane removed by the extract-min operations, $\sigma^*$ is not higher than $\sigma'$. Therefore, $\sigma^*$ is not higher than $\sigma$, and the above claim is proved.

Consider any plane $\sigma\in B$. To show $B\subseteq R'$, it suffices to prove $\sigma\in R'$.
If $\sigma$ is in $R$, then since $R\subseteq R'$, $\sigma\in R'$ is true. Below we assume $\sigma\not\in R$, and thus $\sigma\neq \sigma^*$.

Note that $\sigma\in B$ implies that there are at most $k-1$ planes of $F(I)$ higher than $\sigma$. Since $r=|R|\geq k$ and $\sigma^*$ is the lowest plane in $R$, $\sigma$ must be higher than $\sigma^*$ since otherwise all planes in $R$ would be higher than $\sigma$, contradicting with that there are at most $k-1$ planes higher than $\sigma$.

Assume $\sigma$ is in a canonical set $S_i$ for some $i$.
Recall that $\sigma^*$ is from the canonical set $S_a$.
Note that all planes of $S_a$ higher than $\sigma^*$ are in $S_a^{j(a)}$.
By our definition of $R$, $S_a^j(a)\subseteq R$.
Since $\sigma$ is higher than $\sigma^*$ and $\sigma\not\in R$, we can obtain $i\neq a$.
According to our algorithm, after the algorithm stops, $H$ contains a plane $\sigma_i$ from $S_i$, and $\sigma_i$ is the lowest plane in $S_i^{j(i)}$.
Recall that, $S_i^{j(i)}\subseteq R'$. This implies that all planes of $S_i$ higher than $\sigma_i$ are in $R'$. Since $\sigma^*$ is removed by an extract-min operation and after the operation $\sigma_i$ is still in $H$, $\sigma^*$ must be higher than $\sigma_i$. Because $\sigma$ is higher than $\sigma^*$, $\sigma$ is higher than $\sigma_i$.

In summary, the above discussion obtains the following: $\sigma$ is in $S_i$; $\sigma$ is higher than $\sigma_i$; all planes of $S_i$ higher than $\sigma_i$ are in $R'$. Thus, we obtain that $\sigma$ is in $R'$.

Therefore, we conclude that $B\subseteq R'$, and the lemma follows.
\qed
\end{proof}

Based on Lemma \ref{lem:120}, if we have the set $R'$ explicitly, then we can compute $B$ in additional $O(\log^3 n+k)$ time by using the linear time selection algorithm \cite{ref:CLRS09}. However, the above main algorithm does not explicitly compute $R'$, but it has maintained $j(i)$ for each $S_i\in F(I)$. Since $R'=\cup_{1\leq i\leq f}S_i^{j(i)}$, we can compute $R'$ by using a $t$-highest plane query with $t=2^{j(i)-1}\cdot \log n$ on each canonical set $S_i$ of $F(I)$.

The following lemma gives the running time of our entire top-$k$ query algorithm.

\begin{lemma}
The time complexity of our top-$k$ query algorithm is $O(\log^3 n+k)$.
\end{lemma}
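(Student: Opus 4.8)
The plan is to account separately for the cost of the main algorithm, the cost of building the set $R'$, and the cost of the post-processing selection, and to show each is $O(\log^3 n + k)$. First I would bound the number of iterations of the main loop. Each iteration performs one ``extract-max'' on $H$, which has size $O(\log^2 n)$ throughout (one entry per canonical set of $F(I)$), so each heap operation costs $O(\log\log n)$. The key quantity to control is the total work spent on $t$-highest plane queries. Since for a fixed $S_i$ the value $j(i)$ only ever increases, and when it increases from $j-1$ to $j$ we spend $O(|S_i^{j}|) = O(2^{j-1}\log n)$ time, the geometric growth means the total query time charged to $S_i$ over the whole algorithm is $O(|S_i^{j(i)}|)$ — dominated by the last (largest) query. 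Summing over all $i$, this is $O(\sum_i |S_i^{j(i)}|) = O(|R'|)$, which by Lemma~\ref{lem:120} is $O(\log^3 n + k)$.

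Next I would handle the heap operations. The number of ``extract-max'' operations equals the number of iterations; each iteration either terminates the algorithm or strictly increases some $j(i)$, and after that iteration $|S_i^{j(i)-1}|$ planes have been added to (the conceptual set) $R$, i.e. $r$ has increased. Before termination $r < k$, and $r = \sum_i |S_i^{j(i)-1}|$; since each nonzero term is at least $\log n$ (when $j(i)=1$ the term is $0$, but that happens at most once per $S_i$, contributing at most $f = O(\log^2 n)$ iterations), the number of iterations with $j(i)\geq 2$ is $O(k/\log n)$. Hence the total number of iterations is $O(\log^2 n + k/\log n)$, and the total heap cost is $O((\log^2 n + k/\log n)\log\log n) = O(\log^3 n + k)$. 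The initialization phase builds $S_i^1$ for all $f$ canonical sets, costing $O(f\log n) = O(\log^3 n)$, and inserting $f$ elements into $H$ costs $O(f\log f) = O(\log^2 n\log\log n)$.

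Then I would account for the post-processing. After the main algorithm stops we know $j(i)$ for every $i$, so computing $R' = \bigcup_i S_i^{j(i)}$ takes one $t$-highest plane query per canonical set with $t = 2^{j(i)-1}\log n$, costing $O(\sum_i(\log n + |S_i^{j(i)}|)) = O(f\log n + |R'|) = O(\log^3 n + k)$. Finally, by Lemma~\ref{lem:120} we have $B \subseteq R'$ and $|R'| = O(\log^3 n + k)$, so one application of the linear-time selection algorithm~\cite{ref:CLRS09} on $R'$ (selecting the $k$-th highest and then scanning) extracts $B$ in $O(|R'|) = O(\log^3 n + k)$ time. Adding the three parts gives the claimed bound.

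The main obstacle is the amortized analysis of the $t$-highest plane queries: one must be careful that re-querying $S_i$ from scratch with a doubled value of $t$ (discarding the previous $S_i'$) does not blow up the cost, and the geometric-series argument that ties the total to $O(|R'|)$ is the crux. A secondary subtlety is the special case $|S_i| < 2^{j(i)}\log n$, which the excerpt explicitly defers; I would note it is handled by capping $t$ at $|S_i|$ and observing this only decreases costs while leaving all the set-containment arguments of Lemma~\ref{lem:120} intact.
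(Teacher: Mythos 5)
Your proof is correct and follows essentially the same structure as the paper's: bound the $t$-highest-plane query cost by the geometric sum $\sum_j|S_i^j|=O(|S_i^{j(i)}|)$ summed to $O(|R'|)$, bound the heap cost via the observation that each extraction increases $r$ by at least $\log n$, and finish with a linear-time selection on $R'$ using Lemma~\ref{lem:120}. The only differences are minor bookkeeping (your heap-operation count is slightly looser, $O(\log^2 n + k/\log n)$ versus the paper's cleaner observation that every extract-max adds at least $\log n$ to $r$, giving $O(k/\log n)$ directly, but both yield $O(\log^3 n + k)$), plus your welcome note on handling the $|S_i|<2^{j(i)}\log n$ edge case that the paper defers.
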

\begin{proof}
We first analyze the main algorithm, whose running time mainly depends on the time of the $t$-highest plane queries and the time of the operations on the heap $H$. We first give a bound on the time of the $t$-highest plane queries, with the help of Lemma \ref{lem:120}.

Note that after each $t$-highest plane query in the main algorithm, we always find the lowest plane in the output planes of the query, whose time is only linear to the number of output planes and is upper bounded by the above query time. In the following, we focus on analyzing the time of the $t$-highest plane queries.

Consider any canonical set $S_i\in F(I)$. According to our algorithm, for each $j$ with $1\leq j\leq j(i)$, the main algorithm performs a $t$-highest plane query on $S_i$ with $t=2^{j-1}\cdot \log n$ to compute $S_i^j$, which takes $O(|S_i^j|)$ time (we ignore the $\log n$ factor in the query time because $\log n\leq |S_i^j|$). Hence, the total time of the $t$-highest plane queries on $S_i$ in the main algorithm is $O(\sum_{j=1}^{j(i)}|S_i^j|)$. Note that
$$\sum_{j=1}^{j(i)}|S_i^j|=\log n\cdot \sum_{j=1}^{j(i)}2^{j-1}\leq \log n\cdot 2^{j(i)}=2\cdot |S_i^{j(i)}|.$$

Recall that  $|R'|=\sum_{1\leq i\leq f}|S_i^{j(i)}|$. Hence, the total time on the $t$-highest plane queries in the entire main algorithm is $O(|R'|)$, which is $O(\log^3 n + k)$ by Lemma \ref{lem:120}.

Next, we analyze the time we spent on the heap $H$. Recall that the size of $H$ is $O(\log^2 n)$. Initially, we build $H$ on $O(\log^2 n)$ planes, which can be done in $O(\log^2 n)$ time.
Later in the algorithm, the operations on $H$ include the extract-max and insertion operations.
We need to figure out how many operations were performed on $H$ in the main algorithm.

Consider any extract-max operation on $H$, and suppose the removed plane is from set $S_i$. Then, after the operation, we have $R=R\cup S_i^{j(i)}\setminus S_i^{j(i)-1}$, and  since $|S_i^{j(i)}|-|S_i^{j(i)-1}|\geq \log n$, the above increases $R$ by at least $\log n$ planes. After that, there is at most one insertion operation on $H$. Since the main step stops once $|R|\geq k$, the total number of extract-max operations is at most $\frac{k}{\log n}$. The number of insertion operations is also at most $\frac{k}{\log n}$. Since $|H|=O(\log^2 n)$, each operation on $H$ takes $O(\log\log n)$ time. The total time on $H$ is $O(\log^2 n+\frac{k}{\log n}\cdot \log\log n)=O(\log^2 n+ k)$.

Therefore, the total time of the main algorithm is $O(\log^3 n + k)$.

Finally, we analyze the running time of the post-processing step, which computes $R'$ and finds the highest $k$ planes in $R'$. Computing $R'$ is done by doing a $t$-highest plane query with $t=j(i)$ on each set $S_i$. Therefore, as above, the total time is at most $|R'|$, which is $O(\log^3 n + k)$. Finding the highest $k$ planes in $R'$ takes $O(|R'|)$ time by using the linear time selection algorithm \cite{ref:CLRS09}.

Thus, the total time of our top-$k$ query algorithm is $O(\log^3 n + k)$. \qed
\end{proof}

The above discussion leads to the following theorem.

\begin{theorem}\label{theo:nonuniformtopkbounded}
We can build in $O(n\log^3 n)$ expected time an $O(n\log^2 n)$ size
data structure on $P$ that can answer each top-$k$ query with a bounded query
interval in $O(\log^3 n+ k)$ time.
\end{theorem}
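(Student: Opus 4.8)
The plan is to reuse the tree $\calT$ of Agarwal \emph{et al.}~\cite{ref:AgarwalIn09} recalled above, but to equip each of its canonical sets with a more flexible secondary structure. For every canonical set $S$ of $\calT$ I would build, in place of the halfspace range-reporting structure, the \emph{$t$-highest plane} data structure of Afshani and Chan~\cite{ref:AfshaniOp09} on the planes of $S$; this costs $O(|S|)$ space and $O(|S|\log|S|)$ expected time, and for any line $L$ parallel to the $z$-axis and any integer $t$ it returns the $t$ highest planes of $S$ along $L$ in $O(\log|S|+t)$ time. Since the sum of $|S|$ over all canonical sets of $\calT$ is $O(n\log^2 n)$, the whole structure will occupy $O(n\log^2 n)$ space and be built in $O(n\log^3 n)$ expected time, matching the claimed preprocessing bounds.

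For a query $I=[x_l,x_r]$ and integer $k$, I would first descend $\calT$ to retrieve the family $F(I)$ of $f=O(\log^2 n)$ pairwise-disjoint canonical sets $S_1,\dots,S_f$, each contributing exactly one plane per uncertain point, with the $z$-coordinate of that plane on $L(I)$ equal to the corresponding $I$-probability; this step takes $O(\log^2 n)$ time. Then I run the two-phase algorithm described above. In the \textbf{main phase} I maintain, for each $S_i$, a geometrically growing prefix $S_i'=S_i^{j(i)}$ of its highest planes along $L(I)$ (of size $2^{j(i)-1}\log n$), together with a max-heap $H$ of size $O(\log^2 n)$ holding the lowest plane of each current prefix; I repeatedly extract the maximum of $H$, say it lies in $S_i$, ``confirm'' the planes of $S_i^{j(i)}$ (advancing a counter $r$ that tracks $|R|$ without ever storing $R$), double the prefix of $S_i$ by one further $t$-highest-plane query, push its new lowest plane into $H$, and stop as soon as $r\ge k$. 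In the \textbf{post-processing phase} I materialise $R'=\bigcup_{i} S_i^{j(i)}$ explicitly with one $t$-highest-plane query per canonical set and then extract its $k$ highest planes by linear-time selection~\cite{ref:CLRS09}; by the defining property of $F(I)$ these planes correspond to the top-$k$ points of $P$, which is the required answer.

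Correctness will reduce to Lemma~\ref{lem:120}: the set $B$ of the $k$ highest planes of $F(I)$ satisfies $B\subseteq R'$, so selecting the top $k$ from $R'$ is correct, and $|R'|=O(\log^3 n+k)$, so the post-processing is affordable. For the running time I would argue that the $t$-highest-plane queries performed on a fixed $S_i$ during the main phase have sizes $\log n,2\log n,\dots,2^{j(i)-1}\log n$, which sum geometrically to $O(|S_i^{j(i)}|)$; summed over all $i$ this is $O(|R'|)=O(\log^3 n+k)$. The heap has size $O(\log^2 n)$, and since every extract-max confirms at least $\log n$ fresh planes toward the threshold $k$, there are only $O(k/\log n)$ heap operations, contributing $O(\log^2 n+(k/\log n)\log\log n)=O(\log^2 n+k)$; the post-processing costs another $O(|R'|)$. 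Hence each query will run in $O(\log^3 n+k)$ time, as claimed.

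The step I expect to be the crux is Lemma~\ref{lem:120}, namely showing that the early stopping rule ``halt once $r\ge k$'' never discards a plane of $B$. The argument I would give: the plane $\sigma^*$ removed by the last extract-max is the lowest confirmed plane (extractions proceed top-down, and each confirmation adds a whole prefix whose lowest element was exactly the extracted plane); every plane of $B$ is strictly higher than $\sigma^*$ because $|R|\ge k$; and for a plane $\sigma\in B\setminus R$, the representative of $\sigma$'s canonical set $S_i$ still sitting in $H$ after the halt is the lowest element of the fully materialised prefix $S_i^{j(i)}\subseteq R'$, hence lies below $\sigma^*$ and therefore below $\sigma$, which forces $\sigma\in S_i^{j(i)}\subseteq R'$. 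Everything else --- the space and preprocessing accounting and the geometric-sum bound on the query time --- is routine once the $t$-highest-plane primitive of~\cite{ref:AfshaniOp09} is in hand.
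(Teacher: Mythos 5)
Your proposal matches the paper's proof essentially step for step: the same replacement of the halfspace range-reporting structures on the canonical sets of $\calT$ by the $t$-highest-plane structures of Afshani and Chan, the same two-phase query algorithm with geometrically doubling prefixes $S_i^{j(i)}$ and a max-heap over the prefix minima, the same early-stopping rule once $r\ge k$, and the same correctness argument via Lemma~\ref{lem:120} (including the observation that $\sigma^*$ is the lowest confirmed plane and that a stray $\sigma\in B\setminus R$ must be caught in the still-held prefix of its own canonical set). The running-time accounting (geometric sums for the $t$-highest-plane queries, $O(k/\log n)$ heap operations at $O(\log\log n)$ each, linear-time selection in post-processing) is likewise the same, so this is a correct reconstruction of the paper's argument rather than an alternative route.
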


Note that the planes reported by our top-$k$ query algorithm are not in any sorted order.

\section{Conclusions}
\label{sec:conclusions}

In this paper we present a number of data structures for answering a variety of range queries over
uncertain data in one dimensional space. In general, our data structures have linear or nearly linear sizes and can support efficient queries. While it would be interesting to develop better solutions,
an interesting but challenging open problem is whether we can generalize
our techniques to solve the corresponding problems
in higher dimensions, for which only heuristic results have been proposed
\cite{ref:TaoIn05,ref:TaoRa07}.


\section*{Acknowledgments}


The authors would like to thank Sariel Har-Peled for several insightful comments, and for suggesting us the canonical set idea (somewhat similar to that in \cite{ref:AgarwalIn09}), which leads us to the solutions for the histogram bounded case of the problem.



\bibliographystyle{plain}

\begin{thebibliography}{10}

\bibitem{ref:AbdullahRa13}
A.~Abdullah, S.~Daruki, and J.M. Phillips.
\newblock Range counting coresets for uncertain data.
\newblock In {\em the 29th Symposium on Computational Geometry}, pages
  223--232, 2013.

\bibitem{ref:AfshaniOp09}
P.~Afshani and T.M. Chan.
\newblock Optimal halfspace range reporting in three dimensions.
\newblock In {\em Proc. of the 20th Annual ACM-SIAM Symposium on Discrete
  Algorithms (SODA)}, pages 180--186, 2009.

\bibitem{ref:AgarwalIn09}
P.K. Agarwal, S.-W. Cheng, Y.~Tao, and K.~Yi.
\newblock Indexing uncertain data.
\newblock In {\em Proc. of the 28th Symposium on Principles of Database
  Systems}, pages 137--146, 2009.

\bibitem{ref:AgarwalNe12}
P.K. Agarwal, A.~Efrat, S.~Sankararaman, and W.~Zhang.
\newblock Nearest-neighbor searching under uncertainty.
\newblock In {\em Proc. of the 31st Symposium on Principles of Database Systems
  (PODS)}, pages 225--236, 2012.

\bibitem{ref:AgarwalRe90}
P.K. Agarwal and M.~Sharir.
\newblock Red-blue intersection detection algorithms, with applications to
  motion planning and collision detection.
\newblock {\em SIAM Journal on Computing}, 19:297--321, 1990.

\bibitem{ref:AgrawalTr06}
P.~Agrawal, O.~Benjelloun, A.~Das Sarma, C.~Hayworth, S.~Nabar, T.~Sugihara,
  and J.~Widom.
\newblock Trio: a system for data, uncertainty, and lineage.
\newblock In {\em Proc. of the 32nd International Conference on Very Large Data
  Bases}, pages 1151--1154, 2006.

\bibitem{ref:BeskalesEf08}
G.~Beskales, M.A. Soliman, and I.F. IIyas.
\newblock Efficient search for the top-k probable nearest neighbors in
  uncertain databases.
\newblock In {\em Proc. of the VLDB Endowment}, pages 326--339, 2008.

\bibitem{ref:ChazelleFi86}
B.~Chazelle.
\newblock Filtering search: A new approach to query-answering.
\newblock {\em SIAM Journal on Computing}, 15(3):703--724, 1986.

\bibitem{ref:ChazelleFr86}
B.~Chazelle and L.J. Guibas.
\newblock Fractional cascading: {I. A} data structuring technique.
\newblock {\em Algorithmica}, 1(1):133--162, 1986.

\bibitem{ref:ChazelleFr862}
B.~Chazelle and L.J. Guibas.
\newblock Fractional cascading: {II. Applications}.
\newblock {\em Algorithmica}, 1(1):163--191, 1986.

\bibitem{ref:ChazelleTh85}
B.~Chazelle, L.J. Guibas, and D.T. Lee.
\newblock The power of geometric duality.
\newblock {\em BIT}, 25:76--90, 1985.

\bibitem{icde08-probnn}
R.~Cheng, J.~Chen, M.~Mokbel, and C.~Chow.
\newblock Probabilistic verifiers: Evaluating constrained nearest-neighbor
  queries over uncertain data.
\newblock In {\em IEEE International Conference on Data Engineering (ICDE)},
  pages 973--982, 2008.

\bibitem{cheng:sigmod03}
R.~Cheng, D.~Kalashnikov, and S.~Prabhakar.
\newblock Evaluating probabilistic queries over imprecise data.
\newblock In {\em ACM SIGMOD International Conference on Management of Data
  (SIGMOD)}, pages 551--562, 2003.

\bibitem{ref:ChengEf04}
R.~Cheng, Y.~Xia, S.~Prabhakar, R.~Shah, and J.S. Vitter.
\newblock Efficient indexing methods for probabilistic threshold queries over
  uncertain data.
\newblock In {\em Proc. of the 30th International Conference on Very Large Data
  Bases}, pages 876--887, 2004.

\bibitem{ref:CLRS09}
T.~Cormen, C.~Leiserson, R.~Rivest, and C.~Stein.
\newblock {\em Introduction to Algorithms}.
\newblock MIT Press, 3nd edition, 2009.

\bibitem{Cormode09}
G.~Cormode, F.~Li, and K.~Yi.
\newblock Semantics of ranking queries for probabilistic data and expected
  ranks.
\newblock In {\em IEEE International Conference on Data Engineering (ICDE)},
  pages 305--316, 2009.

\bibitem{cormode2008approximation}
G.~Cormode and A.~McGregor.
\newblock Approximation algorithms for clustering uncertain data.
\newblock In {\em ACM SIGMOD-SIGACT-SIGART Symposium on Principles of Database
  Systems (PODS)}, pages 191--200. ACM, 2008.

\bibitem{dalvi:vldbj06}
N.~Dalvi and D.~Suciu.
\newblock Efficient query evaluation on probabilistic databases.
\newblock {\em The VLDB Journal}, pages 523--544, 2006.

\bibitem{dalvi2012dichotomy}
N.~Dalvi and D.~Suciu.
\newblock The dichotomy of probabilistic inference for unions of conjunctive
  queries.
\newblock {\em Journal of the ACM}, 59(6):30, 2012.

\bibitem{ref:deBergCo08}
M.~de~Berg, O.~Cheong, M.~van Kreveld, and M.~Overmars.
\newblock {\em Computational Geometry --- Algorithms and Applications}.
\newblock Springer-Verlag, Berlin, 3rd edition, 2008.

\bibitem{ref:DriscollMa89}
J.~Driscoll, N.~Sarnak, D.~Sleator, and R.E. Tarjan.
\newblock Making data structures persistent.
\newblock {\em Journal of Computer and System Sciences}, 38(1):86--124, 1989.

\bibitem{ref:EdelsbrunnerOp86}
H.~Edelsbrunner, L.~Guibas, and J.~Stolfi.
\newblock Optimal point location in a monotone subdivision.
\newblock {\em SIAM Journal on Computing}, 15(2):317--340, 1986.

\bibitem{ref:FredericksonTh82}
G.~Frederickson and D.~Johnson.
\newblock The complexity of selection and ranking in {$X+Y$} and matrices with
  sorted columns.
\newblock {\em Journal of Computer and System Sciences}, 24(2):197--208, 1982.

\bibitem{guha2009exceeding}
S.~Guha and K.~Munagala.
\newblock Exceeding expectations and clustering uncertain data.
\newblock In {\em ACM SIGMOD-SIGACT-SIGART Symposium on Principles of Database
  Systems (PODS)}, pages 269--278. ACM, 2009.

\bibitem{jampani2008mcdb}
R.~Jampani, F.~Xu, M.~Wu, L.L. Perez, C.~Jermaine, and P.J. Haas.
\newblock {MCDB: a monte carlo approach to managing uncertain data}.
\newblock In {\em ACM SIGMOD International Conference on Management of Data
  (SIGMOD)}, pages 687--700. ACM, 2008.

\bibitem{DBLP:conf/pods/JayramMMV07}
T.~S. Jayram, A.~McGregor, S.~Muthukrishnan, and E.~Vee.
\newblock Estimating statistical aggregates on probabilistic data streams.
\newblock In {\em ACM SIGMOD-SIGACT-SIGART Symposium on Principles of Database
  Systems (PODS)}, pages 243--252, 2007.

\bibitem{ref:KirkpatrickOp83}
D.~Kirkpatrick.
\newblock Optimal search in planar subdivisions.
\newblock {\em SIAM Journal on Computing}, 12(1):28--35, 1983.

\bibitem{ref:KnightEf11}
A.~Knight, Q.~Yu, and M.~Rege.
\newblock Efficient range query processing on uncertain data.
\newblock In {\em Proce of IEEE International Conference on Information Reuse
  and Integration}, pages 263--268, 2011.

\bibitem{koch2009maybms}
C.~Koch.
\newblock {MAYBMS:} a system for managing large probabilistic databases.
\newblock {\em Managing and Mining Uncertain Data}, pages 149--183, 2009.

\bibitem{li2010ranking}
J.~Li and A.~Deshpande.
\newblock Ranking continuous probabilistic datasets.
\newblock {\em Proceedings of the VLDB Endowment}, 3(1):638--649.

\bibitem{li2011unified}
J.~Li, B.~Saha, and A.~Deshpande.
\newblock A unified approach to ranking in probabilistic databases.
\newblock {\em The VLDB Journal}, 20(2):249--275, 2011.

\bibitem{ref:LiRa14}
J.~Li and H.~Wang.
\newblock Range queries on uncertain data.
\newblock In {\em Proc. of the 25th International Symposium on Algorithms and
  Computation (ISAAC)}, pages 326--337, 2014.

\bibitem{ref:LjosaAP07}
V.~Ljosa and A.~K. Singh.
\newblock {APLA: Indexing} arbitrary probability distributions.
\newblock In {\em IEEE International Conference on Data Engineering (ICDE)},
  pages 946--955, 2007.

\bibitem{ref:MitchellL192}
J.S.B. Mitchell.
\newblock {$L_1$} shortest paths among polygonal obstacles in the plane.
\newblock {\em Algorithmica}, 8(1):55--88, 1992.

\bibitem{potamias2010k}
M.~Potamias, F.~Bonchi, A.~Gionis, and G.~Kollios.
\newblock {$k$-Nearest} neighbors in uncertain graphs.
\newblock {\em Proceedings of the VLDB Endowment}, 3(1):997--1008, 2010.

\bibitem{ref:QiTh10}
Y.~Qi, R.~Jain, S.~Singh, and S.~Prabhakar.
\newblock Threshold query optimization for uncertain data.
\newblock In {\em ACM SIGMOD International Conference on Management of Data
  (SIGMOD)}, pages 315--326, 2010.

\bibitem{re:icde07}
C.~Re, N.~Dalvi, and D.~Suciu.
\newblock Efficient top-k query evaluation on probabilistic data.
\newblock In {\em IEEE International Conference on Data Engineering (ICDE)},
  pages 886--895, 2007.

\bibitem{conf/dbpl/re07}
C.~R\'{e} and D.~Suciu.
\newblock Efficient evaluation of {HAVING} queries on a probabilistic database.
\newblock In {\em Proc. of the 11th International Conference on Database
  Programming Languages}, pages 186--200, 2007.

\bibitem{sen:vldbj09}
P.~Sen, A.~Deshpande, and L.~Getoor.
\newblock Prdb: managing and exploiting rich correlations in probabilistic
  databases.
\newblock {\em The VLDB Journal}, 18(5):1065--1090, 2009.

\bibitem{ref:SinghIn07}
S.~Singh, C.~Mayfield, S.~Prabhakar, R.~Shah, and S.E. Hambrusch.
\newblock Indexing uncertain categorical data.
\newblock In {\em IEEE International Conference on Data Engineering (ICDE)},
  pages 616--625, 2007.

\bibitem{soliman:icde07}
M.A. Soliman, I.F. Ilyas, and C.C. Chang.
\newblock {Top-k query processing in uncertain databases}.
\newblock In {\em IEEE International Conference on Data Engineering (ICDE)},
  pages 896--905. IEEE, 2007.

\bibitem{suciu2011probabilistic}
D.~Suciu, D.~Olteanu, C.~R{\'e}, and C.~Koch.
\newblock Probabilistic databases.
\newblock {\em Synthesis Lectures on Data Management}, 3(2):1--180, 2011.

\bibitem{ref:TaoIn05}
Y.~Tao, R.~Cheng, X.~Xiao, W.K. Ngai, B.~Kao, and S.~Prabhakar.
\newblock Indexing multi-dimensional uncertain data with arbitrary probability
  density functions.
\newblock In {\em Proc. of the 31st International Conference on Very Large Data
  Bases (VLDB)}, pages 922--933, 2005.

\bibitem{ref:TaoRa07}
Y.~Tao, X.~Xiao, and R.~Cheng.
\newblock Range search on multidimensional uncertain data.
\newblock {\em ACM Transactions on Database Systems (TODS)}, 32, 2007.

\bibitem{ref:YiuEf09}
M.L. Yiu, N.~Mamoulis, X.~Dai, Y.~Tao, and M.~Vaitis.
\newblock Efficient evaluation of probabilistic advanced spatial queries on
  existentially uncertain data.
\newblock {\em IEEE Transactions of Knowledge Data Engineering (TKDE)},
  21:108--122, 2009.

\end{thebibliography}

%



\end{document}